\newtheorem{proposition}{Proposition}
\newtheorem{corollary}{Corollary}
\newtheorem{remark}{Remark}
\newtheorem{theorem}{Theorem}
\newtheorem{definition}{Definition}
\newlength{\intwidth}
\def\XXint#1#2#3{{\setbox0=\hbox{$#1{#2#3}{\int}$}
\vcenter{\hbox{$#2#3$}}\kern-.5\wd0}}
\newcommand{\cM}{\ensuremath{{\mathcal{M}}}}
\newcommand{\cB}{\ensuremath{{\mathcal{B}}}}
\newcommand{\bpsi}{\ensuremath{{\bm{\psi}}}}
\newcommand{\cN}{\ensuremath{{\mathcal{L}}}}
\newcommand{\cQ}{\ensuremath{{\mathcal{Q}}}}
\newcommand{\E}{\ensuremath{{\mathrm{E}}}}
\newcommand{\var}{\ensuremath{{\mathrm{var}}}}
\newcommand{\cov}{\ensuremath{{\mathrm{cov}}}}
\newcommand{\rel}{\ensuremath{{\mathrm{rel}}}}
\title{Learning the Ambiguity Surface}
\author{Sofia~C.~Olhede
\thanks{The work of S.~C.~Olhede was supported by the EPSRC.}
\thanks{S.~C.~Olhede is with the Department of Statistical Science, University College London, Gower Street,
London WC1 E6BT, UK.}}
\begin{document}

\maketitle

\begin{abstract}
This paper introduces the class of ambiguity sparse processes, containing subsets of popular nonstationary time series such as locally stationary, cyclostationary and uniformly modulated processes. The class also contains aggregations of the aforementioned processes. Ambiguity sparse processes are defined for a fixed sampling regime, in terms of a given number of sample points and a fixed sampling period. The framework naturally allows us to treat heterogeneously nonstationary processes, and to develop methodology for processes that have growing but controlled complexity with increasing sample sizes and shrinking sampling periods. 

Expressions for the moments of the sample ambiguity function are derived for ambiguity sparse processes. These properties inspire an Empirical Bayes shrinkage estimation procedure. The representation of the covariance structure of the process in terms of a time-frequency representation is separated from the estimation of these second order properties. The estimated ambiguity function is converted into an estimate of the time-varying moments of the process, and from these moments, any bilinear representation can be calculated with reduced estimation risk. Any of these representations can be used to understand the time-varying spectral content of the signal. The choice of representation is discussed.  Parameters of the shrinkage procedure quantify the performance of the proposed estimation.
\end{abstract}
{\bf Keywords:} Ambiguity function, harmonizable process, nonstationary time series, time-varying spectrum.
\section{Introduction \label{sec:intro}}
Assumptions on stationarity dominate time series analysis, and are a necessary requirement for most of traditional time series methodology to work. Unfortunately many practical inference problems violate these assumptions. For this reason the last fifty years of time series analysis have seen considerable developments in generalizing methodology to encompass processes violating stationarity constraints, see for example the class of semi-stationary processes introduced by
\cite{Priestley65}, the class of locally stationary processes due to \cite{Silverman1957}, methods of estimating local spectral summaries
in a nonparametric framework introduced by 
\cite{Dahlhaus1997,Dahlhaus2000} and \cite{Ombao2001}.
Time domain notions of local stationarity are also discussed by 
{\em e.g.} \cite{Mallat1998} {\em etc}; and we mention wavelet based models, see {\em e.g.} \cite{Nason}. These developments have significantly advanced the tool-kit available for data analysis and modelling, but when using them we are still strongly limited in terms of permitted ways of modelling the covariance of a process so that it can be estimated stably. Also while within a single framework there is a representation of the covariance function as a ``time-varying spectrum,'' this is most usually poorly defined between  model classes and does not yield a model independent definition.

In tandem to these developments in statistics, in signal processing the theory of bilinear representations of covariance structures has reached maturity, see {\em e.g.} Flandrin \cite[Ch.~3]{Flandrin99} or \cite{cohen}. Modern understanding of time-varying spectral representations is that no particular family of representation is uniformly optimal across different model-classes in terms of representing structure, and 
there has been less focus on
producing estimators. 
For example two broad classes of nonstationary time series are those of Harmonizable processes \cite{Loeve1963} and Karhunen processes \cite{Karhunen1947}; however the practical utility of these classes is limited, as they have no general inference methods associated with them.
Of particular note in modern development is the class of {\em underspread processes} introduced by Matz, Hlawatsch and Kozek (see {\em e.g.} \cite{Matz1997,Matz2006}), with associated inference methods, {\em e.g.} \cite{Jachan}. Underspread processes are those which have been sampled sufficiently rapidly to be considered smoothly varying with the given sampling, and these are strongly linked to the class of semi-stationary processes, that are constrained in terms of properties of a Fourier transform of the local spectrum called the Lo\`eve spectrum, see {\em e.g.} \cite{Flandrin99}.
All of the existing inference methods for these types of processes are based on a notion of local smoothing, and removing medium and high frequency structure. This violates how modern data is normally acquired -- in contrast to an underspread process we expect the complexity of the data to {\em grow} as we {\em increase} the sampling size $N$ and {\em decrease} the sampling period $\Delta t$, even if we do expect some underlying simplicity in the representation of this process.
To model this kind of process
we introduce a new class of nonstationary processes called {\em ambiguity sparse processes}, whose Ambiguity Functions (AF) are mainly supported over a set of regions. The AF is a Fourier Transform of the autocovariance sequence transforming in the global time index rather than in the time-lag. Processes that are ambiguity sparse, if sampled more finely than the available data, {\em are} also underspread; however we expect data to grow in complexity with the sampling. 

After defining the AF from the time-varying covariance of the process, we derive the statistical properties of its sample version for an ambiguity sparse process, which inspires an Empirical Bayes' method of estimation. We discuss the properties of this estimation method, and describe how the estimate of the AF can be converted into an estimate of the basic object of the representation of a nonstationary process, namely the covariance sequence of the observed process. This covariance sequence in turn can be represented in terms of any local spectral estimator in the bilinear class, see {\em e.g.} Flandrin \cite[Ch.~3]{Flandrin99}. We discuss the interpretation of our estimators, as ``smoothing'' the sample covariance sequence adaptively. The performance of the estimation can be quantified in terms of parameters in the Empirical Bayes procedure. An additional benefit of ambiguity sparse processes is that it is very easy to characterise when sums of the processes, or indeed linear filtered versions of such processes, remain in the class of ambiguity sparse processes. 

We illustrate the properties of the proposed estimation procedures on a number of simulated and real data examples. We discuss different choices of time-varying spectral representation, and two different methods of regularizing the estimation procedure to ensure that the estimated autocovariance sequence is positive definite, and thus corresponds to a valid estimator of covariance.

\section{The Four Faces of Time-Frequency}\label{fourfaces}
\subsection{Time-Time Representation}
We assume a single real-valued time series $\{X_n\}$ is under analysis. This series is assumed to be mean-zero, or it is assumed that the true mean has already been estimated and removed. 
Furthermore we assume $\{X_n\}$ is a Gaussian process. Because we have made few assumptions on the smoothness or structure of the covariance sequence of $\{X_n\}$, many distributions could be well approximated by this mixture of Gaussians. $\{X_n\}$ are the samples of a continuous time process $\{X(t)\}$ at time points
$
t_n=n \Delta t,\quad {\mathrm{with}}\quad n=0,\dots, N -1,
$
and $\Delta t>0$ is the {\em sampling period}. Because $\{X_n\}$ is a Gaussian process and we know the first moment is zero to make inferences we need to model the second order structure or 
\begin{equation}
\cM_{\tau}(t_n)=\E\left\{X_n X_{n-\tau}\right\},\quad \tau=-N+n+1,\dots n,
\,n=0,\dots,N-1.\end{equation}
Writing down this definition is trivial, however finding processes for which methods can be designed such that
$\{\cM_{\tau}(t_n)\}$ can be estimated consistently from the available data is a substantially harder problem. If a process is stationary then $\cM_{\tau}(t_n)=\widetilde{\cM}_{\tau}$ for all $t_n$ and additionally the sequence $\{\widetilde{\cM}_{\tau}\}$ has elements that are finite for all $\tau$, where the elements can be estimated stably either using method of moments, or a parametric approximation to the full covariance, such as using a truncated version of the Wold decomposition, {\em e.g.} a moving average model. 
We refer to $t_n\in{\mathbb{R}}$, with $n \in{\mathbb{Z}}$ as the {\em global} time index, and $\tau\in{\mathbb{Z}}$ as the {\em local} time index, or {\em time-lag}. 

An important modelling tool for stationary time series is the spectral density function (sdf) defined when the integrated spectrum is absolutely continuous \cite{Adler} if $\{X_n=X(t_n)\}$ has been sampled sufficiently finely in time by \cite[p.~196]{PercivalWalden1993},
\begin{eqnarray}
\widetilde{S}(f)=\Delta t\sum_{\tau=-\infty}^{\infty} \widetilde{\cM}_{\tau} e^{-2i\pi f \tau \Delta t},\quad {\mathrm{for}}\; -\frac{1}{2\Delta t}\le f <\frac{1}{2\Delta t},
\end{eqnarray}
and substantial efforts have also gone into defining transforms of $\{\cM_{\tau}(t_n)\}_\tau$ to an {\em interpretable} time-varying spectral representation, see {\em e.g.} \cite{loynes,Priestley65}, \cite[Ch.~6]{cohen} or \cite[Ch.~2.3, 2.4, 3.1, 3.3]{Flandrin99}. Issues arise on two different counts; firstly it is very hard to define a suitable time-varying spectral density definition from a {\em known} and given deterministic $\cM_{\tau}(t_n)$, secondly designing estimation methods with suitable properties of this object is even harder (see some early discussion in \cite{martinflandrin1985}).
 
It is well known that defining time-varying spectra from the covariance structure of real-valued signals is problematic. Interference from negative frequencies can cause non-interpretable terms to appear in any bilinear representation \cite{Jeong92}. Thus to improve the mathematical properties of representations of the second order structure it is common practice to calculate the representation of the analytic signal of $\{X_n\}$. 
We define the analytic signal by (with the assumption that $\Delta t$ has been sampled finely to make out the  important characteristics of the series)
\begin{eqnarray}
Z_n &=& 2\int_0^{\infty} dZ_X(f)e^{2i\pi ft_n}\;df=X_n+i Y_n,\quad Y_n={\cal H}\left\{X_n\right\},
\label{complex}
\end{eqnarray}
where $Y_n$ is defined by this equation, and ${\cal H}\left\{\cdot\right\}$ is the Hilbert transform. For a finite sample with fixed $\Delta t$ and $N$
we approximate ${\cal H}\left\{\cdot\right\}$ using the discrete Hilbert transform. 
$\{Z_n\}$, or the {\em analytic signal}, can exhibit no interference between negative and positive frequencies.

It may be the case that the positive frequencies are still correlated with their conjugate, see \cite{Schreier}, and to complete our second order description we do not only need to model and estimate the autocovariance of $\{Z_n\}$ defined by (recalling $\E\left\{Z_n\right\}=0$)
\begin{eqnarray}
\label{moments_in_time}
M_{\tau}(t_n)=\E\left\{Z_n Z_{n-\tau}^*\right\},
\end{eqnarray}
but also we need to model $R_{\tau}(t_n)=\E\left\{Z_n Z_{n-\tau}\right\}$ (the so-called {\em relation sequence} \cite[p.~55]{Schreier}). Discussing improper stochastic processes ({\em e.g.} those processes for which $R_\tau(t_n)\neq 0$) is outside the scope of this paper and we note in passing that the methods we shall propose for the covariance sequence can also be straightforwardly extended to estimating the relation sequence.
We do have to pay a price for analyzing $\{Z_n\}$ rather than $\{X_n\}$: $\{Y_n\}$ as defined in \eqref{complex}
is in general a ``smeared out'' version of $\{X_n\}$, and so $\cM_{\tau}(t_n)\neq M_{\tau}(t_n)$. 

\subsection{Time-Frequency Representation}
A fundamental problem with defining a time-varying spectrum from $\{M_{\tau}(t_n)\}$ is instability \cite{Matz1997}.
We could in theory shift the sequence in time so that we estimate 
$M_{\tau}^{(\alpha)}(t_n)\equiv M_{\tau}\left(t_{n+\left(\frac{1}{2}-\alpha\right) \tau}\right)$, and it would be equally reasonable to call this object the ``local time-moment'' of the process. \footnote{There is a problem in calculating this object if $\left(\pm\frac{1}{2}-\alpha\right)\tau \notin {\mathbb{Z}}$, that can either be solved using interpolation or by only using the moments corresponding to integer valued times.} 
Furthermore in general $M_{\tau}^{(\alpha_1)}(t_n)\neq M_{\tau}^{(\alpha_2)}(t_n)$ for $\alpha_1\neq \alpha_2$.
We could also subsequently calculate Fourier transforms in $\tau$ for any given choice of $\alpha\in\left[-\frac{1}{2},\frac{1}{2}\right],$ see \cite{Matz1997}, to describe the time-varying spectral content of the process,
with the global time shifted to
$t_n^{\alpha}=\left(n+\left(\frac{1}{2}-\alpha\right)\tau\right)\Delta t.$ We could also define a local spectral representation by {\em any} shift $\alpha$ centering the moments in time
\begin{equation}
\label{moments1}
{\mathcal S}^{\alpha}(t_n,f)=\Delta t\sum_{\tau=-\infty}^{\infty} 
M_{\tau}\left(t_n^{\alpha}\right)e^{-2i\pi f \tau \Delta t},
\end{equation}
where
$f$ is referred to as the {\em global} frequency. Each choice of $\alpha$ de facto leads to a different choice of a time-varying spectral representation.
The most popular choice is $\alpha=0$ \cite{cohen}.
We shall focus on estimating ${\cal S}^0(t_n,f)$, from smoothing an empirical estimator of $M_{\tau}(t)$, but also will compare this representation with other choices, such as $\alpha=1/2$, corresponding to the {\em Rihaczek} distribution \cite{Rihaczek}. 
A larger class of time-frequency distributions is formed by additionally allowing a convolution of $ {\mathcal S}^0(t_n,f)$ in time {\em and} frequency, see \cite{cohen}. We define the Short-Time Fourier Transform (STFT) as (for some given window function $h(\cdot)$), by
$
J^{(z,h)}(t,f)=\sqrt{\Delta t}\sum_{s=-\infty}^{\infty} h(s-t) Z_s e^{-2i\pi fs \Delta t}.
$
In statistics inference is often based on the STFT ({\em e.g.} \cite{Dahlhaus2000}), and an important characteristic is its variance:
\begin{alignat}{1}
\E\left\{\left|J^{(z,h)}(t,f)\right|^2\right\}
&=\Delta t\sum_{t_1=-\infty}^{\infty}\sum_{\tau=-\infty}^{\infty} \varpi_\tau(t_1-t) M_{\tau}(t_1) e^{-2i\pi f\tau\Delta t},\;
\varpi_\tau(t)=h(t)h(t-\tau).
\label{SFT_rep}
\end{alignat}
The expected magnitude of the STFT can also be represented as a FT of the moments $\{M_\tau(t)\}$ of $\{Z_t\}$.
In general the window functions $h(\cdot)$ can be replaced by an arbitrary nonseparable kernel $\varpi_\tau(t)$ for greater flexibility, catering to variable smoothness of $M_\tau(t)$, thus allowing for better frequency resolution when such is possible. The function $M_\tau(t)$ has a different degree of smoothness in $\tau$ and $t$ and so using Eqn.~\eqref{SFT_rep} to  threshold adaptively is better than the non-adaptive choice of $\varpi_\tau(t)=h(t)h(t-\tau)$. The variance of $\left|J^{(z,h)}(t,f)\right|^2$
is {\em large} and additional smoothing is required to make this a good estimator. Clearly whatever is done subsequently to calculating $\left|J^{(z,h)}(t,f)\right|^2$
depends strongly on the choice of the function $h(\cdot)$.

\subsection{Frequency-Time Representation}
We have defined a Fourier transform of $M_{\tau}\left(t_n\right)$ as the time-varying spectrum. The Fourier transform of the moment sequence in global time $t$ instead of in lag $\tau$ defines the function
\begin{equation}
\label{AFdef}
A_{\tau}(\nu)=\Delta t \sum_{n=-\infty}^{\infty} M_{\tau}(t_n)\;e^{-2i\pi \nu t_n}.
\end{equation}
$A_{\tau}(\nu)$ is the {\em ambiguity function} (AF) of $\{X_n\}$, see {\em e.g.} \cite{Blahut1991ed,Flandrin99}.
$A_{\tau}^{\alpha}(\nu)$ can also be defined for any arbitrary choice of $t_n^{\alpha}$, rather than centering the global time at $t_n$, as in Eqn \eqref{AFdef},
however this will only have an effect as a phase-shift term, and will not change the support of the AF, as this depends only on the magnitude. 
The AF has been used to define underspread processes \cite{Matz2006}, e.g. those whose AFs are limited to a set of frequencies near the origin, or $(\nu,\tau)=(0,0)$.  
Very few useful processes are strictly underspread. For this reason Hlawatsch and Matz introduced {\em nearly underspread} processes which only require a suitable decay of the magnitude of the AF from the origin, {\em e.g.} that $|A_\tau(\nu)|$ decays from the origin. Such processes will inevitably fit into existing theory in terms of inference, see e.g. \cite{Dahlhaus2000}, with the associated notion of asymptotics.

For completeness we define the dual-frequency spectrum $S(\nu,f)$ as a FT of the AF
$S(\nu,f)=\Delta t\sum_{\tau=-\infty}^{\infty} A_{\tau}(\nu)e^{-2i\pi f\tau \Delta t }.$
As $\{X_n\}$ is assumed harmonizable
this quantity has an interpretation by using the spectral representation of $\{Z_n\}$.
We let the Lo\`eve spectrum ($S_L(f_1,f_2)$) be given by\\
$\cov\{d{ Z}(f_1),d{ Z}(f_2)\}=S_L(f_1,f_2)\,df_1df_2.$
We find that the dual-frequency spectrum is a shifted version of the Lo\`eve spectrum as
$S(\nu,f)=
S_{L}(f+\nu,f).$
The point of this shift is to align the stationary spectrum to concur structurally with $S(0,f)$.
We can now describe the second order structure of the nonstationary process by any of these four different functions, namely the time-varying autocovariance $M_{\tau}(t_n),$
the dual-frequency spectrum $S(\nu,f),$ the time-varying spectrum ${\mathcal S}(t,f)$ or the AF $A_{\tau}(\nu)$. 
Because of the uniqueness of Fourier transforms one might argue that any of these descriptions is {\em equally} useful, important and indeed complete.

We shall introduce a new class of processes which if they did not grow in complexity with $N$ and $\Delta t$ and if you observed them over sufficiently large time-windows and at sufficiently small $\Delta t$ would be underspread.
As we collect more data, we generally observe processes of growing complexity -- hence we do  wish to define processes whose complexity nominally may {\em increase} with {\em decreasing} $\Delta t$ and {\em increasing N}. We shall estimate their second order properties for a given (small) $\Delta t$ and (large) $N$. We never anticipate to have all the energy of the process living at very low values of $\nu$, and we wish to reduce variance by instead zeroing out large (not necessarily connected) regions of the ambiguity plane. 
We therefore prefer to introduce these processes for a {\em fixed sampling scheme}
as centred at a few time-frequency points and exhibit decay away from these points. 
\begin{definition}{Ambiguity Sparse Process\label{ambsparse}}\\
A second order real-valued time series $\{X_n=X(n\Delta t)\}$ is denoted {\em Ambiguity Sparse} at sampling $N,\Delta t$ if its AF can be represented for $K\in {\mathbb N}$  in the form 
\begin{equation}
A_{\tau}(\nu)= \sum_{k=1}^K A_{\tau}^{(k)}(\nu),\,
A_{\tau}^{(k)}(\nu)= {\cal B}^{(k)}(\nu,\tau/N)
\left[\Delta t^2\left(\nu-\nu_0^{(k)}\right)^2+\left(\tau-\tau_0^{(k)}\right)^2/N^2\right]^{-\delta^{(k)}}
\label{AFmodel},
\end{equation}
with ${\cal B}^{(k)}(\nu,u)$ a smooth function near $(\nu_0^{(k)},\tau_0^{(k)})$, taking a non-zero value at this point, $\delta^{(k)}>0$. 
\end{definition}
To cope with more anisotropic structure, such as chirping components, Eqn \eqref{AFmodel} can be rotated and dilated to convert the circular contours of \eqref{AFmodel} into elliptical contours with a given orientation, and  all (subsequent) proofs would change {\em mutatis mutandis} in the rotated frame.
If we define
$
\mathbf{y}_k={\mathrm{diag}}\left(
a_1^{(k)} \; a_2^{(k)}\right)
{\mathbf{R}}_{-\theta}\left(
\Delta t \left(\nu-\nu_0^{(k)}\right)\;
\left(\tau-\tau_0^{(k)}\right)/N
\right)^T ,
$
with ${\mathbf{R}}_{-\theta}=\{(\cos(\theta) -\sin(\theta),(\sin(\theta) \cos(\theta))\}$ this would correspond to a model of 
\begin{equation}
A_{\tau}^{(k)}(\nu)\approx {\cal B}^{(k)}(\nu,\tau/N)
\left[\mathbf{y}_k^T\mathbf{y}_k\right]^{-\delta^{(k)}}
\label{AFmodel2}.
\end{equation}
Such a model would naturally treat stochastic chirping signals, see for example \cite{Martin81}. This allows nonstationary structure not necessarily aligned with the usual time-frequency axes.

\begin{remark}{The Sum of Two Ambiguity Sparse Processes}\\
The sum of two ambiguity sparse processes at a fixed sampling is ambiguity sparse, as long as all the spreading kernels $\{\Psi_n(\nu)\}$ are limited in support. The time-varying sequence $\{\psi_n(t)\}$ (and their Fourier transform $\{\Psi_n(\nu)\}$) is defined so that
{\small\begin{equation}
Y_n=\Delta t \sum_{h=-\infty}^\infty \psi_h^\ast(t_n)X_{n-h}+Y^{\perp}_n,
\;
\Delta t\sum_{h=-\infty}^{\infty}\psi_h^\ast(t_n)M_{\tau-h}(t-h)=\cov\left\{Y_n,X_{n-\tau} \right\}=M^{(Y,X)}_\tau(t_n), \end{equation}}
so that
$\cov\left\{Y^{\perp}_t,X_{t-h}\right\}=0$ for all $h\in{\mathbb{Z}}$. Then
with $A^{(X)}_\tau(\nu)$ as the AF of a single process $\{X_t\}$ and $A^{(Y,X)}_\tau(\nu)$
as the FT of $M^{(Y,X)}_\tau(t_n)$ in $t_n$ (the cross-Ambiguity Function) we have
\begin{eqnarray}
A_\tau^{(X+Y)}(t)&=&A_\tau^{(X)}(t)+A_\tau^{(Y)}(t)+A_\tau^{(X,Y)}(t)+A_\tau^{(Y,X)}(t),
\\
A_\tau^{(Y,X)}(\nu)
&=&\Delta t \sum_{h=-\infty}^\infty \int_{-\infty}^\infty \Psi_h^\ast\left(-\left(\nu'+\nu\right)\right) A_{\tau-h}(\nu')e^{2i\pi h \nu'}\;d\nu'.
\end{eqnarray}
We therefore get a countably infinite number of kernels $\{\Psi_n(\nu)\}$ spreading $A_{\tau}(\nu)$, when constructing $A_\tau^{(Y,X)}(\nu)$ (and similarly $A_\tau^{(X,Y)}(\nu)$.) If the support of {\em each} of $\{\Psi_h(\nu)\}$ is limited in $\nu$ and $h$ then $A_\tau^{(Y,X)}(\nu)$ and $A_\tau^{(X,Y)}(\nu)$ are still sparse. $A_\tau^{(X)}(\nu)$ is sparse by assumption; so is $A_\tau^{(Y)}(\nu)$.
Compare with the theory for semi-stationary processes, see 
\cite{Tong1973,Tong1974}, which is complicated because a single oscillatory family must be used for both the processes $\{X_t\}$ and $\{Y_t\}$ and it is hard to untangle potential correlation between processes.
\end{remark}
\begin{remark}
The notion of ambiguity sparse can directly be related for models of the smoothness of $M_\tau(t)$. Assume that $M_\tau(t),$ uniformly in $\tau,$ is Sobolev order $k$. This implies that
$
\left|A_\tau(\nu)\right|\le {\mathscr A} \nu^{-(k+1)},$
for sufficiently large $\nu$ and positive constant ${\mathscr A}$. This class fits into that of ambiguity sparse processes, but also into the class of underspread processes. 
\end{remark}


\section{Method of Moments Estimation}\label{methmoments}
We define a naive estimate of the analytic autocovariance function by
\begin{eqnarray}
\widehat{M}_{\tau}(t_n)=\left\{\begin{array}{lcr}
Z_n Z_{n-\tau}^\ast & {\mathrm{if}} & n=0,\dots -(N-n-1),\,\tau=0,\dots n-1 \\
0 &  & {\mathrm{otherwise}}
\end{array} \right. .
\end{eqnarray}
It is natural to start to define a time-varying spectrum starting from $M_\tau(t_n)$. We may form different time-varying spectral estimators by filtering any such choice of time-varying spectrum that is calculated from $\widehat{M}_\tau(t_n)$. 
$\widehat{M}_{\tau}(t_n)$ is unbiased when estimating the autocovariance of the analytic signal. Unfortunately its variance is very large. For example as $\{X_n\}$ is a Gaussian process then by Isserlis' theorem (see \cite{Isserlis1918}) we can note that 
\begin{eqnarray}
\label{raw_var}
\var\left\{\widehat{M}_{\tau}(t_n)\right\}&=&M_{0}(t_n)M_{0}(t_{n-\tau})+R_{\tau}(t_n)R_{-\tau}^\ast(t_{n-\tau}).
\end{eqnarray}
With our assumption of propriety the second term in  Eqn \eqref{raw_var} is identically zero, but despite this,
the estimator is  to all intents and purposes useless, as it is far too variable. The usual approach to improving a variable but unbiased estimator is via some form of smoothing. We define the smoothed estimator (for some chosen kernel function $\omega_\tau(t)$) as
\begin{eqnarray}
\label{kernel:est}
\overline{M}_{\tau}(t_n)=\Delta t\sum_{j=-\infty}^{\infty} \omega_{\tau}(t_{j-n})\widehat{M}_{\tau}(t_j).
\end{eqnarray}
Ideally $\omega_{\tau}(t_n)$ would be a variable bandwidth kernel that is naturally adapted to the smoothness of ${M}_{\tau}(t)$ in $t$ for each given value of $\tau$ ({\em cf} Eqn \eqref{SFT_rep}). Some examples of possible kernels for examples given in this paper are shown in Figure \ref{fig9}. These reinforce different structure, such as implementing local smoothing but keeping dependence across all lags (subplot (b) and (d)), reinforcing cyclostationary structure (subplots (a) \& (c)), and truncating dependence (subplot (a)). 

\begin{figure}[!htbp]
\begin{center}
\begin{minipage}[]{0.40\textwidth}
\centering
\includegraphics[width=\textwidth]{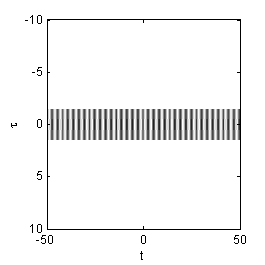}
(a)
\end{minipage}
\begin{minipage}[]{0.40\textwidth}
\centering
\includegraphics[width=\textwidth]{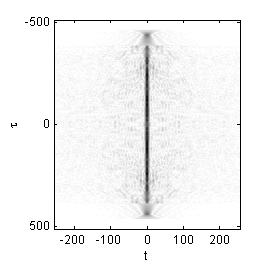}
(b)
\end{minipage}\\
\begin{minipage}[]{0.40\textwidth}
\centering
\includegraphics[width=\textwidth]{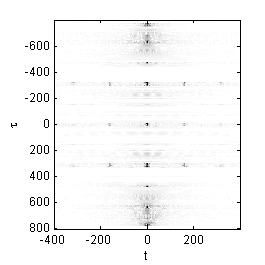}
(c)
\end{minipage}
\begin{minipage}[]{0.40\textwidth}
\centering
\includegraphics[width=\textwidth]{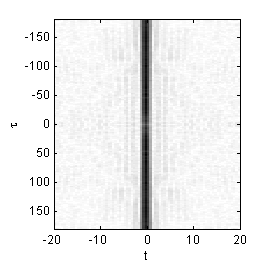}
(d)
\end{minipage}

\end{center}
\caption{The magnitude of the initial ``smoothing'' window used to smooth the empirical autocovariance function for the aggregation signal (a), for the simulated blood flow signal (b), for the bat signal (c), and for the oceanographic ``meddy'' signal (d).}
\label{fig9}
\end{figure}
If we Fourier transform $\overline{M}_{\tau}(t_n)$, then we obtain an estimator of the AF of $\{X_n\}$. We define the EMpirical AF (EMAF) by a DFT of $\widehat{M}_{\tau}(t_n)$
\begin{eqnarray}
\label{eqn:EMAF}
\widehat{A}_{\tau}(\nu)=\Delta t\sum_{n=0}^{N-1} \widehat{M}_{\tau}(t_n)e^{-i2\pi \nu t_n}.
\end{eqnarray}
$\widehat{A}_{\tau}(\nu)$ suffers from {\em exactly} the same problems as $\widehat{M}_{\tau}(t_n)$, {\em e.g.} a large variance.
We note that
\begin{eqnarray}
\overline{A}_{\tau}(\nu)&\equiv&\Delta t \sum_{n=0}^{N-1} \overline{M}_{\tau}(t_n)e^{-i2\pi \nu t_n}=\Omega_{\tau}(\nu)\widehat{A}_{\tau}(\nu),\quad \Omega_{\tau}(\nu)=\Delta t \sum_{n=0}^{N-1} \omega_{\tau}(t_{n})e^{-i2\pi \nu t_n}.
\label{smooth}
\end{eqnarray}
Thus our estimator of the ambiguity estimator is a multiplication of the EMAF with the FT of the kernel $\omega_\tau(t_n)$, and so if $|\Omega_{\tau}(\nu) |\le 1,$ $\overline{A}_{\tau}(\nu)$ is a {\em shrinkage} estimator of $A_{\tau}(\nu)$. 
We must decide how to chose an appropriate kernel, or the degree of shrinkage at each lag and nonstationary frequency. This must be done considering the stochastic properties of the EMAF $\{\widehat{A}_\tau(\nu)\}$. Traditional estimators would smooth $\{\widehat{M}_\tau(t)\}$ in $t$ uniformly in $\tau$ to reduce mean square error -- this corresponding to shrinking contributions non-adaptively at larger values of $|\nu|$, see {\em e.g.} \cite[p.~4372]{Jachan}. We wish to retain more flexibility, and so do not wish to put in strong structural assumptions in our treatment of $\widehat{A}_\tau(\nu)$. Shrinkage must therefore be implemented adaptively.
With the model of Definition \ref{ambsparse} we shall determine the statistical properties of $\widehat{A}_{\tau}(\nu)$, to construct good shrinkage estimators of $A_{\tau}(\nu)$.
\begin{theorem}{Concentration of Ambiguity Surface\label{ConcAmbSurf}}\\
$\{X_n\}$ is assumed to be an ambiguity sparse process  satisfying Eqn \eqref{AFmodel}, and $T(\tau)=(N-|\tau|)\Delta t$.
\begin{enumerate}
\item If $\delta^{(k)}\in\left[\frac{1}{4},\frac{3}{4}\right]$ then
$
\E\left\{\widehat{A}_\tau(\nu)\right\}
\approx \sum_{k=1}^K A_\tau^{(k)}(\nu){\cal J}_1(\delta^{(k)};\{T(\tau)(\nu-\nu_0^{(k)}),\tau-\tau_0^{(k)}\}),
$
where ${\cal J}_1(\delta^{(k)};\{T(\tau)(\nu-\nu_0^{(k)}),\tau-\tau_0^{(k)}\})\sim 1/(2\Delta t)$  as long as $(\nu,\tau)$ is near one of the $K$ contributions; otherwise\footnote{The factor of $1/(2\Delta t)$ is expected as the bandwidth of the analytic signal. We only need to recover the AF as a function upto proportionality and can easily multiply our estimate by $2\Delta t$.}  ${\cal J}_1(\delta^{(k)};\{T(\tau)(\nu-\nu_0^{(k)}),\tau-\tau_0^{(k)}\})\sim 1/(2\Delta t)-|\nu|$. Furthermore if $(\nu,\tau)\in {\cal D}_k(L)$  where
we fix $L>0$, and let
\[{\cal D}_k(L)=\left\{(\nu,\tau):
\sqrt{N^2\Delta t^2(\nu-\nu_0^{(k)})^2+(\tau-\tau_0^{(k)})^2}\le L\right\},\;k=1,\dots,K,\]
then
\[{\cal O}( A_\tau(\nu){\cal J}_1(\delta^{(k)};\{T(\tau)(\nu-\nu_0^{(k)}),\tau-\tau_0^{(k)}\})=(N-|\tau|)^{2\delta^{(k)}}/(2\Delta t)=
{\cal N}_\tau(\nu;N,\Delta t)
.\]

\item If Eqn \eqref{AFmodel} holds with $\delta^{(k)}\in\left(\frac{1}{4},1\right)$ the variance of $\widehat{A}_\tau(\nu)$ takes the form:
\begin{eqnarray} \nonumber
\var\left\{\widehat{A}_{\tau}(\nu)\right\}
&=& {\cal K}_{\tau}(\nu;N,\Delta t){\cal V}_{\tau}(\nu),
\label{by2}\end{eqnarray}
${\cal K}_{\tau}(\nu;N,\Delta t)=(N-|\tau|)^{4\max_k\delta^{(k)}-1}\frac{1}{\Delta t}\left(\frac{1}{2\Delta t }-|\nu|\right)$
with the normalized ambiguity variance ${\cal V}_{\tau}(\nu)$ defined implicitly. 
\item If $(\nu,\tau)$ is not near any of the centre points then the relation of $\widehat{A}_{\tau}(\nu)$ is given by
$
\rel\left\{\widehat{A}_{\tau}(\nu)\right\}={\cal R}_\tau(\nu)=o( {\cal K}_{\tau}(\nu;N,\Delta t)).
$
\end{enumerate}
\end{theorem}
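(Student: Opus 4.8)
\medskip\noindent\emph{Proof outline.}\quad The plan is to run everything through the Cram\'er representation of the analytic signal, $Z_n=2\int_0^{1/(2\Delta t)}e^{2i\pi f t_n}\,dZ_X(f)$, together with $\cov\{dZ_X(f_1),dZ_X(f_2)\}=S_L(f_1,f_2)\,df_1\,df_2$, the shift relation $S(\nu,f)=S_L(f+\nu,f)$ and $S(\nu,f)=\Delta t\sum_\tau A_\tau(\nu)e^{-2i\pi f\tau\Delta t}$, so that the truncation defining $\widehat A_\tau(\nu)$ turns into a finite Dirichlet sum acting on the true AF of Definition~\ref{ambsparse}. Writing $D_\tau(\mu)=\sum_n e^{2i\pi\mu t_n}$ for the sum over the $N-|\tau|$ retained indices --- peak $T(\tau)/\Delta t$ at $\mu=0$, half-width $\sim1/T(\tau)$, oscillatory tails --- substitution gives $\widehat A_\tau(\nu)=4\Delta t\int_0^{1/(2\Delta t)}\!\!\int_0^{1/(2\Delta t)}dZ_X(f_1)\,dZ_X^{\ast}(f_2)\,e^{2i\pi f_2\tau\Delta t}\,D_\tau(f_1-f_2-\nu)$, and unwinding the inverse transform of $M_\tau$ gives
\begin{equation}
\E\{\widehat A_\tau(\nu)\}=\Delta t\int_{-1/(2\Delta t)}^{1/(2\Delta t)}A_\tau(\nu')\,D_\tau(\nu'-\nu)\,d\nu'.
\end{equation}
All three parts are moment calculations on these identities; the recurring analytic-signal bookkeeping is that the band $[0,\tfrac{1}{2\Delta t})$ has autocorrelation $\big(\tfrac{1}{2\Delta t}-|\nu|\big)_+$, a triangle of base $\tfrac{1}{\Delta t}$ and peak $\tfrac{1}{2\Delta t}$, and the constant $\tfrac{1}{2\Delta t}$ in the statements is the corresponding proportionality flagged in the footnote.

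For part~(1), $\E\{\widehat A_\tau(\nu)\}$ is the true AF smoothed in stationary frequency by $D_\tau$; inserting \eqref{AFmodel} and rescaling $\nu'\mapsto T(\tau)(\nu'-\nu_0^{(k)})$ with $\tau-\tau_0^{(k)}$ held fixed produces exactly the two arguments of ${\cal J}_1$. For $\delta^{(k)}\in[\tfrac14,\tfrac34]$ the $k$-th profile is slowly varying relative to $1/T(\tau)$ outside an $O(1)$ core around the centre lag, so the smoothing returns $A_\tau^{(k)}(\nu)$ times the kernel mass read through the analytic band, i.e.\ times the triangular $\tfrac{1}{2\Delta t}-|\nu|$ --- which matters only where $A_\tau$ is otherwise flat; near a centre, where $A_\tau^{(k)}(\nu)$ is of order $(N-|\tau|)^{2\delta^{(k)}}$ and dominates that baseline, ${\cal J}_1$ saturates at the peak $\tfrac{1}{2\Delta t}$. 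The order statement is then read off \eqref{AFmodel}: on ${\cal D}_k(L)$ one has $\Delta t^2(\nu-\nu_0^{(k)})^2+(\tau-\tau_0^{(k)})^2/N^2\asymp L^2/N^2$, so $A_\tau^{(k)}(\nu)\asymp(N-|\tau|)^{2\delta^{(k)}}$ and $A_\tau(\nu){\cal J}_1\asymp(N-|\tau|)^{2\delta^{(k)}}/(2\Delta t)={\cal N}_\tau(\nu;N,\Delta t)$.

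For part~(2), square $\widehat A_\tau(\nu)$ and apply the complex Isserlis identity (\cite{Isserlis1918}) to the proper Gaussian $\{Z_n\}$: propriety ($R_\tau\equiv0$) kills every pairing that joins two unconjugated or two conjugated factors, and of the two survivors the ``mean$\times$mean'' one rebuilds $|\E\{\widehat A_\tau(\nu)\}|^2$ and cancels, leaving
\begin{equation}
\var\{\widehat A_\tau(\nu)\}=\Delta t^2\sum_\ell e^{-2i\pi\nu\ell\Delta t}\sum_n M_\ell(t_n)M_\ell^{\ast}(t_{n-\tau})
\end{equation}
over the $N-|\tau|-|\ell|$ admissible indices, equivalently the four-fold integral $16\Delta t^2\iiiint S_L(f_1,f_3)S_L(f_4,f_2)e^{2i\pi(f_2-f_4)\tau\Delta t}D_\tau(f_1-f_2-\nu)D_\tau^{\ast}(f_3-f_4-\nu)$. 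By Wiener--Khinchin the inner sum is $\tfrac{1}{\Delta t}\int|A_\ell(\nu')|^2e^{2i\pi\nu'\tau\Delta t}d\nu'$ where $A_\ell$ is smooth on the kernel scale, and under \eqref{AFmodel} that $L^2$ integral is finite exactly when $4\delta^{(k)}>1$ --- the hypothesis $\delta^{(k)}>\tfrac14$ used here; for the $k$-th component it equals, up to bounded factors and the phase $e^{2i\pi\nu_0^{(k)}\tau\Delta t}$, a constant times $(|\ell-\tau_0^{(k)}|/N)^{1-4\delta^{(k)}}$ with a rapidly decaying profile in $|\ell-\tau_0^{(k)}|\tau/N$. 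Summing this against $e^{-2i\pi\nu\ell\Delta t}$ over the $O(N/|\tau|)$ effectively contributing lags extracts that power-law scale as $(N-|\tau|)^{4\delta^{(k)}-1}$ (the largest $\delta^{(k)}$ dominating), the band-overlap triangle contributes $\tfrac{1}{\Delta t}\big(\tfrac{1}{2\Delta t}-|\nu|\big)$, and at lags within $O(1)$ of $\tau_0^{(k)}$ --- where $|A_\ell|^2$ is no longer integrable --- the kernel caps the singularity at height $\sim(T(\tau)/\Delta t)^{4\max_k\delta^{(k)}}$ over a band $\sim1/T(\tau)$ and reproduces the same exponent. Folding the amplitudes ${\cal B}^{(k)}$, the Fourier constant of the profile, the phases and the finite sums over $k$ and $\ell$ into the implicitly defined ${\cal V}_\tau(\nu)$ yields $\var\{\widehat A_\tau(\nu)\}={\cal K}_\tau(\nu;N,\Delta t){\cal V}_\tau(\nu)$. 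Part~(3) uses the same Isserlis expansion of $\E\{\widehat A_\tau(\nu)^2\}$, where again the mean$\times$mean pairing cancels $\E\{\widehat A_\tau(\nu)\}^2$ and leaves $\rel\{\widehat A_\tau(\nu)\}=16\Delta t^2\iiiint S_L(f_1,f_4)S_L(f_3,f_2)e^{2i\pi(f_2+f_4)\tau\Delta t}D_\tau(f_1-f_2-\nu)D_\tau(f_3-f_4-\nu)$; the decisive difference is that, once the kernels pin $f_1\approx f_2+\nu$ and $f_3\approx f_4+\nu$, the two $A$-factors are evaluated at stationary frequencies \emph{summing} to $\approx2\nu$ and at lags summing to $\approx2\tau$, so a contribution of order ${\cal K}_\tau$ would require $\nu_0^{(k)}+\nu_0^{(j)}\approx2\nu$ and $\tau_0^{(k)}+\tau_0^{(j)}\approx2\tau$ for some pair, i.e.\ $(\nu,\tau)$ at the midpoint of two centres --- which, away from all $K$ centres, never happens, so there is no reinforced singularity and no stationary phase and the integral is $o\big({\cal K}_\tau(\nu;N,\Delta t)\big)$, giving ${\cal R}_\tau(\nu)=o({\cal K}_\tau(\nu;N,\Delta t))$.

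The step I expect to be the genuine obstacle is making this kernel-smoothing of power-law singularities rigorous and uniform: showing that ${\cal J}_1$ truly \emph{saturates} (rather than drifting with $N$) across $\delta^{(k)}\in[\tfrac14,\tfrac34]$, and that the variance picks up the exact exponent $4\max_k\delta^{(k)}-1$ for $\delta^{(k)}\in(\tfrac14,1)$, needs quantitative control of the Dirichlet-kernel convolution of profiles like $|\nu'|^{-2\delta}$ at resolution $1/T(\tau)$, careful treatment of the band edge $|\nu|\uparrow\tfrac{1}{2\Delta t}$ where the triangular envelope degenerates, and honest accounting both of the $\tfrac{1}{2\Delta t}$ proportionality and of the mismatch between the isotropic $(\nu,\tau/N)$ scaling of \eqref{AFmodel} and the anisotropic truncation (length $N$ in time, lag-dependent count $N-|\tau|-|\ell|$ in the variance), including how the slowly varying ${\cal B}^{(k)}(\nu,\tau/N)$ interacts with that count.
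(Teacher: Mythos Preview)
Your overall architecture---spectral representation of $Z_n$, Isserlis for the second-order moments, propriety to kill the improper pairings, and the midpoint argument for the relation---matches the paper. But there is a genuine gap in part~(1) that propagates into the rest.

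Your expectation identity $\E\{\widehat A_\tau(\nu)\}=\Delta t\int A_\tau(\nu')\,D_\tau(\nu'-\nu)\,d\nu'$ is not correct as written. Because the analytic signal confines $f_1,f_2$ to $[0,\tfrac{1}{2\Delta t})$, the $f$-integral after your change of variables $\nu'=f_1-f_2$, $f=f_2$ runs only over $[\max(-\nu',0),\min(\tfrac{1}{2\Delta t}-\nu',\tfrac{1}{2\Delta t})]$, not the full Nyquist band. It therefore does \emph{not} collapse $\sum_{\tau'}A_{\tau'}(\nu')e^{-2i\pi f(\tau'-\tau)\Delta t}$ onto $\tau'=\tau$; instead it produces a second kernel
\[
e_{\tau'-\tau}(\nu')=\Big(\tfrac{1}{2\Delta t}-|\nu'|\Big)\,e^{-i\pi(\tau'-\tau)\Delta t(\tfrac{1}{2\Delta t}-\nu')}\,{\mathrm{sinc}}\!\Big(\pi(\tau'-\tau)\Delta t\big(\tfrac{1}{2\Delta t}-|\nu'|\big)\Big),
\]
so the expectation is a \emph{two-dimensional} smear of $A_{\tau'}(\nu')$ in both $\nu'$ (by the Dirichlet kernel) and $\tau'$ (by $e_{\tau'-\tau}$). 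You have tried to reinsert the triangular factor by hand as an overall multiplier, but it actually lives inside this second kernel.

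This matters because the near-singularity analysis---the part you correctly flag as the obstacle---hinges on that 2D structure. The paper rescales $(\nu'-\nu_0^{(k)},\tau'-\tau_0^{(k)})\mapsto(\xi_1,\xi_2)$, recognises the product of the Dirichlet and the $e_{\tau'-\tau}$ kernels as (up to phases) a 2D indicator in the dual variables $(\omega_1,\omega_2)$, and evaluates ${\cal J}_1$ via Plancherel against the Hankel transform of the \emph{radial} singularity,
\[
{\cal F}\big\{[\xi_1^2+\xi_2^2]^{-\delta}\big\}(\bm{\omega})=2\pi\cdot 2^{1-2\delta}(2\pi\omega)^{2\delta-2}\,\frac{\Gamma(1-\delta)}{\Gamma(\delta)}.
\]
The integrability of this transform over the box is exactly what pins the range $\delta^{(k)}\in[\tfrac14,\tfrac34]$, and a second application of the same Hankel identity is what makes ${\cal J}_1$ saturate at the $N$-independent value $1/(2\Delta t)$. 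With only the 1D convolution you wrote, neither the endpoints $\tfrac14,\tfrac34$ nor the saturation constant can be derived: you would be convolving a 1D Dirichlet kernel against a 1D profile at fixed $\tau$, with different integrability thresholds and no mechanism for the lag dimension of model~\eqref{AFmodel} to enter.

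For part~(2), your time-domain variance formula $\Delta t^2\sum_\ell e^{-2i\pi\nu\ell\Delta t}\sum_n M_\ell(t_n)M_\ell^\ast(t_{n-\tau})$ is correct and is a legitimate alternative starting point to the paper's four-fold frequency integral; the paper, however, again pushes the calculation into the same 2D Plancherel/Hankel framework (now with two copies of the radial transform, via a projection operator ${\cal P}$) to define a finite ${\cal J}_2^{(k)}$ and read off both the exponent $4\delta^{(k)}-1$ and the factor $\tfrac{1}{\Delta t}(\tfrac{1}{2\Delta t}-|\nu|)$ in one stroke. Your $L^2$ argument on $\int|A_\ell(\nu')|^2\,d\nu'$ does recover the lower threshold $\delta>\tfrac14$, but the upper bound $\delta<1$ and the clean separation of ${\cal K}_\tau$ from ${\cal V}_\tau$ need the full 2D apparatus. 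Your part~(3) midpoint argument is essentially the paper's: the two Dirichlet arguments become $\pm(f-f')+\nu_0^{(k)}-\nu$, so away from the centres they cannot both be small, and no stationary-phase reinforcement occurs.
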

\begin{proof}
See Appendix A in supplementary material.
\end{proof}
We additionally 
define
\begin{equation}
{\cal L}_\tau(\nu)=\frac{\left[{\cal N}_\tau^{(k)}(\nu)\right]^2}{{\cal K}_\tau(\nu)}=
\frac{1}{4}(N-|\tau|)^{1+4(\delta^{(k)}-\delta)}\left(\frac{1}{2 }-|\nu|\Delta t  \right)^{-1},\;{\cal D}(L)=\cup_k {\cal D}_k(L).
\end{equation}

We have now established that if $(\nu,\tau)$ is {\em not} sufficiently close to one of the points in the set $\left\{\left(\nu_k^{(0)},\tau_k^{(0)}\right) \right\}$ then the expectation gives a negligible contribution.
On the other hand if $(\nu,\tau)$ is sufficiently close to one of the points in the set $\left\{\left(\nu_k^{(0)},\tau_k^{(0)}\right) \right\}$ then 
$
\E\left\{\widehat{A}_{\tau}(\nu) \right\}
$
is asymptotically divergent as ${\cal L}_\tau(\nu)$ becomes unbounded with increasing $N$.
The variance of the EMAF in either of these two cases  is non-negligible.
Solely characterising the first and second moment of the AF is not sufficient to design an inference procedure unless we wish to use method of moments estimation, which is in general undesirable.
We now aim to write down the (approximate) distribution of the EMAF to be able to do inferences more efficiently. We introduce the normalized AF given by:
\begin{equation}
\label{whatAN}
\widehat{A}_{\tau}^{(N)}(\nu)=
\frac{\widehat{A}_{\tau}(\nu)}{{\cal K}_{\tau}^{1/2}(\nu;N,\Delta t)}.\end{equation}
This normalized version of the AF is sensible considering  $\var\{\widehat{A}_{\tau}^{(N)}(\nu)\}={\cal O}(1)$. If we are sufficiently far from ${\cal D}={\cal D}(L)$ (for a reasonably chosen $L\in{\mathbb{R}}^+$) then the expected modulus square of  $\widehat{A}_{\tau}^{(N)}(\nu)$
is ${\cal O}(1)$ while evaluated at the present components in ${\cal D}$ will be increasing with ${\cal N}$.

Referring to Eqn \eqref{eqn:EMAF} we can rewrite $\widehat{A}_{\tau}(\nu)$
as a sum of $\widehat{M}_\tau(t_n)$.
\cite{Hindberg2009} discussed the distribution of $\widehat{A}_{\tau}(\nu)$
for an underspread process. We have in this paper assumed that $\{X_n\}_n$ is a sample from a Gaussian process, and so $\widehat{M}_\tau(t_n)$ is a quadratic form and will be a weighted sum of two $\chi^2_1$'s \cite{Johnson}.
$\widehat{A}_{\tau}(\nu)$ is therefore an aggregated sum of 
correlated weighted sum of two $\chi^2_1$'s. To ensure that $\{\widehat{A}_{\tau}(\nu)\}$
is a collection of Gaussian random variables the correlation of $\{X_n\}$ across $n$ must be moderate. We also want to ensure that too much mass does not concentrate on one of the aggregated variables. 
Under such conditions $\widehat{A}_{\tau}(\nu)$ is approximately Gaussian. 
\begin{figure}[!htbp]
\begin{center}
\begin{minipage}[]{0.39\textwidth}
\centering
\includegraphics[width=\textwidth]{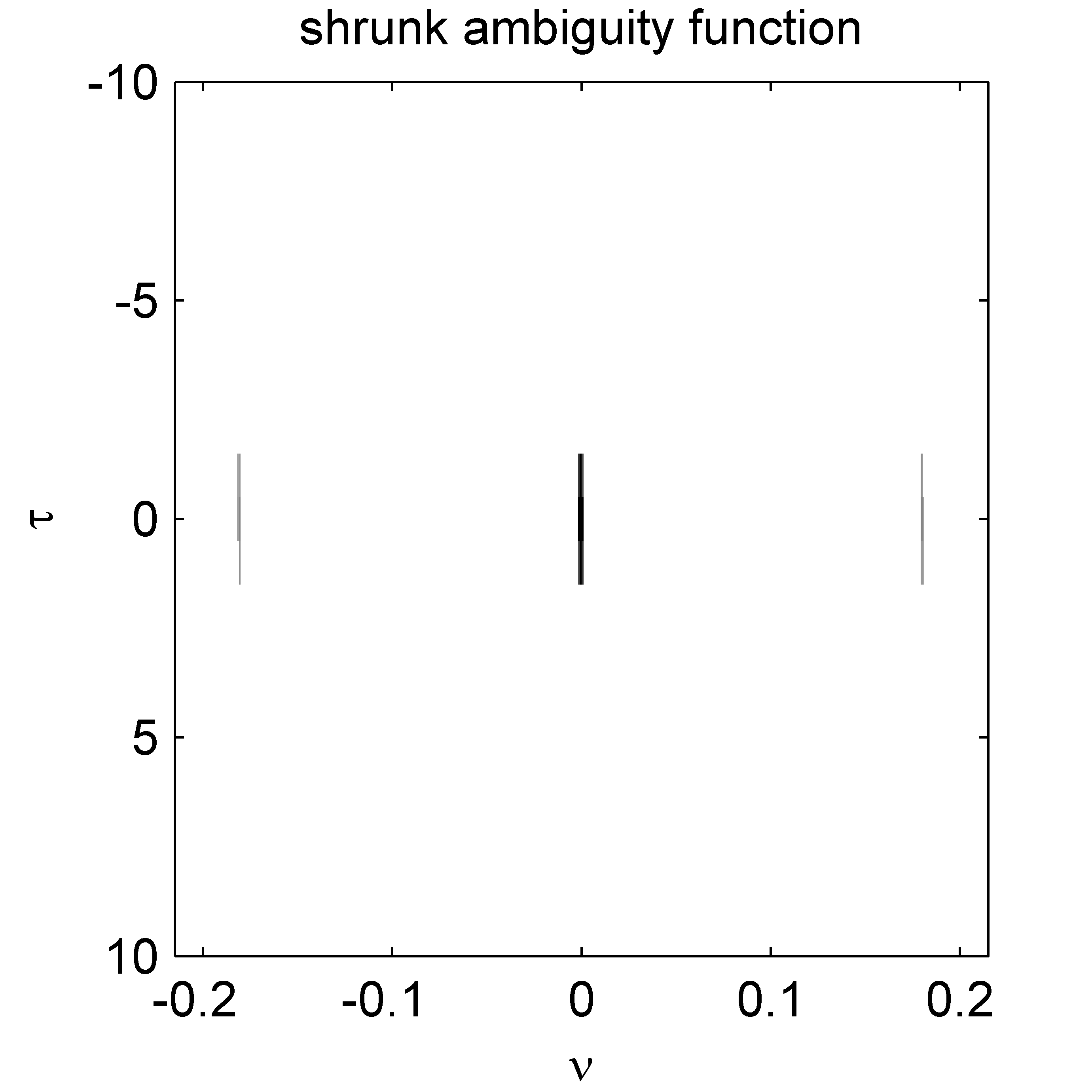}
(a)
\end{minipage}
\begin{minipage}[]{0.39\textwidth}
\centering
\includegraphics[width=\textwidth]{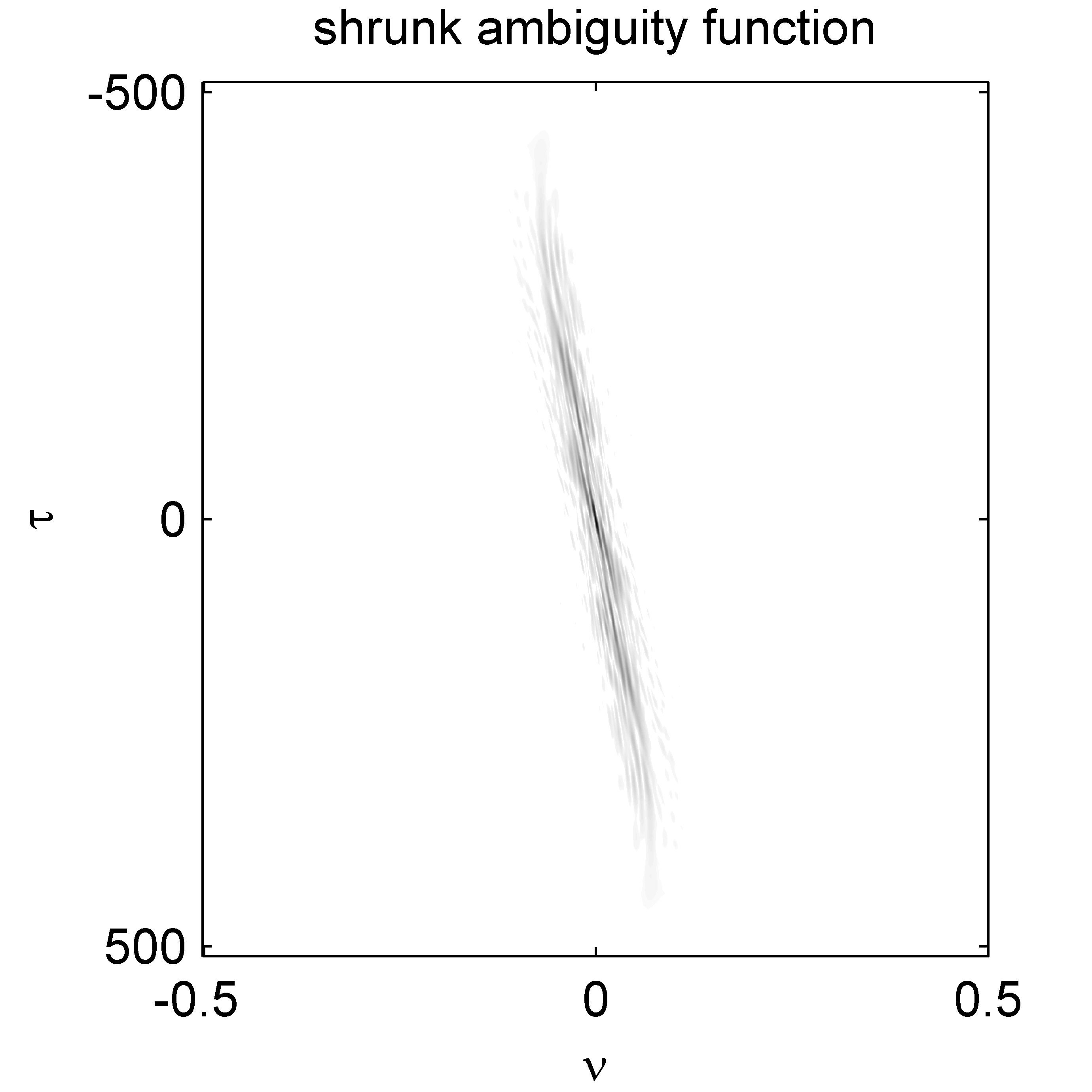}
(b)
\end{minipage}
\begin{minipage}[]{0.39\textwidth}
\centering
\includegraphics[width=\textwidth]{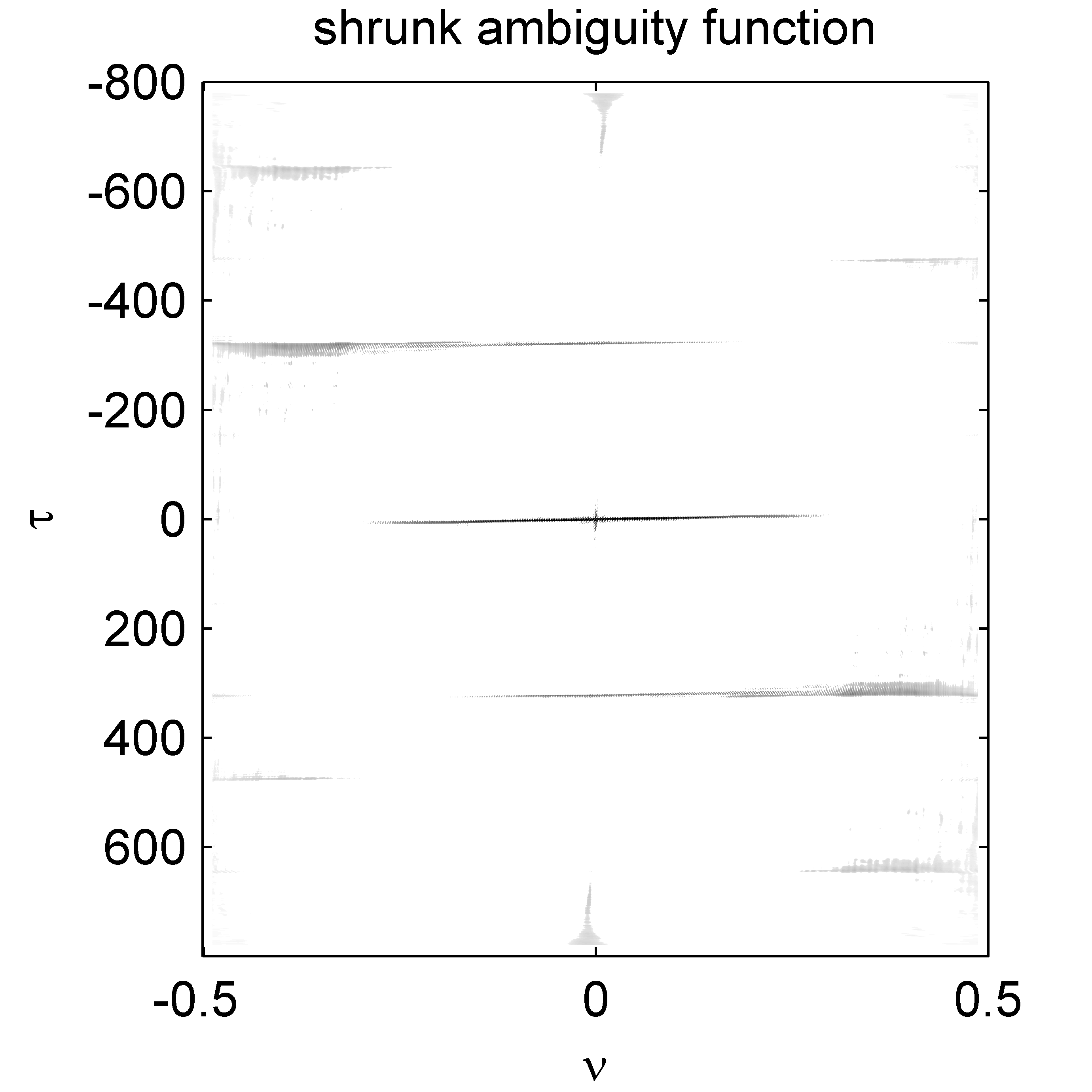}
(c)
\end{minipage}
\begin{minipage}[]{0.39\textwidth}
\centering
\includegraphics[width=\textwidth]{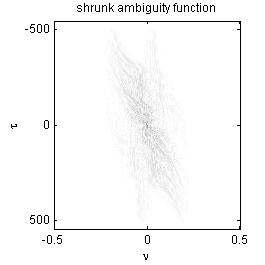}
(d)
\end{minipage}
\end{center}
\caption{Estimated ambiguity function of the realizations; the first is the aggregation of a locally stationary and a cyclostationary signal, the second a simulated blood flow signal, the third the pipistrellus signal and the fourth is the oceanographic signal.
\label{fig2}}
\end{figure}
\begin{corollary}{Magnitude of Ambiguity Surface\label{ConcAmbSurfII}}\\
The distribution of of the normalized AF represented in terms of its real and imaginary parts  or
\begin{equation}
\widehat{A}_{\tau}^{(N)}(\nu)=\widehat{C}_{\tau}^{(N)}(\nu)-i \widehat{D}_{\tau}^{(N)}(\nu)=
\widehat{Q}_\tau(\nu) e^{-i\vartheta_\tau(\nu)},
\end{equation} 
takes the form
\begin{enumerate}
\item If 
$(\nu,\tau)\notin {\cal D}$ then
$
\widehat{Q}_\tau^2(\nu)\sim \frac{{\cal V}_{\tau}(\nu)}{2}\chi^2_2.$
\item If $(\nu,\tau)\in {\cal D}$ then 
$\frac{\widehat{Q}_\tau^2(\nu)}{{\cal L}_{\tau}(\nu)}=\left|B_{\tau}(\nu)\right|^2+O\left(\frac{1}{{\cal L}_{\tau}(\nu)}\right),$
and
$
\var\left\{\widehat{A}_{\tau}^{(N)}(\nu)\right\}={\cal V}_{\tau}(\nu)$, 
$\rel\left\{\widehat{A}_{\tau}^{(N)}(\nu)\right\}={\cal R}_{\tau}(\nu).$ We have $\left|B_{\tau}(\nu)\right|^2=\left|A_\tau(\nu){\cal J}_1(\delta^{(k)};\{T(\tau)(\nu-\nu_0^{(k)}),\tau-\tau_0^{(k)}\}) \right|^2/({\cal L}_\tau(\nu){\cal K}_\tau(\nu))={\mathcal O}(1).$
\end{enumerate}
\end{corollary}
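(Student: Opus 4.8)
The plan is to obtain Corollary~\ref{ConcAmbSurfII} by combining the moment expansions of Theorem~\ref{ConcAmbSurf} with the approximate joint Gaussianity of $(\widehat{C}_\tau^{(N)}(\nu),\widehat{D}_\tau^{(N)}(\nu))$ argued in the paragraph preceding the statement. First I would pass the three conclusions of Theorem~\ref{ConcAmbSurf} through the deterministic normalization $\widehat{A}_\tau^{(N)}(\nu)=\widehat{A}_\tau(\nu)/{\cal K}_\tau^{1/2}(\nu;N,\Delta t)$: this leaves the relation unchanged up to scale, gives $\var\{\widehat{A}_\tau^{(N)}(\nu)\}={\cal V}_\tau(\nu)$ and $\rel\{\widehat{A}_\tau^{(N)}(\nu)\}={\cal R}_\tau(\nu)$, and replaces $\E\{\widehat{A}_\tau(\nu)\}$ by $\E\{\widehat{A}_\tau(\nu)\}/{\cal K}_\tau^{1/2}$. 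Because $\widehat{A}_\tau^{(N)}(\nu)$ is, to the stated order of approximation, a complex Gaussian, its law is pinned down by these first two moments, and the two items of the corollary are exactly the two regimes ${\cal L}_\tau(\nu)$ bounded versus unbounded identified in the discussion following Theorem~\ref{ConcAmbSurf}.

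For $(\nu,\tau)\notin{\cal D}$, Theorem~\ref{ConcAmbSurf}(1) gives ${\cal J}_1\sim 1/(2\Delta t)-|\nu|={\cal O}(1)$ and $A_\tau(\nu)={\cal O}(1)$ (bounded, not growing in $N$) away from the $K$ centres, while ${\cal K}_\tau^{1/2}(\nu;N,\Delta t)$ diverges with $N-|\tau|$ whenever $\max_k\delta^{(k)}>1/4$; hence $\E\{\widehat{A}_\tau^{(N)}(\nu)\}\to 0$, and by Theorem~\ref{ConcAmbSurf}(3) the normalized relation ${\cal R}_\tau(\nu)$ is negligible. Thus $\widehat{A}_\tau^{(N)}(\nu)=\widehat{C}_\tau^{(N)}(\nu)-i\widehat{D}_\tau^{(N)}(\nu)$ is, to leading order, a zero-mean \emph{proper} complex Gaussian with $\E\{|\widehat{A}_\tau^{(N)}(\nu)|^2\}={\cal V}_\tau(\nu)$. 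Properness forces $\widehat{C}_\tau^{(N)}(\nu)$ and $\widehat{D}_\tau^{(N)}(\nu)$ to be uncorrelated, hence (being jointly Gaussian) independent, each $N(0,{\cal V}_\tau(\nu)/2)$, so $\widehat{Q}_\tau^2(\nu)=[\widehat{C}_\tau^{(N)}(\nu)]^2+[\widehat{D}_\tau^{(N)}(\nu)]^2\sim\tfrac{{\cal V}_\tau(\nu)}{2}\chi^2_2$, which is item~1.

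For $(\nu,\tau)\in{\cal D}$, Theorem~\ref{ConcAmbSurf}(1) gives $\E\{\widehat{A}_\tau(\nu)\}\approx A_\tau(\nu){\cal J}_1(\delta^{(k)};\{T(\tau)(\nu-\nu_0^{(k)}),\tau-\tau_0^{(k)}\})$, of order ${\cal N}_\tau(\nu;N,\Delta t)$, so $|\E\{\widehat{A}_\tau^{(N)}(\nu)\}|^2=|A_\tau(\nu){\cal J}_1|^2/{\cal K}_\tau(\nu)={\cal L}_\tau(\nu)\,|B_\tau(\nu)|^2$ with $|B_\tau(\nu)|^2={\cal O}(1)$ exactly as defined in the statement (this uses ${\cal L}_\tau{\cal K}_\tau={\cal N}_\tau^2$). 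Taking the expected modulus square of the Gaussian variable, $\E\{\widehat{Q}_\tau^2(\nu)\}=|\E\{\widehat{A}_\tau^{(N)}(\nu)\}|^2+\var\{\widehat{A}_\tau^{(N)}(\nu)\}={\cal L}_\tau(\nu)|B_\tau(\nu)|^2+{\cal V}_\tau(\nu)$, and dividing by ${\cal L}_\tau(\nu)$ --- unbounded in this regime --- yields $\widehat{Q}_\tau^2(\nu)/{\cal L}_\tau(\nu)=|B_\tau(\nu)|^2+O(1/{\cal L}_\tau(\nu))$, the remainder being ${\cal V}_\tau(\nu)/{\cal L}_\tau(\nu)$. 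The remaining claims $\var\{\widehat{A}_\tau^{(N)}(\nu)\}={\cal V}_\tau(\nu)$ and $\rel\{\widehat{A}_\tau^{(N)}(\nu)\}={\cal R}_\tau(\nu)$ are the normalized forms of Theorem~\ref{ConcAmbSurf}(2)--(3).

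The main obstacle is justifying the Gaussian approximation on which both $\chi^2_2$-type statements rest. Writing $\widehat{A}_\tau(\nu)$ via Eqn~\eqref{eqn:EMAF} exhibits it as a weighted sum over $n$ of the quadratic forms $\widehat{M}_\tau(t_n)=Z_nZ_{n-\tau}^\ast$, each a correlated combination of two $\chi^2_1$ variates, so one needs a central limit theorem for a dependent triangular array. This in turn requires, under Definition~\ref{ambsparse}, that the correlation of $\{X_n\}$ across $n$ decay fast enough that the effective number of summands grows with $N-|\tau|$, together with a Lindeberg/Lyapunov-type condition ensuring that no single $\widehat{M}_\tau(t_n)$ --- equivalently no single eigenvalue of the aggregated quadratic form --- carries a non-vanishing fraction of the total variance ${\cal K}_\tau(\nu;N,\Delta t)$. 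Controlling the resulting Berry--Esseen error so that it is dominated by the $O(1/{\cal L}_\tau(\nu))$ term and by ${\cal V}_\tau(\nu)$ already present is the delicate step; once the approximation is granted, the rest is the bookkeeping above.
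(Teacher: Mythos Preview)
Your proposal is correct and follows the same route as the paper, which simply says ``This follows by direct calculation from Theorem~\ref{ConcAmbSurf}''; you have merely unpacked that direct calculation---normalizing by ${\cal K}_\tau^{1/2}$, invoking the approximate Gaussianity asserted in the paragraph preceding the corollary, and reading off the proper complex Gaussian/$\chi^2_2$ structure from the mean, variance and (negligible) relation supplied by Theorem~\ref{ConcAmbSurf}. Your honest flagging of the CLT step for the dependent quadratic forms $\{\widehat{M}_\tau(t_n)\}$ is apt, since the paper treats it heuristically; one small point is that for item~2 you derive the $O(1/{\cal L}_\tau(\nu))$ remainder for $\E\{\widehat{Q}_\tau^2(\nu)\}/{\cal L}_\tau(\nu)$ and then state it for the random quantity itself, which matches the paper's level of rigour but would, if tightened, pick up an $O_P({\cal L}_\tau(\nu)^{-1/2})$ cross term.
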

\begin{proof}
This follows by direct calculation from Theorem \ref{ConcAmbSurf}.
\end{proof}
We therefore expect to observe a mixture as the distribution of the EMAF, as is born out by practical examples, see Figure \ref{fig5}.
We let $\rho_{N,\Delta t}$ be the probability that a random coefficient is non-zero, with the desired sampling.
For a white noise process the expected AF
decays like $\tau^{-2}$ and $\nu^{-2}$ for sufficiently large $\nu$ and $\tau$.
If we add up the coefficients for which 
$\left|k\tau\right|\le N^{1/2}$
then these are
$2N+{\cal O}(\log(N))$.
The total number of coefficients are $4N^2$ and so the probability of hitting such a coefficient is ${\cal O}(1/N)$. 
For stationary and uniformly modulated processes the situation is very much similar. 

\begin{proposition}{Distribution of the Empirical Ambiguity Surface}\\
If we draw a coefficient at random at local frequency and time $(\nu,\tau)\neq (0,0)$ then if the random process satisfies ${\cal R}_{\tau}(\nu)\equiv 0$ we have that $\left|\widehat{A}_{\tau}^{(N)}(\nu)\right|^2$ is distributed as a mixture of central and non-central $\chi^2$'s or:
\begin{eqnarray}
\label{mixture1}
\left|\widehat{A}_{\tau}^{(N)}(\nu)\right|^2\sim (1-\rho_{N,\Delta t}) \frac{{\cal V}_{\tau}(\nu)}{2}\chi^2_2+
\rho_{N,\Delta t}\frac{{\cal V}_{\tau}(\nu)}{2}\chi^2_2\left({\cal L}_{\tau}(\nu)\frac{B_{\tau}^2(\nu)}{{\cal V}_{\tau}(\nu)}\right)
\end{eqnarray} 
\end{proposition}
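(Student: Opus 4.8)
The plan is to obtain the mixture by conditioning on the location of the randomly drawn coefficient and invoking the law of total probability, with Corollary~\ref{ConcAmbSurfII} supplying the two conditional laws. First I would split the sampling lattice of $(\nu,\tau)\neq(0,0)$ into the signal region $\mathcal{D}=\mathcal{D}(L)$ and its complement; by the definition of $\rho_{N,\Delta t}$ as the probability that a uniformly drawn coefficient is non-zero, a random $(\nu,\tau)$ lies in $\mathcal{D}$ with probability $\rho_{N,\Delta t}$ and outside it with probability $1-\rho_{N,\Delta t}$. Conditional on each event the law of $|\widehat{A}_\tau^{(N)}(\nu)|^2=[\widehat{Q}_\tau(\nu)]^2$ is already fixed by Corollary~\ref{ConcAmbSurfII}, so the marginal law of the squared modulus of a randomly located coefficient is the corresponding $\rho_{N,\Delta t}$-weighted two-component mixture, and matching the two branches to the two terms of Eqn~\eqref{mixture1} is bookkeeping.

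For the conditional laws I would use the Gaussian approximation set out after Eqn~\eqref{whatAN}: $\widehat{A}_\tau(\nu)$ is an aggregated sum of correlated weighted sums of two $\chi^2_1$ variates which, under the stated moderate-dependence and no-dominant-term conditions, is approximately complex Gaussian, and dividing by the deterministic normaliser $\mathcal{K}_\tau^{1/2}(\nu;N,\Delta t)$ preserves this. The hypothesis $\mathcal{R}_\tau(\nu)\equiv 0$ forces the relation of $\widehat{A}_\tau^{(N)}(\nu)$ to vanish, so the complex Gaussian is proper and $\widehat{C}_\tau^{(N)}(\nu)$, $\widehat{D}_\tau^{(N)}(\nu)$ are independent, each of variance $\mathcal{V}_\tau(\nu)/2$ by Theorem~\ref{ConcAmbSurf}(2). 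In the branch $(\nu,\tau)\notin\mathcal{D}$ Theorem~\ref{ConcAmbSurf}(1) gives a negligible mean (the factor $\mathcal{J}_1$ contributes no divergent term away from the $K$ centres), so $[\widehat{Q}_\tau(\nu)]^2=[\widehat{C}_\tau^{(N)}(\nu)]^2+[\widehat{D}_\tau^{(N)}(\nu)]^2\sim\frac{\mathcal{V}_\tau(\nu)}{2}\chi^2_2$, the first term of Eqn~\eqref{mixture1}.

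In the branch $(\nu,\tau)\in\mathcal{D}$ I would compute the squared mean of the normalised AF. Theorem~\ref{ConcAmbSurf}(1) gives $\E\{\widehat{A}_\tau(\nu)\}\approx A_\tau(\nu)\mathcal{J}_1(\delta^{(k)};\{T(\tau)(\nu-\nu_0^{(k)}),\tau-\tau_0^{(k)}\})$, hence $|\E\{\widehat{A}_\tau^{(N)}(\nu)\}|^2=|A_\tau(\nu)\mathcal{J}_1(\cdot)|^2/\mathcal{K}_\tau(\nu)=\mathcal{L}_\tau(\nu)|B_\tau(\nu)|^2$ by the identity for $|B_\tau(\nu)|^2$ in Corollary~\ref{ConcAmbSurfII}(2). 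A proper complex Gaussian with squared mean $\mathcal{L}_\tau(\nu)|B_\tau(\nu)|^2$ and component variance $\mathcal{V}_\tau(\nu)/2$ has squared modulus $\frac{\mathcal{V}_\tau(\nu)}{2}\chi^2_2(\lambda)$ with non-centrality $\lambda$ proportional to $\mathcal{L}_\tau(\nu)B_\tau^2(\nu)/\mathcal{V}_\tau(\nu)$, the $O(1/\mathcal{L}_\tau(\nu))$ remainder of Corollary~\ref{ConcAmbSurfII}(2) being absorbed; this is the second term of Eqn~\eqref{mixture1}, and combining the branches with weights $1-\rho_{N,\Delta t}$ and $\rho_{N,\Delta t}$ completes the argument.

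The main obstacle is the Gaussian approximation common to both branches: $\widehat{A}_\tau(\nu)$ is a doubly indexed sum of correlated quadratic forms in the Gaussian vector $\{X_n\}$, so asymptotic normality needs a central limit theorem for weakly dependent triangular arrays together with a Lyapunov/Lindeberg-type negligibility condition ruling out concentration of mass on a single summand -- exactly the moderate-correlation and no-dominant-variable provisos flagged in the text. A secondary subtlety is the behaviour near the boundary $\partial\mathcal{D}(L)$, where the regime transition is smooth rather than sharp and the choice of $L$ must be matched to the rate at which $\mathcal{L}_\tau(\nu)$ diverges; one must also check that excluding $(\nu,\tau)=(0,0)$ removes the only lattice point not covered by the two branches.
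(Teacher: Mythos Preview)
Your proposal is correct and follows exactly the route the paper takes: the paper's own proof is the single line ``This result follows by direct calculation from Theorem~\ref{ConcAmbSurf} and [Corollary]~\ref{ConcAmbSurfII},'' and your argument simply unpacks that direct calculation---conditioning on membership in $\mathcal{D}$ with probability $\rho_{N,\Delta t}$, reading off the central $\chi^2_2$ from Corollary~\ref{ConcAmbSurfII}(1) on the complement and the non-central $\chi^2_2$ from Corollary~\ref{ConcAmbSurfII}(2) inside $\mathcal{D}$, with the propriety hypothesis $\mathcal{R}_\tau(\nu)\equiv 0$ ensuring independence of real and imaginary parts. Your additional remarks on the CLT proviso and the boundary $\partial\mathcal{D}(L)$ are fair caveats but go beyond what the paper itself supplies.
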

\begin{proof}
This result follows by direct calculation from Theorem \ref{ConcAmbSurf} and Lemma \ref{ConcAmbSurfII}.
\end{proof}
Even if we expect decay in 
${\cal V}_{\tau}(\nu)$ for larger $\tau$ and $\nu$, this quantity will normally take a typical value of $\overline{\cal V}$ and so we may state the following result.
\begin{figure}[!htbp]
\begin{center}
\begin{minipage}[]{0.39\textwidth}
\centering
\includegraphics[width=\textwidth]{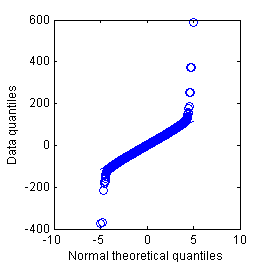}
(a)
\end{minipage}
\begin{minipage}[]{0.39\textwidth}
\centering
\includegraphics[width=\textwidth]{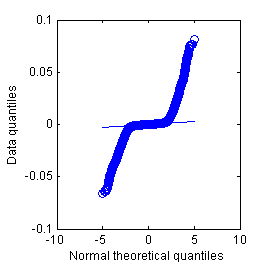}
(b)
\end{minipage}
\begin{minipage}[]{0.39\textwidth}
\centering
\includegraphics[width=\textwidth]{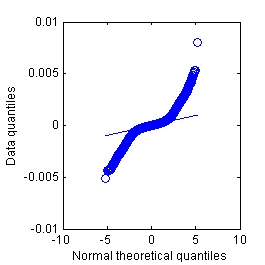}
(c)
\end{minipage}
\begin{minipage}[]{0.39\textwidth}
\centering
\includegraphics[width=\textwidth]{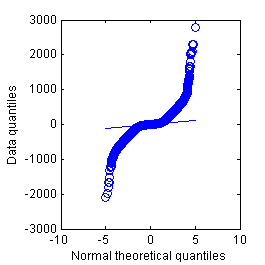}
(d)
\end{minipage}
\end{center}
\caption{QQ-plots of the real and imaginary parts of the renormalized ambiguity function, using $\widehat{\cal V}$ to estimate the superimposed lines. (a) is showing the simulated mixture of a cyclostationary and locally stationary process, (b) the simulated blood flow signal, (c) the chirping bat signal, and (d) the oceanography signal. 
\label{fig5}}
\end{figure}
\begin{corollary}{Distribution of Empirical AF}\label{distoffAmb}\\
If we draw a coefficient at random at local frequency and time $(\nu,\tau)\neq (0,0)$ then if the random process satisfies ${\cal R}_{\tau}(\nu)\equiv 0$ we have that $\left|\widehat{A}_{\tau}^{(N)}(\nu)\right|^2$ is distributed as a mixture or:
\begin{alignat}{1}
\label{mixture2}
\left|\widehat{A}_{\tau}^{(N)}(\nu)\right|^2\sim (1-\rho_{N,\Delta t}) \frac{\overline{\cal V}}{2}\chi^2_2+
\rho_{N,\Delta t}\frac{\overline{\cal V}}{2}\chi^2_2\left({\cal L}_{\tau}(\nu;N,\Delta t)\frac{B_{\tau}^2(\nu)}{\overline{\cal V}}\right).
\end{alignat}
\end{corollary}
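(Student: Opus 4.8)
The plan is to obtain Corollary~\ref{distoffAmb} as an immediate specialization of the Proposition (Distribution of the Empirical Ambiguity Surface), the only new ingredient being the replacement of the coefficient-dependent normalized variance ${\cal V}_\tau(\nu)$ by a representative constant $\overline{\cal V}$. First I would recall what is already in hand: under ${\cal R}_\tau(\nu)\equiv 0$, Corollary~\ref{ConcAmbSurfII} together with Theorem~\ref{ConcAmbSurf} shows that $|\widehat A_\tau^{(N)}(\nu)|^2$ is, up to the usual Gaussian-approximation error, a two-component mixture: far from ${\cal D}$ a scaled $\chi^2_2$ with scale ${\cal V}_\tau(\nu)/2$, and inside ${\cal D}$ a scaled non-central $\chi^2_2$ with the same scale and non-centrality ${\cal L}_\tau(\nu)B_\tau^2(\nu)/{\cal V}_\tau(\nu)$. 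The Proposition packages these two cases with weights $1-\rho_{N,\Delta t}$ and $\rho_{N,\Delta t}$, the probability that the randomly drawn index lands near one of the $K$ centres. So all that remains is to justify collapsing ${\cal V}_\tau(\nu)\mapsto\overline{\cal V}$.

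Second, I would make the phrase ``typical value'' precise. Because the index $(\nu,\tau)\neq(0,0)$ is sampled uniformly over the $4N^2$ admissible coefficients, the natural choice is $\overline{\cal V}=\E_{(\nu,\tau)}\{{\cal V}_\tau(\nu)\}$ (a median would work equally well), and the task is to show that, under this draw, ${\cal V}_\tau(\nu)/\overline{\cal V}=1+o_p(1)$. I would split the index set into the exceptional part ${\cal D}(L)=\cup_k {\cal D}_k(L)$ and its complement. The complement carries a $1-{\cal O}(1/N)$ fraction of the coefficients, and there Theorem~\ref{ConcAmbSurf} exhibits ${\cal V}_\tau(\nu)$ as a bounded, slowly varying function of $(\nu,\tau)$ (the rapidly growing part of the variance having been absorbed into ${\cal K}_\tau(\nu;N,\Delta t)$), so a Chebyshev-type argument gives concentration of ${\cal V}_\tau(\nu)$ about its mean. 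On ${\cal D}(L)$, since $K$ is fixed and each ${\cal B}^{(k)}$ is smooth and nonzero at $(\nu_0^{(k)},\tau_0^{(k)})$, ${\cal V}_\tau(\nu)$ stays within a bounded multiplicative band of $\overline{\cal V}$; combined with the ${\cal O}(1/N)$ weight of this region, its contribution to any discrepancy is negligible.

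Third comes the substitution. In the central component $\frac{1}{2}{\cal V}_\tau(\nu)\chi^2_2$ one simply writes $\frac{1}{2}\overline{\cal V}\chi^2_2$. In the non-central component the scale changes to $\overline{\cal V}/2$ while the non-centrality parameter remains ${\cal L}_\tau(\nu;N,\Delta t)B_\tau^2(\nu)/\overline{\cal V}$ --- note it does not cancel, because only the variance normalization is being replaced, the quantities ${\cal L}_\tau(\nu)$ and $B_\tau^2(\nu)$ being kept; this reproduces exactly Eqn~\eqref{mixture2} from Eqn~\eqref{mixture1}. Mixing with the weights $(1-\rho_{N,\Delta t},\rho_{N,\Delta t})$ then yields the stated law.

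Finally, the main obstacle I anticipate is bookkeeping rather than any genuinely new estimate: one must verify that the perturbation of the mixture distribution induced by ${\cal V}_\tau(\nu)\mapsto\overline{\cal V}$ is of the same (or smaller) order as the errors already tolerated in Corollary~\ref{ConcAmbSurfII} --- the $O(1/{\cal L}_\tau(\nu))$ term in the ``near a centre'' regime and the Gaussian-approximation error for $\widehat A_\tau(\nu)$ as an aggregate of correlated weighted $\chi^2_1$'s --- so that the identity in Eqn~\eqref{mixture2} holds asymptotically in precisely the same sense as the Proposition. Tracking which error term dominates in the ``near'' versus ``far'' regimes, and confirming that the slow variation of ${\cal V}_\tau(\nu)$ away from the $K$ centres is genuinely uniform, is the one place where care is needed; everything else is direct substitution from Theorem~\ref{ConcAmbSurf} and Corollary~\ref{ConcAmbSurfII}.
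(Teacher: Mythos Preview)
Your proposal is correct and follows the same route as the paper: Corollary~\ref{distoffAmb} is obtained from the Proposition (Eqn~\eqref{mixture1}) by replacing the coefficient-dependent ${\cal V}_\tau(\nu)$ with a representative constant $\overline{\cal V}$. The paper, however, treats this as essentially a one-line observation---the sentence immediately preceding the corollary, ``this quantity will normally take a typical value of $\overline{\cal V}$,'' is the entire justification offered, and no separate proof is given. Your concentration argument (defining $\overline{\cal V}$ as an average over the random draw, splitting into ${\cal D}(L)$ and its complement, and bounding the perturbation by the errors already present in Corollary~\ref{ConcAmbSurfII}) is a reasonable way to make this rigorous, but it goes well beyond what the paper actually does; the corollary is intended as a modelling simplification rather than a theorem with controlled error, so the bookkeeping you anticipate as the ``main obstacle'' is not carried out in the paper at all.
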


This gives a distribution which is a simple mixture distribution of a central $\chi^2_2$ and a non-central $\chi^2_2$. It is still heavily overparameterised even if the variances are similar across $(\nu,\tau)$ because in addition to the two parameters $\rho_{N,\Delta t}$ and $\overline{\cal V}$, the sequence $\{B_{\tau}^2(\nu)\}$ {\em is not known}. If however $\rho_{N,\Delta t}$ is reasonably small, than the number of $B_{\tau}^2(\nu)$ that we have to learn is limited. We notice that this falls into the framework of the modelling adopted by \cite{Johnstone2004} of ``needles and haystacks''. Some coefficients
$\{\widehat{A}_{\tau}^{(N)}(\nu)\}$ have a non-zero mean but we do not know {\em which} coefficients these are, or how frequent they are. Our ability to learn the AF
will depend on how $\rho_{N,\Delta t}$ changes with $N$ and $\Delta t$. Realistically we assume that there {\em is} growing complexity in the time series and that this {\em is} quantified by  $\rho_{N,\Delta t}$. 
Unlike the case of nonparametric regression ({\em e.g.} \cite{Johnstone2004}) we do not have a set of $N$ uncorrelated coefficients and $N$ observations, but normally have to estimate $2N\times 2N-1$ correlated coefficients from $N$ observations. 
\begin{proposition}{Correlation Structure of $A_{\tau}(\nu)$\label{corrstruct}}\\
If $\{X_n\}$ is a white noise process then the covariance of the AF is
{\small \begin{eqnarray*}
\cov\left\{\widehat{A}_{\tau_1}(\nu_1),\widehat{A}_{\tau_2}(\nu_2)\right\}
&=&\left[\frac{(-1)^{j_1-j_2}e^{-2\pi \Delta t\max(\nu_1,\nu_2)) i (\tau_1-\tau_2)}-e^{2\pi \Delta t \max(\nu_1,\nu_2) i (\tau_1-\tau_2)}}{i2\pi  (\tau_1-\tau_2)}\right]\\ &&\times(N-\max(\tau_1,\tau_2))\sigma^4e^{i\pi (j_2-j_1)}\delta_{j_1,j_2}
\end{eqnarray*}}
\end{proposition}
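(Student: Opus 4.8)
The plan is to view each empirical ambiguity coefficient as a quadratic form in the complex Gaussian vector $\{Z_n\}$ and to evaluate the covariance of two such forms with Isserlis' theorem. Writing $\widehat{A}_{\tau_i}(\nu_i)=\Delta t\sum_{n}\widehat{M}_{\tau_i}(t_n)e^{-2i\pi\nu_i t_n}$ over the admissible index range of $\widehat{M}_{\tau_i}$, one obtains $\cov\{\widehat{A}_{\tau_1}(\nu_1),\widehat{A}_{\tau_2}(\nu_2)\}=\Delta t^2\sum_{n_1,n_2}\cov\{Z_{n_1}Z_{n_1-\tau_1}^\ast,Z_{n_2}Z_{n_2-\tau_2}^\ast\}\,e^{-2i\pi(\nu_1 t_{n_1}-\nu_2 t_{n_2})}$. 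For the analytic signal of white noise $\{Z_n\}$ is proper (the relation sequence $R_\tau$ vanishes, so ${\cal R}_\tau(\nu)\equiv0$), hence of the three pairings in Isserlis' theorem the relation pairing is zero and the pairing $\E\{Z_{n_1}Z_{n_1-\tau_1}^\ast\}\E\{Z_{n_2-\tau_2}Z_{n_2}^\ast\}=M_{\tau_1}M_{\tau_2}^\ast$ is exactly cancelled by the subtracted product of means. Only the cross pairing survives, giving $\cov\{Z_{n_1}Z_{n_1-\tau_1}^\ast,Z_{n_2}Z_{n_2-\tau_2}^\ast\}=M_{n_1-n_2}\,M_{n_1-n_2-(\tau_1-\tau_2)}^\ast$, so the covariance of the EMAF reduces to a double sum of this product against the two Fourier phases; this step holds for any harmonizable proper process and only the explicit form of $M_\tau$ is white-noise specific.

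Next I would insert the autocovariance of the analytic signal of white noise, which is driven by a one-sided flat spectrum: $M_m=4\sigma^2\Delta t\int_{0}^{1/(2\Delta t)}e^{2i\pi fm\Delta t}\,df$ (not compactly supported, decaying like $1/m$). Using this integral representation for both factors and changing variables to $m=n_1-n_2$ and the running index $n_2$, the sum over $n_2$ --- whose range, for fixed $m$, has length $N-\max(\tau_1,\tau_2)-|m|$ and carries an endpoint-dependent phase --- produces a Dirichlet kernel in $\nu_1-\nu_2$. On the DFT grid $\nu_j=j/(N\Delta t)$ this collapses, to leading order, to $(N-\max(\tau_1,\tau_2))\,\delta_{j_1,j_2}$, and the residual endpoint phases are exactly the factors $(-1)^{j_1-j_2}$ and $e^{i\pi(j_2-j_1)}$ in the statement.

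With $\nu_1=\nu_2$ forced by the Kronecker delta, the remaining lag sum $\sum_m M_m M_{m-(\tau_1-\tau_2)}^\ast e^{-2i\pi\nu_1 m\Delta t}$ is evaluated by turning the $m$-sum into a Dirac comb: it enforces $f=f'+\max(\nu_1,\nu_2)$, restricting $f'$ to the overlap $[0,\,1/(2\Delta t)-\max(\nu_1,\nu_2)]$ of the two one-sided spectral supports --- which is where the $\max(\nu_1,\nu_2)$ in the statement originates --- and the residual integral $\int_{0}^{1/(2\Delta t)-\max(\nu_1,\nu_2)}e^{2i\pi f'(\tau_1-\tau_2)\Delta t}\,df'$ is elementary; its $1/(\tau_1-\tau_2)$ antiderivative and the two endpoint evaluations (the upper endpoint $1/(2\Delta t)$ producing the $e^{i\pi(\tau_1-\tau_2)}$-type sign together with the $e^{-2i\pi\Delta t\max(\nu_1,\nu_2)(\tau_1-\tau_2)}$ phase) assemble into the bracketed ratio in the statement. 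Collecting the factor $\sigma^4$, the overlap length $N-\max(\tau_1,\tau_2)$, the Kronecker delta and this bracket gives the claimed covariance.

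I expect the main obstacle to be bookkeeping rather than anything conceptual: carrying the finite, $\tau$-dependent summation ranges correctly through the change of variables so that the overlap length and every endpoint phase factor come out exactly as written, and justifying that the Dirichlet kernel may be replaced by $(N-\max(\tau_1,\tau_2))\,\delta_{j_1,j_2}$ on the Fourier grid to the stated order (this is exact when $\tau_1=\tau_2=0$). A secondary care point is that at this level of approximation one works with the ideal analytic signal, so one should check that the discrete-Hilbert-transform corrections are of lower order. The interpretation is that the EMAF behaves like a periodogram --- essentially uncorrelated across the nonstationary-frequency grid, with only slow $1/(\tau_1-\tau_2)$ leakage along the lag axis --- which is exactly what licenses the near-independence invoked later in the Empirical Bayes step.
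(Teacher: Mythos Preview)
Your argument is correct, but the route differs from the paper's. The paper never touches the time-domain covariance $\cov\{Z_{n_1}Z_{n_1-\tau_1}^\ast,Z_{n_2}Z_{n_2-\tau_2}^\ast\}$ at all: it starts from the spectral representation of the EMAF as a quadruple integral over $(f,\nu',f',\nu'')$ involving $\E\{dZ(\nu'+f)dZ^\ast(f)dZ^\ast(\nu''+f')dZ(f')\}$, applies Isserlis' theorem \emph{to the increments $dZ$}, and then uses the white-noise identity $\E\{dZ(f_1)dZ^\ast(f_2)\}=\sigma^2\delta(f_1-f_2)\,df$ to collapse two of the four integrals immediately. The two surviving Dirichlet kernels $D_{N-|\tau_1|}(\Delta t(\nu'-\nu_1))$ and $D_{N-|\tau_2|}(\Delta t(\nu'-\nu_2))$ are then handled by the change of variable $\nu'=\xi/[\Delta t(N-\max(\tau_1,\tau_2))]$, which on the grid $\nu_i=j_i/[\Delta t(N-\max(\tau_1,\tau_2))]$ makes the $\xi$-integral an exact sinc-orthogonality relation producing $\delta_{j_1,j_2}$; the remaining one-dimensional $f$-integral over $[\max(\nu_1,\nu_2,0),\min(1/(2\Delta t)-\max(\nu_1,\nu_2),1/2)]$ is evaluated directly and gives the bracket. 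Your approach is the time-domain dual: you push Isserlis down to the $Z_n$ level, insert the integral representation of $M_m$ for analytic white noise, and only then recover a Dirichlet kernel from the $n_2$-sum. Both are valid; the paper's route is shorter here because the white-noise delta function in frequency kills two integrations in one stroke and the Dirichlet kernels are already present from the finite-sample DFT, whereas you have to rebuild them from the time sum. Your version has the advantage of making the structure $\sum_m M_m M_{m-(\tau_1-\tau_2)}^\ast e^{-2i\pi\nu m\Delta t}$ explicit, which connects more transparently to the periodogram analogy you mention. Note also that the paper's natural grid is $j/[\Delta t(N-\max(\tau_1,\tau_2))]$ rather than your $j/(N\Delta t)$; on that grid the Dirichlet collapse is exact up to the ${\cal O}(1)$ remainder, not merely leading order.
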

\begin{proof}
See Appendix C.
\end{proof}
If we fix $\tau_1=\tau_2=\tau$ then if we take $\nu_k=\frac{k}{N-|\tau|}$ we retrieve uncorrelated coefficients. If $\tau$ is ${\cal O}(1)$, we could take $\nu_k=\frac{k}{N}$. For white noise we approximately obtain a set of $N$ uncorrelated random variables by taking $\{
\widehat{A}_{\tau}(\nu_k)\}_k$. This argument can be repeated for any order one value of $\tau$ we should choose. The full set of coefficients $\{
\widehat{A}_{\tau}(\nu_k)\}_{k,\tau}$ is almost like many redundant collections of uncorrelated random variables. It is reasonable with the model of an ambiguity sparse process that the full set of coefficients $\{
\widehat{A}_{\tau}(\nu_k)\}_{k,\tau}$ are approximately like ${\cal O}(N)$ collections of uncorrelated coefficients; otherwise the procedure is like a composite likelihood method. Other choices ({\em e.g.} $\nu_k=\frac{k}{2N}$) will also produce such redundant collections.

\section{Inference Methods}\label{infmethod}
The determination of the statistical properties of the 
EMAF in Section \ref{methmoments} has now put us in the fortunate position where we can propose inference methods. For ease we now {\em model} each individual $B_{\tau}(\nu)$. 
To remove explicit dependence on ${\cal L}$ we now define
$
{\cal B}_{\tau}(\nu)=\sqrt{{\cal L}_{\tau}(\nu)}B_{\tau}(\nu).$
 We model the normalized AF to be
\begin{eqnarray}
{\cal B}_{\tau}(\nu)&\sim& N^{C}\left(0,\sigma^2_{\cal L}\right)=
\cQ_{\tau}(\nu)e^{-i\phi_\tau(\nu)},
\end{eqnarray}
and note that $\sigma^2_{\cal L}$ increases directly with $\cN_{\tau}(\nu)$. $N^{C}\left(\cdot,\cdot\right)$
refers to the complex-proper Gaussian distribution \cite{Schreier}.
We may wish to put our belief regarding time-frequency structure into this prior, but notice that modelling {\em local} time and frequency $(\nu,\tau)$ {\em directly} controls the smoothness of the global time and frequency, rather than modelling global time and frequency directly, see for example work by \cite{Godsill}, which constrains the time-varying spectrum to sparsity. Furthermore for some (very few) locations the above prior is not reasonable: {\em e.g.} $A_0(0)$ will always be real-valued, but the effects of modelling this single coefficient incorrectly are negligible as it will anyway always be retained.  
We define the likelihood in terms of the parameters
\begin{equation}
\bpsi=\begin{pmatrix} \overline{\cal V} &\rho & \sigma^2_{\cN}
\end{pmatrix}^T,\quad \bm{\mathcal{B}}=\{{\mathcal{B}}_{\tau}(\nu_k) \}_{(\tau,k)}
,\quad \bm{\mathcal{Q}}=\{{\mathcal{Q}}_{\tau}(\nu_k) \}
,\quad
{\mathbf{A}}=\{\widehat{A}^{(N)}_{\tau}(\nu_k)\}.
\end{equation}
These vectors are defined over indices $\tau \in [-(N-1),(N-1)]$ and $\nu_k=\frac{k}{2N\Delta t}$ for $k=-N,\dots,N$. 
We write
$
\widehat{A}^{(N)}_{\tau}(\nu_k)=\widehat{Q}_{\tau}(\nu_k)e^{-i\vartheta_\tau(\nu)},\quad \mathbf{Q}=\{\widehat{Q}_{\tau}(\nu_k) \},$\footnote{To calculate $\widehat{A}^{(N)}$ we need to know ${\mathcal K}_\tau(\nu)$ and therefore $\delta^{(k)}$. For most processes $\delta^{(k)}=\frac{1}{2}$ is a reasonable choice.}
and define the ``ambiguity likelihood''.
\begin{definition}{Ambiguity Likelihood}\\
We define the ambiguity likelihood for the ambiguity surface $\widehat{A}_{\tau}(\nu)$ to be
{\tiny
\begin{alignat}{1}
\label{eq:amblik}
L(\bpsi,\bm{\mathcal{B}},{\mathbf{A}})
&=\prod_{\tau,k}\left\{
\frac{(1-\rho)}{\pi\overline{\cal V}} e^{-\frac{\left|\widehat{A}^{(N)}_{\tau}(\nu_k)\right|^2}{\overline{\cal V}}}\delta\left({\cal B}_{\tau}(\nu_k)\right)+
\frac{\rho}{\pi^2\overline{\cal V}\sigma_\cN^2} e^{-\frac{\left|\widehat{A}^{(N)}_{\tau}(\nu_k)-{\cal B}_{\tau}(\nu_k)\right|^2}{\overline{\cal V}}}e^{-\frac{\left|{\cal B}_{\tau}(\nu_k)\right|^2}
{\sigma^2_\cN}}\right\}
\end{alignat}}
\end{definition}
We have coupled the sparsity of the ambiguity surface between the real and imaginary components. 
Secondly \eqref{eq:amblik} is like a true likelihood for a subset of $N$ coefficients if we have chosen $(\nu_k,\tau)\in \Upsilon_1$ where $\Upsilon_1$ is chosen to break up the correlation, averaged over the choices of coefficients we could have taken, i.e. the disjoint sets $\{\Upsilon_l\}$ such that $\cup_l \Upsilon_l$ gives the full set of $\tau$ and $\nu_k$ that we have calculated, see Proposition \ref{corrstruct}. 
\begin{definition}{Ambiguity Marginal Likelihood}\\
We define the ambiguity marginal likelihood to be:
\begin{alignat}{1}
\label{eq:amblikmarg}
L(\bpsi;{\mathbf{A}})&=\prod_{k,\tau}
\left\{\frac{2(1-\rho)}{\overline{\cal V}}\hat{Q}_{\tau}(\nu_k) e^{-\frac{\hat{Q}^{2}_{\tau}(\nu_k)}{\overline{\cal V}}}+\frac{2\rho}{\overline{\cal V}+\sigma^2}\hat{Q}_{\tau}(\nu_k)
 e^{-\frac{\hat{Q}^{2}_{\tau}(\nu_k)}{\overline{\cal V}+\sigma^2}}\right\}.
\end{alignat}
\end{definition}
This form is derived in Appendix D by integrating over the other variables.
We define $
\widehat{\bpsi}=\arg_{\bpsi}\max \left\{L(\bpsi;{\mathbf{A}})\right\}.
$
This maximum can be found by numerical optimization methods.
We can now find posterior estimators of ${\cal B}_\tau(\nu),$ following
\cite{Johnstone2004,Wang}, and using the posterior median estimator (this has the advantage of corresponding to hard thresholding for certain ranges of the parameters). 
We wish to calculate the posterior distribution of the ambiguity coefficient, given the observed ambiguity coefficient.
We start from \eqref{eq:amblik} for a single coefficient and get:
{\small \begin{eqnarray}
\label{joint}
f\left({\cal B}_{\tau}(\nu_k),\widehat{A}_{\tau}(\nu_k)\right)&=&
\frac{(1-\rho)}{\pi\overline{\cal V}} e^{-\frac{\left|\widehat{A}^{(N)}_{\tau}(\nu_k)\right|^2}{\overline{\cal V}}}\delta\left({\cal B}_{\tau}(\nu_k)\right)+
\frac{\rho}{\pi^2\overline{\cal V}\sigma_\cN^2} e^{-\frac{\left|\widehat{A}^{(N)}_{\tau}(\nu_k)-{\cal B}_{\tau}(\nu_k)\right|^2}{\overline{\cal V}}}e^{-\frac{\left|{\cal B}_{\tau}(\nu_k)\right|^2}
{\sigma^2_\cN}}.
\end{eqnarray}}
Calculating
${\cal B}_{\tau}(\nu)|\widehat{A}_{\tau}(\nu)$ does not make sense, as we are then thinking of ${\cal B}_{\tau}(\nu)$ with the phase information of $\widehat{A}_{\tau}(\nu),$
and we shall instead estimate ${\cal Q}_{\tau}(\nu)$ conditional on observing $\widehat{Q}_{\tau}(\nu)$, subsequently shrinking $\hat{Q}_{\tau}(\nu)$ irrespective of the phase distribution of $\widehat{A}_\tau(\nu)$.
The posterior distribution of the ambiguity coefficient at a given location is given by:
\begin{eqnarray}
\nonumber
f({\mathcal Q}_{\tau}(\nu)|\hat{Q}_{\tau}(\nu))&=&\left(1-\rho_{\tau}^{({\mathrm{post}})}(\nu)
\right)\delta\left({\mathcal{Q}}_{\tau}(\nu)\right)+\rho_{\tau}^{({\mathrm{post}})}(\nu)
f_1\left({\mathcal Q}_{\tau}(\nu)|\hat{Q}_{\tau}(\nu) \right)
\\
\nonumber
f_1\left({\mathcal Q}|\hat{Q}\right)&=&\frac{2}{\overline{\cal V}\lambda}
e^{-\lambda\frac{\hat{Q}^2}{ \overline{\cal V}}}
e^{-\frac{{\cal Q}^2}{\lambda\overline{\cal V}}}{\cal Q}
J_0\left(-i\frac{2{\cal Q}\hat{Q}}{\overline{\cal V}}\right)\approx N\left(\lambda \hat{Q},\frac{1}{2}\lambda\overline{\cal V}\right),\quad
\lambda=\frac{\sigma^2}{\sigma^2+\overline{\cal V}}.
\end{eqnarray}

\begin{proof}
See Appendix D, 
and the posterior probability is given by
\[\rho_{\tau}^{({\mathrm{post}})}(\nu)=\rho \frac{e^{-\frac{\hat{Q}^2_\tau(\nu)}{\sigma^2+\overline{\cal V}}}}{\sigma^2+\overline{\cal V}}\left(\rho \frac{e^{-\frac{\hat{Q}^2_\tau(\nu)}{\sigma^2+\overline{\cal V}}}}{\sigma^2+\overline{\cal V}}+(1-\rho) \frac{e^{-\frac{\hat{Q}^2_\tau(\nu)}{\overline{\cal V}}}}{\overline{\cal V}}\right)^{-1}.\]
\end{proof}
With this distribution a convenient estimator is the posterior median, see \cite{Johnstone2004}, and we take
$
\eta_{\tau}(\nu_k)=\Phi\left(-\sqrt{2\widehat{\lambda}}\frac{\hat{Q}}{\widehat{\cal{V}}^{1/2}}\right).$
The posterior median estimator solves 
\begin{eqnarray}
F(\cQ^{(\mathrm{median})}|\hat{Q})&=& \frac{1}{2}
\label{blahis}
\approx (1-\rho^{(\mathrm{post})}_{\tau}(\nu_k;\hat{Q}))+\rho^{(\mathrm{post})}_{\tau}(\nu_k)
\Phi\left(\frac{|{\cal Q}|^{(\mathrm{median})}-\lambda \hat{\cal Q}}{\sqrt{\frac{\lambda}{2}}{\overline{\cal V}^{1/2}}}\right),
\end{eqnarray}
and
$\zeta_{\tau}(\nu_k)=\rho^{(\mathrm{post})}_{\tau}(\nu_k)\eta_{\tau}(\nu_k).$
If $\zeta_{\tau}(\nu_k)\le \frac{1}{2}$ then the posterior odds are in favour of a zero-valued coefficient and the coefficient's magnitude is estimated as zero, otherwise the median is found from \eqref{blahis}.
Thus with
\begin{eqnarray}
\Theta_{\tau}(\nu_k)\approx\left\{\begin{array}{lcr}
0 & {\mathrm{if}} & \zeta_{\tau}(\nu_k)\le\frac{1}{2}\\
\lambda+\Phi^{-1}\left(1-\frac{1}{2\rho^{(\mathrm{post})}_{\tau}(\nu_k)}\right)
\frac{\sqrt{\lambda}\widehat{\cal V}}{\sqrt{2}\left|A^{(N)}_{\tau}(\nu_k) \right|}
& {\mathrm{if}} & \zeta_{\tau}(\nu_k)>\frac{1}{2}
\end{array} \right.
\label{rule}
\end{eqnarray}
the posterior median estimator is therefore
\begin{equation}
\label{postmed}
\widehat{A}^{({\mathrm{eb}})}_{\tau}(\nu_k)=\Theta_{\tau}(\nu_k)\widehat{A}_{\tau}(\nu_k).
\end{equation}
The estimator in \eqref{postmed} is a shrinkage estimator, as long as $\left|\Theta_{\tau}(\nu_k)\right|\le 1$ (which will be the case see {\em e.g.} \cite{Johnstone2004}), which converts to a smoothing operation in the dual-time domain:
\begin{eqnarray}
\label{ebby}
\widehat{M}^{({\mathrm{eb}})}_{\tau}(t_n)=\Delta t\sum_{m=0}^{N-1} \theta_{\tau}(s_m-t_n)\widehat{M}_{\tau}(s_m).
\end{eqnarray}
Eqn. \eqref{ebby} directly mirrors \eqref{kernel:est}, except $\{\omega_\tau(t_n)\}$ has been replaced by the {\em data dependent shrinkage} procedure  $\{\theta_\tau(t_n)\}$. Thus we can interpret \eqref{postmed} as a data-driven smoothing of the raw sample moments
$\{\widehat{M}_\tau(t_n)\}$, using the estimated sequence $\{\theta_\tau(t_k) \}$. To investigate the smoothing function more clearly we note that
$
\theta_\tau(t_n)=\int_{-\frac{1}{2\Delta t}}^{\frac{1}{2\Delta t}}\Theta_\tau(\nu)e^{i2\pi \nu t}\,d\nu,
$
and we see that for each fixed value of $\tau$ we define a different smoothing kernel. 
\begin{figure}[!htbp]
\begin{center}
\begin{minipage}[]{0.39\textwidth}
\centering
\includegraphics[width=\textwidth]{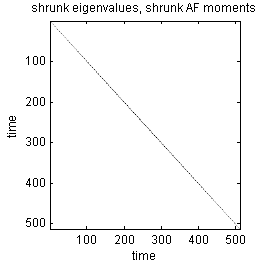}
(a)
\end{minipage}
\begin{minipage}[]{0.39\textwidth}%
\centering
\includegraphics[width=\textwidth]{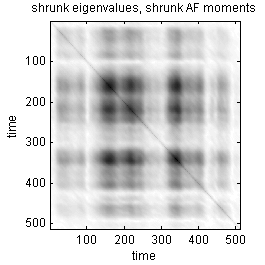}
(b)
\end{minipage}
\begin{minipage}[]{0.39\textwidth}
\centering
\includegraphics[width=\textwidth]{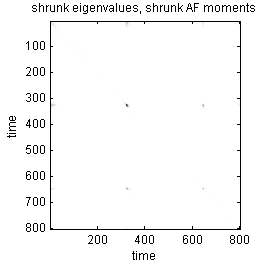}
(c)
\end{minipage}
\begin{minipage}[]{0.39\textwidth}
\centering
\includegraphics[width=\textwidth]{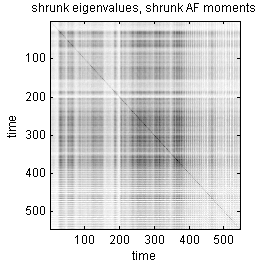}
(d)
\end{minipage}
\end{center}
\caption{Estimated covariance structure of the signals. These are estimated from the shrunk ambiguity function, and corrected to non-negative eigenvalues. Subplot (a) is showing the simulated mixture of a cyclostationary and locally stationary process, (b) the simulated blood flow signal, (c) the chirping bat signal, and (d) the oceanography signal.
\label{fig3}}
\end{figure}
As $\rho$ decreases the probability of thresholding a larger portion increases, and so most of the ambiguity domain is zeroed out. If $\rho$ is anticipated to be a decreasing function in $N$ the procedure will be consistent.
See Figure \ref{fig9} for an example of four different kernels we retrieve for four different examples. The first subplot shows an example of an aggregation of a cyclostationary and a locally stationary plot. The estimators limit the support in $\tau$, and shows seasonality in $t$ as the cyclostationary features are reinforced. Subplot (b) shows a chirping signal, where it is advantageous to use all lags at the same global time, and similar features are replicated in the meddy signal (subplot (d)). For the chirping acoustic signal there is clear selectivity in both global and local time.

Furthermore note that $\rho$ is the probability that we will come up with a non-zero contribution. It is not equal to the area of ${\cal D}$ divided by ${\cal L}_{\tau}(\nu)$ because we are not sure that all of ${\cal D}$ is actually supported. 
It may also become necessary to relax \eqref{mixture2} to allow for different distributions of variances.  We can instead (if necessary) take a mixture model with $P$ components and taking $\{\overline{\cal V}_p\}_{p=1}^P,$
allowing for stronger and weaker signals.

\section{Estimating the Time-Varying Spectrum \& Valid Second Order Forms}
For stationary time series the spectrum is an inherently important analysis tool, see {\em e.g.} \cite{PercivalWalden1993}. It shows the distribution of energy of a time series across frequencies, thus characterising the time series, permits inference via the Whittle likelihood, allows us to check if a posited autocovariance is valid, and we may even forecast future values directly from the frequency domain, see {\em e.g.} \cite{Haywood1997}. Eqn 
\eqref{moments1} defined a time-varying spectrum by Fourier transforming the local moment function. Whilst this appears to be a self-evidently simple extension of the usual spectrum, it fails to satisfy a number of desiderata, see {\em e.g.} the full discussion in \cite{loynes} or \cite{cohen}, such as positivity. It can even be proved that any bilinear representation of the spectral content of the signal {\em must} fail some of the desiderata that are required for a time-varying spectrum. We start by defining an estimator of the 
time-varying moments of $Z_t$ by
\begin{eqnarray}
\widehat{M}_\tau^{({\mathrm{ebay}})}(t_n)&=&\int_{-\frac{1}{2\Delta t}}^{\frac{1}{2\Delta t}}\widehat{A}_\tau^{({\mathrm{ebay}})}(\nu)e^{2i\pi \nu t_n}\,d\nu
\approx \frac{1}{2N\Delta t}\sum_{k=-N}^{N-1}
\widehat{A}_\tau^{({\mathrm{ebay}})}(\nu_k)e^{2i\pi \nu_k t_n},\quad  \nu_k=\frac{k}{2N\Delta t}.
\end{eqnarray}

In section \ref{fourfaces} we discussed various definitions of the time-varying spectrum corresponding to special cases of\footnote{The indicator function can be removed and $M_\tau(t_n^{\alpha})$ interpolated instead.}
\begin{equation}
\label{bilinear}
S^{\alpha,\omega}(t_n,f)=\Delta t^2 \sum_{\tau=-(N-1)}^{N-1}\sum_{k=0}^{N-1} \omega_{\tau}\left(t_k-t_n\right) M_\tau (t_n^{\alpha})e^{-2i\pi \tau f}
I\left(t_n^\alpha \in {\mathbb{Z}}\right).
\end{equation}
By  changing the definition of $\{\omega_{\tau}\left(t\right)\}$ and $\alpha$ we will obtain different Fourier representations of the sequence $M_\tau (t_n^{\alpha})$. The utility of any particular bilinear representation will depend on the analysis problem in question. There is at times in signal processing confusion as to the relation of Eqn \eqref{smooth} (or Eqns \eqref{kernel:est} \& \eqref{ebby}, chosen to smooth $M_\tau(t_n)$) with \eqref{bilinear} (or Eqn \eqref{SFT_rep}, defining a different time-frequency representation). The reason for this is simple; the equations appear to be performing the identical action. Because any bilinear representation ({\em e.g.} \eqref{bilinear}) $S^{\alpha_1,\omega_1}(t_n,f)$ can be written as a convolution of any other $S^{\alpha_2,\omega_2}(t_n,f)$ (see {\em e.g.} \cite[p.~187]{Flandrin99}), at times any bilinear representation is viewed as an estimator of any other, see {\em e.g.} \cite[p.~299]{Scharf2005}. However in signal processing the sequence $\{\omega_{\tau}\left(t_k\right)\}$
is  chosen to improve the visual appearance of $S^{\alpha,\omega}(t_n,f)$ rather than considering the estimation properties of a sample version -- unlike $\varpi_\tau(t_k)$ and $\theta_\tau(t_k).$ \cite{Sayeed95} is a notable exception, but no practical estimation schemes are proposed in that paper, as knowledge of the higher order moments are required for implementing these ideas. We have {\em separated} the estimation of $M_\tau (t_n^{\alpha})$ from the (mathematical) choice of representation of this object, once estimated.

What we have failed to discuss is the validity of any given estimated covariance sequence $\{\widehat{M}_\tau (t_n^{\alpha})\}$. If we arrange $M_\tau (t_n^{\alpha})$ to represent the covariance of the vector
${\mathbf{Z}}=\begin{pmatrix} Z_{t_1} & \dots Z_{t_N}
\end{pmatrix},$
it follows that
$
\bm{\Lambda}=\E\left\{{\mathbf{Z}}{\mathbf{Z}}^H\right\}=\bm{\Lambda}
\left(\{M_\tau (t_n^{\alpha})\}\right),
$
should be a valid covariance matrix, {\em e.g.} all its eigenvalues must be non-negative. For a stationary time series this can also be ensured by
requiring the Fourier transform of the autocovariance sequence to be non-negative, and it is commonly considered a desideratum also for the time-varying spectrum. For a nonstationary series it is unrealistic to assume that a time-varying spectrum will be non-negative. 

The raw method of moments estimator we started with, {\em e.g.}
$\widehat{\bm{\Lambda}}={\mathbf{Z}}{\mathbf{Z}}^H=\bm{\Lambda}
\left(\{\widehat{M}_\tau (t_n^{\alpha})\}\right),$
has only one eigenvalue corresponding to the total energy ${\mathbf{Z}}^H{\mathbf{Z}}$, and all other eigenvalues are identically zero. We create an alternative estimator of $\bm{\Lambda}$ from $\widehat{M}_\tau^{({\mathrm{ebay}})}(t_n)$, this yielding $\widehat{\bm{\Lambda}}^{({\mathrm{ebay}})}$.
Unfortunately $\widehat{\bm{\Lambda}}^{({\mathrm{ebay}})}$ does {\em not} necessarily have positive eigenvalues, nor is it in general sparse, where the latter could be used to ensure positivity. 
\begin{figure}[!htbp]
\begin{center}
\begin{minipage}[]{0.39\textwidth}
\centering
\includegraphics[width=\textwidth]{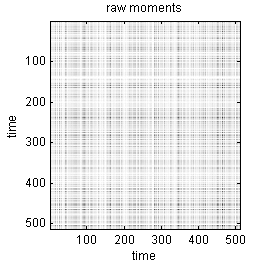}
(a)
\end{minipage}
\begin{minipage}[]{0.39\textwidth}
\centering
\includegraphics[width=\textwidth]{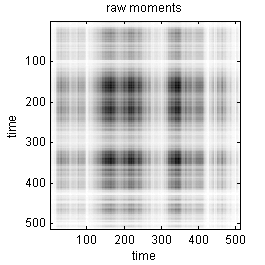}
(b)
\end{minipage}
\begin{minipage}[]{0.39\textwidth}
\centering
\includegraphics[width=\textwidth]{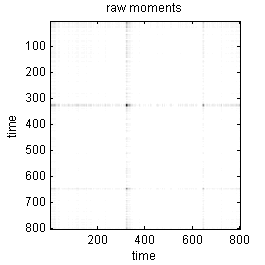}
(c)
\end{minipage}
\begin{minipage}[]{0.39\textwidth}
\centering
\includegraphics[width=\textwidth]{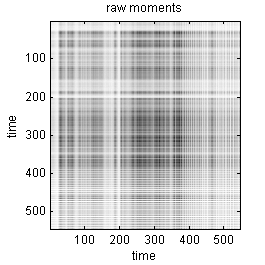}
(d)
\end{minipage}
\end{center}
\caption{Raw moments estimates of the covariance structure of the signals. Subplot (a) is showing the simulated mixture of a cyclostationary and locally stationary process, (b) the simulated blood flow signal, (c) the chirping bat signal, and (d) the oceanography signal. 
\label{fig4}}
\end{figure}
We may ``correct'' the estimator by two possible methods, as follows. First we calculate the eigendecomposition
$
\widehat{\bm{\Lambda}}^{({\mathrm{ebay}})}=\mathbf{U} \bm{\Upsilon}\mathbf{U}^H,$
and then we correct it using one of these two methods:
$\widehat{\bm{\Lambda}}^{({\mathrm{ebay}})}_1=\mathbf{U} \left[\bm{\Upsilon}-\left\{\min_j {\upsilon}_j \right\}\mathbf{I}\right]\mathbf{U}^H$ or
$
\widehat{\bm{\Lambda}}^{({\mathrm{ebay}})}_2=\mathbf{U} \left[\bm{\Upsilon}\right]_+\mathbf{U}^H,
$
where $\{{\upsilon}_j\}$ are the eigenvalues of $\widehat{\bm{\Lambda}}^{({\mathrm{ebay}})}$,
and $\left[\bm{\Upsilon}\right]_+$ is thresholding the entries of the diagonal matrix at 0. Both estimators ({\em e.g.} $\widehat{\bm{\Lambda}}^{({\mathrm{ebay}})}_1$
and $\widehat{\bm{\Lambda}}^{({\mathrm{ebay}})}_2$) are valid covariance matrices whilst $\widehat{\bm{\Lambda}}^{({\mathrm{ebay}})}$ is not. The estimate of the time-frequency spectrum produced by either of these matrices is however very similar in most cases. Using 
$\widehat{\bm{\Lambda}}^{({\mathrm{ebay}})}$
without any adjustment is like estimating a variance to be negative, and is therefore not to be recommended. 

\section{Examples}
We consider both simulated and real data examples. The main purpose of this section is to show the performance of our proposed method, and using any particular choice of representation, as well as a given value of $\alpha$. We stress again that in our opinion there is no optimal choice of representation or $\alpha$, but rather each representation has clear advantages and disadvantages for different processes, once the local moments of the time series have been estimated.

Let us start with an extremely sparse signal, defined by
$X_t=Y_t^{(1)}+Y_t^{(2)}$, $Y_t^{(k)}=\sigma_t^{(k)}\sum_{l=0}^5 w_l^{(k)} \varepsilon_{t-l}$, $t=0,\dots,N-1$ for $k=1,2,$
with
$w_t^{(1)}=\begin{pmatrix} 1 &  0.33 & 0.266 & 0.2 & 0.133 & 0.066\end{pmatrix}^T,$\\
$w_t^{(2)}=\begin{pmatrix}1 & 0.5 & 0 & 0.3 & 0 & 0.1\end{pmatrix}^T$
and
$\sigma_t^{(1)}=\frac{1}{4}+(\frac{t}{512})(1-\frac{t}{512}),$
$\sigma_t^{(2)}=4\left|\sin(2\pi\times 0.09t )\right|.$
The first of these two processes is a locally stationary process and the second a cyclostationary process. Their aggregation is {\em neither} locally stationary {\em nor} cyclostationary at the sampling we are looking at the signal. We simulate the signal to be length $512$.
This signal was considered using universal thresholding in the ambiguity domain by  \cite{Hindberg2009}, which need not produce a valid estimator. When analyzing it with the mixture model the estimated $\rho$ is very small; $\widehat{\rho}=4.6\times 10^{-5}$, a number  corresponding to about 50 pixels in the redundant representation. A plot of the estimated AF is given in Fig. \ref{fig2}(a), and the estimated moments are plotted in \ref{fig3}(a). The AF is in this instance very sparse indeed with only some contributions near the origin and the cyclostationary frequencies $0.18=2\times 0.09$. The difference between the raw and estimated moments ({\em e.g.} \ref{fig4}(a) vs \ref{fig3}(a)) is marked. We are here using one of the valid estimators of the covariance sequence, namely 
$\widehat{\bm{\Lambda}}_2^{({\mathrm{ebay}})}.$ Looking at raw characteristics of the renormalized real and imaginary part of the AF clearly fails to indicate mixture components, see Figure \ref{fig5}(a). This is because of the very high degree of sparsity of the AF. The sample Wigner distribution is much too noisy to be useful, see Figure \ref{fig6}(a). Smoothing using the method outlined in this paper keeps the high-frequency cyclostationary component, see Figure \ref{fig6}(b). 
The spectrogram fails to represent the cyclostationary component, see Figure \ref{fig6}(c). No matter what sophisticated inference procedure we would use on the raw spectrogram this procedure would clearly fail to retrieve the cyclostationary component. Because this is a simulated signal we can look at the normalized error of the proposed estimator, versus the normalized error of the raw covariance estimator -- see Figure \ref{fig8}(a)--(b). Comparing the estimated eigenvalues with the truth (Figure \ref{fig8}(c)) shows that the ambiguity domain shrinkage is regularizing the eigenvalues substantially, and the negative of the eigenvalues have here been set to zero to make the estimator valid. Figure \ref{fig8}(d) shows how extremely noisy the raw moments are, compared with the estimated moments (down at the very bottom of the plot). As a final point of interest we show $\{\widehat{\theta}_\tau(t)\}$ in Figure \ref{fig9}(a). This demonstrates how the estimated ``smoothing filter'' is reinforcing cyclical patterns with the right period, but shrinking most possible correlations.
\begin{figure}[!htbp]
\begin{center}
\begin{minipage}[]{0.36\textwidth}
\centering
\includegraphics[width=\textwidth]{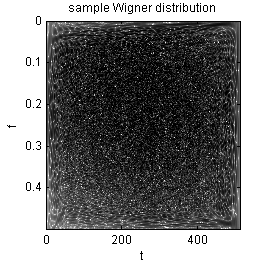}
(a)
\end{minipage}
\begin{minipage}[]{0.36\textwidth}
\centering
\includegraphics[width=\textwidth]{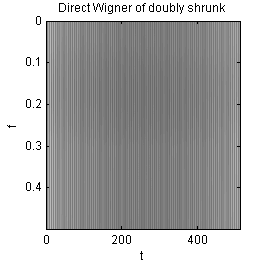}
(b)
\end{minipage}\\
\begin{minipage}[]{0.36\textwidth}
\centering
\includegraphics[width=\textwidth]{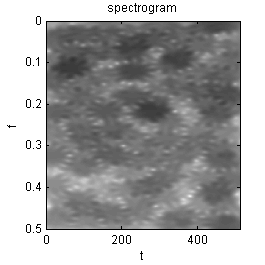}
(c)
\end{minipage}
\begin{minipage}[]{0.36\textwidth}
\centering
\includegraphics[width=\textwidth]{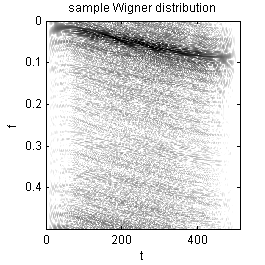}
(d)
\end{minipage}
\begin{minipage}[]{0.36\textwidth}
\centering
\includegraphics[width=\textwidth]{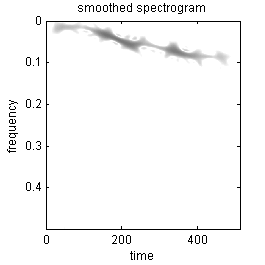}
(e)
\end{minipage}
\begin{minipage}[]{0.36\textwidth}
\centering
\includegraphics[width=\textwidth]{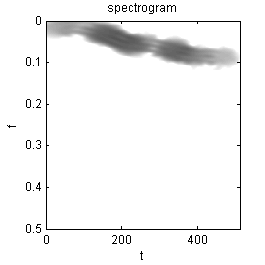}
(f)
\end{minipage}
\end{center}
\caption{Time-frequency representations of some of the sample signals. Subplot (a) is the raw Wigner distribution on a dB scale of the simulated mixture of a cyclostationary and locally stationary process, (b) is the smoothed Wigner distribution of the moments estimated from Ambiguity domain and (c) is the raw spectrogram from the data. Subplot (d) is the raw Wigner distribution on a dB scale of the simulated blood flow signal, (e) is the spectrogram of the moments estimated from Ambiguity domain and (f) is the raw spectrogram from the data. 
\label{fig6}}
\end{figure}
\begin{figure}[!htbp]
\begin{center}
\begin{minipage}[]{0.36\textwidth}
\centering
\includegraphics[width=\textwidth]{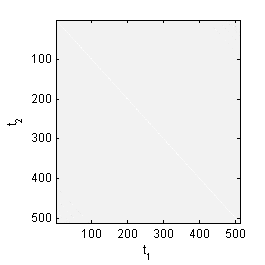}
(a)
\end{minipage}
\begin{minipage}[]{0.36\textwidth}
\centering
\includegraphics[width=\textwidth]{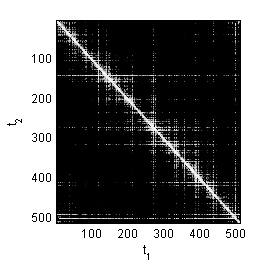}
(b)
\end{minipage}\\
\begin{minipage}[]{0.36\textwidth}
\centering
\includegraphics[width=\textwidth]{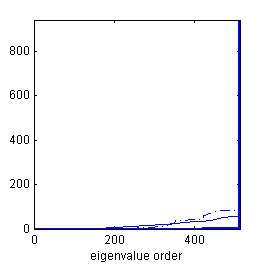}
(c)
\end{minipage}
\begin{minipage}[]{0.36\textwidth}
\centering
\includegraphics[width=\textwidth]{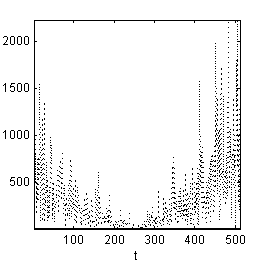}
(d)
\end{minipage}
\end{center}
\caption{Comparison of estimation error for a single realization of the aggregation of the cyclostationary and locally stationary process. The first shows the normalized estimation error using the EBAYES method combined with correcting the eigenvalues, the second the normalized estimation error using the raw moments. Subplot (c) shows the eigenvalues of these two methods. The solid thick line of one nonzero eigenvalue is the sample moments, the solid thin line the theoretical eigenvalues and the dot dashed is using the EBAYES method combined with correcting the eigenvalues. Subplot (d) shows the normalized estimation error for a single row of the matrix, where the EBAYES method is seen as superimposed over the $x$-axis, and the dotted line is the raw moments.
\label{fig8}}
\end{figure}

The second simulated example is a modified version of the simulated signal analysed in \cite{Olhede2003c}. The signal has simply been reduced in frequency range and subsampled, but takes the form:
\begin{equation}
Z_n=\Delta t \sum_{l=-M}^M\nu_{n-k} h_{k,n}+\eta_n,
\end{equation}
where $\{\nu_n\}$ and $\{\eta_n\}$ are independent white noise processes, $M$ is a positive integer, with $\{h_{k,n}\}_k$ a set of sequences defined for each value of $n$.
This signal has a strong oscillatory pattern with chirping period, due to the blood flow being basically ``forward''
during systole and ``reverse'' during diastole. In this action a number of frequencies are temporarily visited. This oscillatory structure is reinforced by \ref{fig2}(b) showing a sloping linear structure. Because of the oscillatory structure of the signal it shows presence in most of the time-time domain -- {\em cf} \ref{fig3}(b), but these are smoother versions of \ref{fig4}(b), again guaranteed to be positive semi-definite sequences by correcting eigenvalues. There is a clear two-population mixture in the renormalized data, {\em cf} Figure \ref{fig5}(b), and the estimated probability of belonging to the signal component in the mixture is 0.086. Looking at the raw Wigner distribution (Fig. \ref{fig6}(d)),
the spectrogram (Fig. \ref{fig6}(f)) and the smoothed representation of the estimated moments using the shrunken AF (Fig. \ref{fig6}(e), again using \eqref{bilinear} for nicer visual characteristics with $\omega_\tau(t)$
corresponding to an appropriate combination of Hermite windows, see \cite{Daubechies}). We see again how learning the smoothness of the data from the AF can substantially improve the estimation. The time domain ``smoothing filter'' is now less sparse but has an intrinsic width in $t$, {\em cf} Figure \ref{fig9}(b). 

The next signal is a portion of a recording from a Pipistrellus bat
\footnote{see www.londonbats.org.uk}. The full recording is quite long -- over  59,417 time points, and we focus at a section towards the end of the recording. Fig. \ref{fig2}(c) shows the sparsity of the signal in the ambiguity domain, which would not be well approximated if limited to a small box near the origin. The two estimates of the covariance are given in Fig \ref{fig3} and \ref{fig4} where highly localized events in time are detected. The quantile/quantile- (qq-) plot (\ref{fig5}(c)) shows two populations, and $\widehat{\theta}=0.21,$ which is quite high. The raw Wigner plot is very noisy, and we prefer to here show the Rihazceck distribution of the signal (see Figure \ref{fig7}(a) and (b)), corresponding to a raw Fourier transform of the time-shifted moments with no additional windowing for better representation. The spectrogram loses precision, 
see Figure \ref{fig7}(c). The actual smoothing filter is quite sparse, but informative, see Figure \ref{fig9}(c), highlighting the cyclostationary nature of the signal.

\begin{figure}[!htbp]
\begin{center}
\begin{minipage}[]{0.36\textwidth}
\centering
\includegraphics[width=\textwidth]{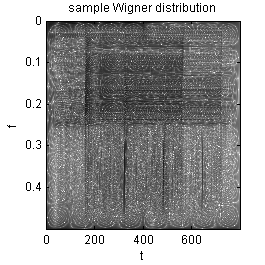}
(a)
\end{minipage}
\begin{minipage}[]{0.36\textwidth}
\centering
\includegraphics[width=\textwidth]{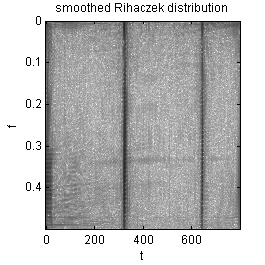}
(b)
\end{minipage}
\\
\begin{minipage}[]{0.36\textwidth}
\centering
\includegraphics[width=\textwidth]{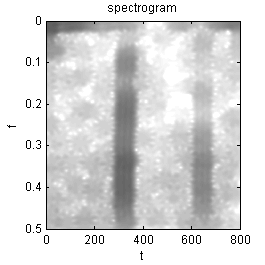}
(c)
\end{minipage}
\begin{minipage}[]{0.36\textwidth}
\centering
\includegraphics[width=\textwidth]{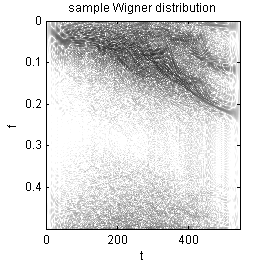}
(d)
\end{minipage}
\begin{minipage}[]{0.36\textwidth}
\centering
\includegraphics[width=\textwidth]{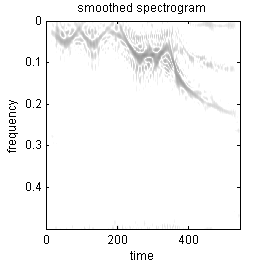}
(e)
\end{minipage}
\begin{minipage}[]{0.36\textwidth}
\centering
\includegraphics[width=\textwidth]{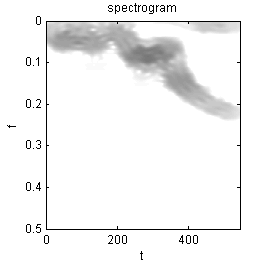}
(f)
\end{minipage}
\end{center}
\caption{Time-frequency representations of some of the sample signals. Subplot (a) is the raw Wigner distribution on a dB scale of the pipistrellus signal, (b) is the spectrogram of the moments estimated from Ambiguity domain and (c) is the raw spectrogram from the data. Subplot (d) is the raw Wigner distribution on a dB scale of the oceanographic signal, (e) is the spectrogram of the moments estimated from Ambiguity domain and (f) is the raw spectrogram from the data. 
\label{fig7}}
\end{figure}
The final signal is from an application in oceanography, see \cite{Richardson1}.
We propose to analyze experimental data corresponding to the velocity from a Lagrangian drifter. There are more components present in the data than merely the signal from a vortex (see {\em e.g.} \cite{Lilly10}). 
We plot the  estimated AF and covariance in Figures \ref{fig2}(d) and \ref{fig3}(d). Again this is a highly oscillatory signal correlated over long time intervals. The qq-plot of the AF (Fig \ref{fig5}(c)) now almost breaks the 2-component model in favour of a 3-component model, and $\theta$ is relatively high, namely $0.31$.
The raw Wigner distribution \ref{fig7}(d) is very noisy, and it is hard to make out the presence of distinct components. We look at both the spectrogram and the smoothed version of the Wigner distribution (again smoothing with Hermite windows) (Figs \ref{fig7}(f) and
\ref{fig7}(e)) and see that whilst the Wigner distribution has some issues with interference in the representation (at about $f\approx 0.11$), the localization of the instantaneous frequency path is much improved compared to the spectrogram. These paths are used to define summaries of the data (see {\em e.g.} \cite{Lilly10}). The temporal smoothing filter uses much of the local structure to improve estimation.

\section{Discussion}
The aim of this paper was to introduce a new class of nonstationary signals, and propose appropriate inference procedures for their second order properties. This was done by introducing ambiguity sparse processes, assumed to consist of a collection of high intensity regions in the ambiguity domain. Inference was implemented directly in the ambiguity domain, and estimators were transferred back into the time domain, so that they could be corrected into valid estimators of second order structure, an action that is {\em not} equivalent to requiring the time-frequency distribution to be real and positive. In general for an arbitrary harmonizable process the sinusoids will not diagonalize the covariance matrix of the observed sample, and so positivity of the ``spectrum'' {\em cannot} be a requirement to specify a valid process.
The estimators we propose of second order structure can however still be represented in terms of a time-varying spectrum, using any choice of such an object that is suitable for the process in question, once a valid covariance has been estimated. In practice we would first estimate the time-varying moments, correct them into valid estimators, and then subsequently choose from possible time-varying spectral representations. In our experience the correction has little effect on the pictorial impression of most time-varying spectra, as this de-facto seems to add a positive constant to the time-varying spectrum. Most summaries in the time-frequency plane compare {\em relative} magnitude, making this uniform level change of little interest.

In statistics interest has focused on various procedures using the spectrogram, or variants thereof ({\em e.g.} \cite{Ombao2001} or \cite{Dahlhaus1997}),
and this has a number of desirable properties such as positivity of the  spectrum, but also a number of shortcomings, see \cite{cohen}, mainly in terms of what processes can be treated well by such a framework. We have not solved the (intractable) problem of producing a time-varying spectral representation for an arbitrary nonstationary process, but nor is it reasonable to expect to do so.

A common complaint in statistics is that computing time-varying spectra easily degenerates into making ``pretty pictures'', and is not really an important inference problem in its own right. However in many practical problems summaries of time series are defined directly from their time-varying spectrum, we mention in particular oceanography \cite{Lilly2010b}, the analysis of biomedical signals, {\em e.g.} \cite{Unser,DeVille,Cranstoun}, and various branches of physics. To be able to obtain good summaries, the estimate of the time-frequency representation must itself be sufficiently good and the representation must be chosen appropriately, not smoothing out important features of interest. A great weakness of  existing methods is the necessity of assuming many smoothness properties of the covariance of the signal before  estimation. The difference between white noise and an interesting signal is exactly structure and concentration: and allowing a more wide range of possible structure can help us detect otherwise less easily characterized behavior. For stationary time series estimation was enabled because we assumed that the sinusoids corresponded to the eigenvectors of the covariance matrix, and so the positivity of the spectrum was required for a process to be valid. We reiterate that for an arbitrary harmonizable process there is no reason why this should ``nearly'' be the case.

Comparing with other methods that use sparsity to estimate time-varying spectra, such as \cite{Flandrin2010,Godsill}, is that we assume sparsity in the ambiguity domain rather than in the time-frequency domain. Furthermore we have provided a stochastic model for the types of signals where this inference method will be appropriate. By thresholding or correcting the eigenvalues of the eigendecomposition of the estimated covariance matrix, our estimated matrix {\em is} guaranteed to be a valid covariance matrix, another problem with many existing methods.
Excessive model flexibility is a curse; we cannot anticipate that it is possible to estimate any arbitrary nonstationary process. We relax the straight jacket of excessive smoothness so that we can at least estimate a larger class of nonstationary processes to encompass more realistic data sequences. In many applications as the sampling rate is matched to the actual bandwidth of the observed phenomenon, excessive smoothing will never recover the phenomena of interest, and more of the structure of the signal needs to be modelled and utilized, as otherwise several important features of the data will be smoothed out. 
\section*{Acknowledgement}
The author gratefully acknowledges the financial support from the EPSRC via EP/I005250/1 and many enlightening discussions with Dr Heidi Hindberg, Norut, about nonstationary processes.

\setcounter{section}{0}
\renewcommand{\thesection}{{\Alph{section}}}
\renewcommand{\thesubsection}{\Alph{section}.\arabic{subsection}}
\renewcommand{\thesubsubsection}{\Alph{section}.\arabic{subsection}.\arabic{subsubsection}}
\renewcommand{\theequation}{\Alph{section}-\arabic{equation}}
\section{Proof of Theorem 1 \label{AppendixA}}
\subsection{Expectation of the EMAF}
We start by noting that (see \cite[p.2384, eqn.~4]{Hindberg2009}), but adjusted to a sampling period not necessarily set to one:
\begin{alignat}{1}
\E\left\{\widehat{A}(\nu,\tau]\right\}&=
\int_{0}^{\frac{1}{2\Delta t}}\int_{-f}^{\frac{1}{2\Delta t}-f} e^{i\pi(N-1+\tau)\Delta t(\nu'-\nu)}D_{N-|\tau|}\left(\Delta t (\nu'-\nu)\right) e^{2i\pi f \tau \Delta t}  S(\nu',f)\;d\nu'\,df,
\end{alignat}
with the usual definition of the scaled Dirichlet kernel of \cite[p.~102]{PercivalWalden1993}:
\begin{equation}
D_{N}\left(\nu\right)=\frac{\sin(\pi N \nu)}{\sin(\pi \nu)}.
\end{equation}
By definition the dual frequency spectrum can be written in terms of the AF as:
\[S(\nu,f)=\Delta t\sum_{\tau=-\infty}^{\infty} A_{\tau}(\nu) e^{-2i\pi f \tau\Delta t}.\]
We then have that:
\begin{alignat}{1}
\label{eq:eAF}
\E\left\{\widehat{A}(\nu,\tau]\right\}&=
\int_{-\frac{1}{2\Delta t}}^{\frac{1}{2\Delta t}}\Delta t\sum_{\tau'=-\infty}^{\infty}A_{\tau'}(\nu')
D_{N-|\tau|}\left(\Delta t (\nu'-\nu)\right) e^{i\pi(N-1+\tau)\Delta t(\nu'-\nu)}e_{\tau'-\tau}(\nu') \;d\nu',
\end{alignat}
where the sequence $e_{\tau}$ is now defined by:
\begin{eqnarray}
\nonumber
e_{\tau}(\nu)&=&\int_{\max(-\nu,0)}^{\min(\frac{1}{2\Delta t}-\nu,\frac{1}{2\Delta t})}e^{-2i\pi f \tau \Delta t}\,df\\
&=&\left(\frac{1}{2\Delta t}-|\nu|\right) e^{-i\pi\tau\Delta t(\frac{1}{2\Delta t}-\nu)}{\mathrm{sinc}}\left(\pi\tau\Delta t(\frac{1}{2\Delta t}-|\nu|) \right),
\label{eqn:dirichlet}
\end{eqnarray}
with (as usual) ${\mathrm{sinc}}\left(f\right)=\sin(f)/f$. Note that $ e_{\tau}(\nu)$ therefore has an implicit and homogeneous dependence on $\Delta t$. 
We can write the expectation of the EMAF in terms of the AF, see \eqref{eq:eAF}. 
We see that the theoretical support of the ambiguity function, given by the points where $|A_{\tau'}(\nu')|$ is non-negligible is ``smeared out'' when sampled at a fixed $\Delta t$ and $N$ by the two kernel functions $D_{N-\tau}\left(\nu'-\nu\right)$ and $e_{\tau'-\tau}(\nu')$. The effects of this convolution must be further investigated when modelling the structure of the AF.
The value of Eqn \eqref{eq:eAF} sampled at unit sampling intervals for popular models like stationary processes, or uniformly modulated processes, is normally ${\cal O}(N-|\tau|)$, see \cite{Hindberg2009}, but depends on the model for $A_\tau(\nu)$. 
{\tiny \begin{alignat}{1}
\nonumber
\E\left\{\widehat{A}_\tau(\nu)\right\}&=\int_{-\frac{1}{2\Delta t}}^{\frac{1}{2\Delta t}}\Delta t\sum_{\tau'=-\infty}^{\infty}A_{\tau'}(\nu')
D_{N-|\tau|}\left(\Delta t (\nu'-\nu)\right) e^{i\pi(N-1+\tau)\Delta t(\nu'-\nu)}e_{\tau'-\tau}(\nu') \;d\nu'\\
\label{eq:exp1}
&\overset{(1)}{=}\int_{-\frac{1}{2\Delta t}}^{\frac{1}{2\Delta t}}\int_{-\infty}^{\infty} A_{u' /\Delta t}(\nu')
D_{N-|\tau|}\left(\Delta t (\nu'-\nu)\right) e^{i\pi(N-1+\tau)\Delta t(\nu'-\nu)}e_{(u'-u)/\Delta t}(\nu') \;d\nu'\,du'+{\cal O}\left(\Delta t|A_\tau(\nu)| \right).\end{alignat}}
If we are far from the $K$ singularities then we find we can Taylor expand the smooth function $A_{u' /\Delta t}(\nu')$ in a Taylor series
\begin{equation}
A_{u' /\Delta t}(\nu')=A_{u /\Delta t}(\nu)+\bigtriangledown A_{u /\Delta t}(\nu) \cdot \begin{pmatrix} \nu-\nu'\\
(u-u')/\Delta t
\end{pmatrix}+\dots .
\end{equation}
We take $\nu'=\nu+\xi_1/T$ (to catch the contributions of the Dirichlet kernel) and $u'=u+\xi_2\Delta t$, where $T(\tau)=(N-|\tau|)\Delta t$ (suppressing the $\tau$ dependence of this variable as appropriate).
\\
For large $N$ and small $\Delta t$ we therefore find
\begin{eqnarray}
\nonumber
D_{N-|\tau|}\left(\Delta t (\nu'-\nu)\right)&=&\frac{\sin(\pi \xi_1)}{\pi \xi_1/N}+{\cal O}\left(\frac{1}{N} \right)\\
\nonumber
e^{i\pi(N-1+\tau)\Delta t(\nu'-\nu)}e_{(u'-u)/\Delta t}(\nu')&=&e^{i\pi(N-1+\tau)\frac{\xi_1}{N}}\left(\frac{1}{2\Delta t}-|\nu+\xi_1/T|\right) e^{-i\pi(u'-u)(\frac{1}{2\Delta t}-\nu-\xi_1/T)}\\
\nonumber
&&{\mathrm{sinc}}\left(\pi(u'-u) (\frac{1}{2\Delta t}-|\nu+\xi_1/T|) \right)\\
\nonumber
&=&e^{i\pi(N-1+\tau)\frac{\xi_1}{N}} e^{-i\pi\xi_2 (\frac{1}{2}-\nu\Delta t-\xi_1/N)}\frac{\sin\left(\pi\xi_2  (\frac{1}{2}-|\nu\Delta t+\xi_1/N|) \right)}{\pi \xi_2 \Delta t}.
\end{eqnarray}
Therefore the integrals become
\begin{eqnarray}
\nonumber
\E\left\{\widehat{A}_\tau(\nu)\right\}&=&\int_{-\infty}^{\infty}\int_{-\infty}^{\infty} \left(A_{u /\Delta t}(\nu)+\bigtriangledown A_{u /\Delta t}(\nu) \cdot \begin{pmatrix} \xi_1/N\\
\xi_2 \Delta t
\end{pmatrix}+\dots \right)\\
\nonumber
&&
\left(\frac{\sin(\pi \xi_1)}{\pi \xi_1/N}+{\cal O}\left(\frac{1}{N} \right) \right)\left[e^{i\pi(N-1+\tau)\frac{\xi_1}{N}} e^{-i\pi\xi_2 (\frac{1}{2}-\nu\Delta t)}\left(\frac{\sin(\pi \xi_2(\frac{1}{2}-\nu|\Delta t|))}{\pi \xi_2\Delta t} \right)+\dots\right] \;\frac{d\xi_1}{T}\,d\xi_2\Delta t+\dots\\
\nonumber
&=&A_{\tau}(\nu)\int_{-\infty}^{\infty}\int_{-\infty}^{\infty}\frac{\sin(\pi \xi_1)}{\pi \xi_1}e^{i\pi \xi_1} e^{-i\pi\xi_2 (\frac{1}{2}-\nu\Delta t)}\frac{\sin(\pi \xi_2(\frac{1}{2}-\nu|\Delta t|))}{\pi \xi_2 \Delta t} d\xi_1 d\xi_2+\dots\\
&=&\frac{1}{\Delta t}A_{\tau}(\nu)\int_{-\infty}^{\infty}\int_{-\infty}^{\infty} I(\omega_1<1)I(\omega_2\in[\max(-\nu \Delta t,0),\min(\frac{1}{2}-\nu \Delta t,\frac{1}{2})])d^2{\bm{\omega}} \nonumber
\\
&=&A_{\tau}(\nu) \times 1 \times \left(\frac{1}{2\Delta t}-|\nu| \right)+\dots .
\label{first_exp}
\end{eqnarray}
Thus the expectation of this estimator far away from the singularities depends both on the sampling rate $\Delta t$ and how close we are to the Nyquist frequency.
For the points close to the singularities we instead use the model
and the concentration of the AF in order to derive its expectation. We use
\begin{equation}
A_{\tau}^{(k)}(\nu)= \frac{{\cal B}^{(k)}(\nu,\tau/N)}
{\left[\Delta t^2\left(\nu-\nu_0^{(k)}\right)^2+\frac{\left(\tau-\tau_0^{(k)}\right)^2}{N^2}\right]^{\delta^{(k)}}},
\label{AFmodelb}
\end{equation}
where ${\cal B}^{(k)}(\nu,u)$ is a twice differentiable function. We write  $(\nu_0^{(k)},\tau_0^{(k)})=
(\nu_0^{(k)},u_0^{(k)}N)$.
We can now rewrite the integral using a change of variables of 
\[\nu=\nu_0^{(k)}+\frac{\lambda_1}{T(\tau)},\quad \nu'=\nu_0^{(k)}+\frac{\xi_1}{T(\tau)},\quad \lambda_2=\tau-\tau_0^{(k)}=\Delta t (u-u_0^{(k)}),\quad \xi_2=\tau'-\tau_0^{(k)},\]
therefore ending up with (for some suitable choice of $1>\gamma>0$)
\begin{alignat}{1}
\E\left\{\widehat{A}_\tau(\nu)\right\}&\overset{(2)}{=} \sum_k\left(\frac{1}{2\Delta t}-|\nu|\right)\int_{-N^{\gamma}}^{N^{\gamma}}
\int_{-(N-|\tau|)^{\gamma}}^{(N-|\tau|)^{\gamma}}\frac{{\cal B}^{(k)}\left(\nu+\frac{\xi_1-\lambda_1}{T(\tau)},\frac{\tau}{N}+\frac{\xi_2-\lambda_2}{N}\right)}
{\left[\left(\frac{\xi_1}{N-|\tau|}\right)^2+\left(\frac{\xi_2}{N}\right)^2\right]^{\delta^{(k)}}}
D_{N-|\tau|}\left(\Delta t \left[\frac{\xi_1-\lambda_1}{T(\tau)}\right]\right) \nonumber\\
\nonumber
& e^{i\pi(N-1+\tau)\Delta t\left[\frac{\xi_1-\lambda_1}{T(\tau)}\right]}e^{-i\pi(\xi_2-\lambda_2)\Delta t(\frac{1}{2\Delta t}-(\nu+\frac{\xi_1-\lambda_1}{T}))
}\frac{\sin\left(\pi(\xi_2-\lambda_2)\Delta t(\frac{1}{2\Delta t}-|\nu+\frac{\xi_1-\lambda_1}{T(\tau)}|) \right)}{\pi(\xi_2-\lambda_2)\left(\frac{1}{2}-|\nu\Delta t|\right)}
\\
&\label{eq:exp12}
\frac{d\xi_1}{(N-|\tau|)\Delta t}\,d\xi_2\Delta t+{\cal O}(N^{2\delta^{(k)}-1} ).\end{alignat}
A term is added in \eqref{eq:exp1}(1) because there is an error due to the Riemann approximation to the sum. Subsequently (step \eqref{eq:exp12}(2)) there is a change of variables.  Let $\widehat{A}_\tau(\nu)=\sum_k\widehat{A}_\tau^{(k)}(\nu)$, separating the components from each ``island'' of contribution.
Thus the component renormalized ambiguity is, ignoring terms because we take only the first parts of a Taylor series of ${\cal B}^{(k)}(\nu,\tau/N)$, as well as the Dirichlet kernel:
\begin{alignat}{1}
\E\left\{\frac{\widehat{A}_\tau^{(k)}(\nu)}{{A}_\tau^{(k)}(\nu)}\right\}
\nonumber
&\approx \left(\frac{1}{2\Delta t}-|\nu|\right)
\int_{-N^{\gamma}}^{N^{\gamma}}\int_{-(N-|\tau|)^{\gamma}}^{(N-|\tau|)^{\gamma}}
\frac{\left[\lambda_1^2+\lambda_2^2 \right]^{\delta^{(k)}}}
{\left[\xi_1^2+\left(\frac{N-|\tau|}{N}\right)^2\xi_2^2\right]^{\delta^{(k)}}}
\frac{\sin\left(\pi(\xi_1-\lambda_1)\right)}{\pi (\xi_1-\lambda_1)} e^{i\pi (\xi_1-\lambda_1)}\\
\nonumber
& e^{-i\pi(\xi_2-\lambda_2)(\frac{1}{2}-\Delta t\nu)
}\frac{\sin\left(\pi(\xi_2-\lambda_2)(\frac{1}{2}-|\Delta t\nu|) \right)}{\pi(\xi_2-\lambda_2)(\frac{1}{2}-|\Delta t\nu|)}
d\xi_1d\xi_2.
\end{alignat}
Note that 
changing the limits in the integral from $-N^{\gamma}$ to $-\infty$ makes no appreciable difference as the contributions that have been added behave like:
\begin{eqnarray}
I&\sim& \int \int r^{-2\delta}\frac{1}{r\cos(\theta)-\lambda_1}\frac{1}{r\sin(\theta)-\lambda_1}\,d\theta \,r dr\sim \int_{N^{\gamma}}^N r^{-2\delta-1}\;dr=\frac{-1}{2\delta}\left[r^{-2\delta}\right]^{N}_{N^{\gamma}}\sim N^{-2\delta \gamma},
\end{eqnarray}
thus we need $\gamma>0$. We also need to convince ourselves that the integral is really ${\cal O}(1)$ as claimed, and let
{\tiny \begin{eqnarray*}
{\cal J}_1(\delta;\bm{\lambda})&=&(\frac{1}{2\Delta t}-|\nu|)\int_{-\infty}^{\infty}\int_{-\infty}^{\infty}
\frac{\left[\lambda_1^2+\lambda_2^2\right]^{\delta^{(k)}}}
{\left[\xi_1^2+\xi_2^2\right]^{\delta^{(k)}}}
\frac{\sin\left(\pi(\xi_1-\lambda_1)\right)}{\pi (\xi_1-\lambda_1)} e^{i\pi (\xi_1-\lambda_1)}e^{-i\pi(\xi_2-\lambda_2)(\frac{1}{2}-\nu\Delta t)
}\frac{\sin\left(\pi(\xi_2-\lambda_2)(\frac{1}{2}-|\Delta t\nu|) \right)}{\pi(\xi_2-\lambda_2)(\frac{1}{2}-|\Delta t\nu|)(\frac{1}{2}-|\Delta t\nu|)}
d\xi_1 d\xi_2.
\end{eqnarray*}}
It follows ${\cal J}_1$ is convergent if $0<\delta^{(k)}<\frac{1}{2},$ which can easily be shown directly by splitting the range of integration, and bounding each component (using the long-range decay), by implementing the integration in one variable after the other and then using the long range decay. The special case of $\delta^{(k)}=\frac{1}{2}$ is also fine, and this can be seen from using the fact that the 2-D Fourier transform\footnote{We need to define the Fourier transform carefully as $\xi_1$ has been a frequency variable and $\xi_2$ a time variable. The FT therefore has the opposite sign in the complex exponential for the two variables.}. The FT with canonical variable $\bm{\omega}$ of 
\[\frac{\sin\left(\pi \xi_1\right)}{\pi \xi_1} e^{i\pi \xi_1}e^{-i\pi \xi_2(\frac{1}{2}-\nu\Delta t)
}\frac{\sin\left(\pi\xi_2(\frac{1}{2}-|\Delta t\nu|) \right)}{\pi\xi_2},\] is a band-pass filter\footnote{You would expect $\omega_1$ to range in $[-1/2,1/2]$ but because we observe $t$ in $[0,T]$ this would not be the case.} $I(\omega_1 \in [0,1])$ and $I(\omega_2 \in [\max(-\Delta t\nu,0),\min(\frac{1}{2}-\Delta t\nu,\frac{1}{2})])$. We therefore see that the renormalized variables $\xi_1$ has a canonical variable restricted so that the FT of $\nu$ is restricted to $[0,T]$ which is exactly the interval we have observed the data over (and as the data is nonstationary we cannot go beyond that interval). The second variable $\xi_2$ is restricted in frequency to a range that is limited because of the act of making the signal analytic.
Using \cite[6.561.14]{gradshteyn} with $\omega=\sqrt{\omega_1^2+\omega_2^2}$
as the radius of the Fourier variable and requiring\footnote{We numerically approximate finite length sums using infinite domain integrals, however this makes the FT periodic in $\omega_1$.} $\delta^{(k)}\in\left(\frac{1}{4},\frac{3}{4}\right)$
{\small\begin{eqnarray}
\nonumber
{\cal F}\left\{\left[\xi_1^2+\xi_2^2\right]^{-\delta^{(k)}} \right\}&=&Q(\bm{\omega})=
\int_{-\infty}^{\infty} \int_{-\infty}^{\infty} \xi^{-2\delta^{(k)}}e^{-2i \pi \bm{\xi}^T\bm{\omega}}\,d^2\bm{\xi}
=2\pi \int_0^{\infty} \xi^{-2\delta^{(k)}}J_0(2\pi \xi \omega)\xi d\xi
\\
&=&
2\pi \times\left\{
\begin{array}{lcr}
2^{1-2\delta^{(k)}}
\left(2\pi \omega\right)^{-2+2\delta^{(k)}}\frac{\Gamma\left(1-\delta^{(k)} \right)}{\Gamma\left(\delta^{(k)}\right)}& {\mathrm{if}} & \frac{1}{4}<\delta^{(k)}<\frac{1}{2}\\
\left(2\pi \omega\right)^{-1}& {\mathrm{if}} & \delta^{(k)}=\frac{1}{2}\\
2^{1-2\delta^{(k)}}
\left(2\pi \omega\right)^{-2+2\delta^{(k)}}\frac{\Gamma\left(1-\delta^{(k)} \right)}{\Gamma\left(\delta^{(k)}\right)}& {\mathrm{if}} & \frac{1}{2}<\delta^{(k)}<1
\end{array}\right.\,{\mathrm{if}}\,\begin{array}{l}\omega_1\in[0,\frac{1}{2}]\\
\omega_2\in[0,\frac{1}{2}]
\end{array}
\end{eqnarray}}
Thus using Plancherel's theorem and the change of variable $2\pi \bm{\omega}=\tilde{\bm{\omega}}$
\footnote{We now get an extra factor of 2 from the periodicity in $\omega_1$.}
\begin{eqnarray}
\nonumber
{\cal J}_1(\delta;\bm{\lambda})&=&\frac{2\pi}{\Delta t} \left[\lambda_1^2+\lambda_2^2\right]^{\delta^{(k)}}
\int_{0}^{2\pi}\int_{2\pi\max(0,-\Delta t \nu)}^{2\pi\min(\frac{1}{2},\frac{1}{2}-|\nu|\Delta t)}2^{1-2\delta^{(k)}}
Q(\widetilde{\bm{\omega}})d^2\tilde{\bm{\omega}}\\
\nonumber
&\sim &\frac{4\pi}{\Delta t}  \left[\lambda_1^2+\lambda_2^2\right]^{\delta^{(k)}}
\int_{0}^{\infty}\int_{0}^{\infty}2^{1-2\delta^{(k)}}
\tilde{\omega}^{-2+2\delta^{(k)}}\frac{\Gamma\left(1-\delta^{(k)} \right)}{\Gamma\left(\delta^{(k)}\right)}e^{i \tilde{\bm{\omega}}^T\bm{\lambda}}\,d^2\tilde{\bm{\omega}}\\
\nonumber
&= & \frac{1}{\Delta t}\left[\lambda_1^2+\lambda_2^2\right]^{\delta^{(k)}}
\int_{0}^{\infty}2^{1-2\delta^{(k)}}
 \omega^{-1+2\delta^{(k)}}\frac{\Gamma\left(1-\delta^{(k)} \right)}
{\Gamma\left(\delta^{(k)}\right)}\frac{1}{2}J_0( \lambda \omega)\,d\omega
=\frac{1}{2\Delta t},
\end{eqnarray}
if $\delta^{(k)}\in [1/4,3/4]$. This should be compared to Eqn \eqref{first_exp}, and we see that if the AF is sufficiently concentrated we no longer have a loss of information proportional to $\frac{1}{2\Delta t}-|\nu|$.  As we are only interested in the second order structure up to proportionality, the multiplication by $\frac{1}{2\Delta t}$ does not concern us, and could easily be corrected for.
The truncation in the canonical variable of $\xi_1$ exactly implies that the time variation is (for not normalized variables) truncated onto $[0,T].$ The second truncation is in the frequency variable $f$ (once renormalized), which just ensures if $\nu>0$ that $0<\omega_2<(1/2-|\Delta t \nu|)$.

In theory we also wish to include stationary processes, as well as uniformly modulated processes in the class of investigated processes. A sample from an analytic stationary process with autocovariance sequence $\widetilde{M}_\tau$ has an Empirical Ambiguity function with expectation \cite{Hindberg2009}: 
\begin{equation}
\mu_\tau(\nu)=D_{N-|\tau|}(\Delta t \nu)\widetilde{M}_\tau e^{-i\pi \nu\Delta t(N+\tau-1)},
\end{equation}
(compared with the infinite sample $A_\tau(\nu)=\delta(\nu)\widetilde{M}_\tau$)
whilst a sample from the analytic signal of a uniformly modulated process with the Fourier transform of the analytic extension of the modulation function corresponding to $\Sigma(\nu)$ has expectation:
\begin{equation}
\mu_\tau(\nu)=\left(\frac{1}{2\Delta t}-|\nu| \right)\Sigma(\nu){\mathrm{sinc}}
\left[\left(\frac{1}{2\Delta t}-|\nu| \right)\tau\Delta t\right]e^{-i\pi \nu\Delta t(N+\tau-1)}+{\cal O}(1),
\end{equation}
compare with the infinite length sample $A_\tau(\nu)=\delta_\tau \Sigma(\nu)$.
Both these means decay away from the point $(0,0)$, if with different decay in $\nu$ and $\tau$. These can be approximated by the ellipsoidal decay model (see Eqn (8) in the paper).

\subsection{Variance of the EMAF}
An expression for the variance of a general harmonizable process is given in \cite[Eqn A-4]{Hindberg2009}, adjusted to the case of $\Delta t \neq 1$ namely:
\begin{alignat}{1}
\nonumber
\var\left\{\widehat{A}_{\tau}(\nu)\right\}+\left|\E\left\{\widehat{A}_{\tau}(\nu)\right\}\right|^2&=
\int_{0}^{\frac{1}{2\Delta t}} \int_{0}^{\frac{1}{2\Delta t}} \int_{0}^{\frac{1}{2\Delta t}} \int_{0}^{\frac{1}{2\Delta t}} \E\left\{dZ (f_1)dZ^\ast (f_2)
dZ^{\ast}(\alpha_1)dZ(\alpha_2)\right\}\\
\nonumber
&e^{i2\pi (f_1-\alpha_1)\tau \Delta t}e^{i\pi(f_1-f_2-\alpha_1+\alpha_2)(N-1-\tau)\Delta t}D_{N-|\tau|}(\Delta t(f_1-f_2-\nu))
\\ &D_{N-|\tau|}(\Delta t(\alpha_1-\alpha_2-\nu)).
\label{ta-da}
\end{alignat}
We shall simplify this expression somewhat.
We have  that with
\[f_1=\nu'+f,\;f_2=f,\; \alpha_1=\nu''+f',\quad \alpha_2=f'. \]
\begin{alignat}{1}
\nonumber
\var\left\{\widehat{A}_{\tau}(\nu)\right\}+\left|\E\left\{\widehat{A}_{\tau}(\nu)\right\}\right|^2&=
\int_{0}^{\frac{1}{2\Delta t}}\int_{-f}^{\frac{1}{2\Delta t}-f} 
\int_{0}^{\frac{1}{2\Delta t}}\int_{-f'}^{\frac{1}{2\Delta t}-f'} 
e^{i\pi(N-1+\tau)\Delta t(\nu'-\nu)}D_{N-|\tau|}\left(\Delta t(\nu'-\nu)\right) \\
\nonumber
&  e^{2i\pi f \tau \Delta t}\E\left\{ d{ Z}(\nu'+f)d{ Z}^*(f) d{ Z}^*(\nu''+f')d{ Z}(f')\right\}\\
&
e^{-i\pi(N-1+\tau)\Delta t(\nu''-\nu)}D_{N-|\tau|}\left(\Delta t (\nu''-\nu)\right) e^{-2i\pi f' \tau \Delta t}.
\label{eq:eafdef}
\end{alignat}
We assume that $\{d{ Z}(f)\}$ is a Gaussian process and then use Isserlis' theorem to obtain that
\begin{alignat*}{1}
\var\left\{\widehat{A}_{\tau}(\nu)\right\}&=
\int_{0}^{\frac{1}{2\Delta t}}\int_{-f}^{\frac{1}{2\Delta t}-f} 
\int_{0}^{\frac{1}{2\Delta t}}\int_{-f'}^{\frac{1}{2\Delta t}-f'} e^{i\pi(N-1+\tau)\Delta t(\nu'-\nu)}D_{N-|\tau|}
\left(\Delta t (\nu'-\nu)\right) e^{2i\pi f \tau \Delta t}
\\
&  \left(\E\left\{ d{ Z}(\nu'+f)d{ Z}^*(\nu''+f')\right\}\E\left\{ d{ Z}^*(f)d{ Z}(f')\right\}\right.\\
&\left.+\E\left\{ d{ Z}(\nu'+f)d{ Z}(f')\right\}\E\left\{ d{ Z}^*(f)d{ Z}^*(\nu''+f')\right\}\right)\\
&
e^{-i\pi(N-1+\tau)\Delta t(\nu''-\nu)}D_{N-|\tau|}\left(\Delta t(\nu''-\nu)\right) e^{-2i\pi f' \tau\Delta t} .
\end{alignat*}
Given we have assumed propriety of $Z(t)$ we obtain that $\E\left\{d{ Z}(\nu'+f)d{ Z}(f')\right\}\E\left\{ d{ Z}^*(f)d{ Z}^*(\nu''+f')\right\}=0$, and so the remaining term gives us:
\begin{alignat*}{1}
\var\left\{\widehat{A}_{\tau}(\nu)\right\}&=
\int_{0}^{\frac{1}{2\Delta t}}\int_{-f}^{\frac{1}{2\Delta t}-f} 
\int_{0}^{\frac{1}{2\Delta t}}\int_{-f'}^{\frac{1}{2\Delta t}-f'} 
e^{i\pi(N-1+\tau)\Delta t(\nu'-\nu)}D_{N-|\tau|}\left(\Delta t(\nu'-\nu)\right) e^{2i\pi f \tau \Delta t}\\
&  \E\left\{ d{ Z}(\nu'+f)d{ Z}^*(\nu''+f')\right\}\E\left\{ d{ Z}^*(f)d{ Z}(f')\right\}\\
&
e^{-i\pi(N-1+\tau)\Delta t(\nu''-\nu)}D_{N-|\tau|}\left(\Delta t(\nu''-\nu)\right) e^{-2i\pi f' \tau \Delta t} \\
&=\int_{0}^{\frac{1}{2\Delta t}}\int_{-f}^{\frac{1}{2\Delta t}-f} 
\int_{0}^{\frac{1}{2\Delta t}}\int_{-f'}^{\frac{1}{2\Delta t}-f'} 
e^{i\pi(N-1+\tau)\Delta t(\nu'-\nu)}D_{N-|\tau|}\left(\Delta t(\nu'-\nu)\right) e^{2i\pi f \tau\Delta t}\\
&  S(\nu'+f-\nu''-f',\nu''+f')S^*(f-f',f')\\
&
e^{-i\pi(N-1+\tau)\Delta t(\nu''-\nu)}D_{N-|\tau|}\left(\Delta t(\nu''-\nu)\right) e^{-2i\pi f' \tau\Delta t} \;d\nu'\,df\;d\nu'\,df'.
\end{alignat*}
We can re-write this in terms of the ambiguity function and redefine $\nu'$ as well as $\nu''$ to have
\begin{eqnarray*}
\var\left\{\widehat{A}_{\tau}(\nu)\right\}&=&\int_{0}^{\frac{1}{2\Delta t}}\int_{-f-\nu}^{\frac{1}{2\Delta t}-f-\nu} 
\int_{0}^{\frac{1}{2\Delta t}}\int_{-f'-\nu}^{\frac{1}{2\Delta t}-f'-\nu} 
e^{i\pi(N-1+\tau)\Delta t\nu'}D_{N-|\tau|}\left(\Delta t\nu'\right) e^{2i\pi f \tau \Delta t}\\
&&  \times\Delta t\sum_{\tau'=-\infty}^{\infty} A_{\tau'}(\nu'+f-\nu''-f')e^{-2i\pi \tau'\Delta t(\nu''+f'+\nu)}\\
&&  \times\Delta t\sum_{\tau''=-\infty}^{\infty} A_{\tau''}^*(f-f')e^{2i\pi \tau''\Delta tf'}\\
&&
e^{-i\pi(N-1+\tau)\Delta t\nu''}D_{N-|\tau|}\left(\Delta t \nu''\right) e^{-2i\pi f' \tau \Delta t} \;d\nu'\,df\;d\nu''\,df'.
\end{eqnarray*}   
We need to put in our knowledge for when $A_{\tau'}(\nu'+f-\nu''-f')A_{\tau''}^*(f-f')$ is large to simplify this expression. We start by a change of variables
\[\omega=\nu''-\nu',\quad \nu'=s,\quad f-f'=u,\quad f=b.\]
With this change of variables we have that (with the limits $\{l_p\}$ depending on the outer integral dummy variables, as well as $\nu$, and are in general complicated objects) with $T=T(\tau)=\Delta t (N-|\tau|)$:
\begin{eqnarray}
\nonumber
\var\left\{\widehat{A}_{\tau}(\nu)\right\}&=&
\int_{s=l_3}^{l_4}\int_{b=l_1}^{l_2} \int_{\omega=l_7}^{l_8}\int_{u=l_5}^{l_6}\Delta t\sum_{\tau'=-\infty}^{\infty}  \Delta t\sum_{\tau''=-\infty}^{\infty}
\\
\nonumber
&& e^{i\pi(N-1+\tau)s\Delta t}D_{N-|\tau|}\left(\Delta t s\right) e^{2i\pi b \tau \Delta t}\\
\nonumber
&&   A_{\tau'}(u-\omega)e^{-2i\pi \tau'\Delta t(s+\omega-u+b+\nu)}  A_{\tau''}^*(u)e^{2i\pi \tau''\Delta t(-u+b)}\\
\nonumber
&&
e^{-i\pi(N-1+\tau)\Delta t(s+\omega)}D_{N-|\tau|}\left(\Delta t(s+\omega)\right) e^{-2i\pi (-u+b)\tau \Delta t} \;db\,ds\;du\,d\omega.\end{eqnarray}
We implement yet several other changes of variables
\[u-\omega=\nu_0^{(l)}+\frac{\xi_l}{T},\quad  u=\nu_0^{(k)}+\frac{\nu_k}{T},\]
to focus on the locations close to the finite number of singularities
\begin{eqnarray}
\nonumber
\var\left\{\widehat{A}_{\tau}(\nu)\right\}
&=&\sum_k \sum_l\int_{s=l_3}^{l_4}\int_{b=l_1}^{l_2} \int_{\xi_k}\int_{\eta_k}\Delta t\sum_{\tau'=-\infty}^{\infty}  \Delta t\sum_{\tau''=-\infty}^{\infty}
\\
\nonumber
&& e^{i\pi(N-1+\tau)s\Delta t}D_{N-|\tau|}\left(\Delta t s\right) e^{2i\pi b \tau \Delta t}\\
\nonumber
&&   A_{\tau'}(\nu_0^{(l)}+\xi_l/T)e^{-2i\pi \tau'\Delta t(s-\nu_0^{(l)}-\xi_l/T+b+\nu)}  A_{\tau''}^*(\nu_0^{(k)}+\eta_k/T)e^{2i\pi \tau''\Delta t(-(\nu_0^{(k)}+\eta_k/T)+b)}\\
\nonumber
&&
e^{-i\pi(N-1+\tau)\Delta t(s+(\nu_0^{(k)}+\eta_k/T)-\nu_0^{(l)}-\xi_l/T)}D_{N-|\tau|}\left(\Delta t(s+(\nu_0^{(k)}+\eta_k/T)-\nu_0^{(l)}-\xi_l/T))\right)\\
&& e^{-2i\pi (-\nu_0^{(k)}-\eta_k/T+b)\tau \Delta t}\;ds \;db\,\frac{d\xi_k}{T}\,\frac{d\eta_k}{T}
+{\cal O}(1).\end{eqnarray}
As of yet we have not used the model, or made any approximations. 
We see from this expression that we only need to worry about $k=l$ (as otherwise contributions become negligible) and so we obtain for some $\gamma>0$
\begin{eqnarray*}
\nonumber
\var\left\{\widehat{A}_{\tau}(\nu)\right\}&=&
\sum_{k=0}^K \int_{s=l_3}^{l_4}\int_{b=l_1}^{l_2} \Delta t\sum_{\tau'=-\infty}^{\infty}  \Delta t\sum_{\tau''=-\infty}^{\infty}
 D_{N-|\tau|}\left(\Delta t s\right)   e^{-2i\pi \tau'\Delta t(s-\nu_0^{(l)}-\frac{\xi_k}{T}+b+\nu)} e^{2i\pi \tau''\Delta t(-(\nu_0^{(k)}+\eta_k/T)+b)}\\
\nonumber
&&
e^{-i\pi(N-1+\tau)\Delta t(\eta_k/T-\xi_k/T)} e^{2i\pi (\nu_0^{(k)}+\frac{\eta_k}{T})\tau \Delta t}\;ds \;db
\int_{T^{\gamma}l_7}^{T^{\gamma}l_8}\int_{T^{\gamma}l_5}^{T^{\gamma}l_6} A_{\tau'}(\nu_0^{(l)}+\xi_l/T)
 A_{\tau''}^*(\nu_0^{(k)}+\eta_k/T)\\
 &&D_{N-|\tau|}\left(\Delta t(s+\eta_k/T-\xi_l/T))\right)\,\frac{d\xi_k}{T}\,\frac{d\eta_k}{T}
 +{\cal O}(1).\end{eqnarray*}
 We implement a final change of variable 
of
\[\alpha=T(\tau) s,\] 
and find
\begin{eqnarray}
\nonumber
\var\{\widehat{A}_\tau(\nu) \}
 &=&\sum_k\int_{b=0}^{\frac{1}{2\Delta t}-\nu} \int_{\alpha=N^\gamma(-b-\nu)\Delta t}^{N^\gamma(\frac{1}{2\Delta t}-b-\nu)\Delta t} \sum_{\tau'=-\infty}^{\infty}  \sum_{\tau''=-\infty}^{\infty}
\\
\nonumber
&& D_{N-|\tau|}\left(\frac{\alpha}{N-|\tau|}\right)   e^{-2i\pi \tau'\Delta t(-\nu_0^{(k)}+b+\nu+\frac{\alpha-\xi_k}{T})} e^{2i\pi \tau''\Delta t(-\nu_0^{(k)}+b+\frac{\eta_k}{T})}\\
\nonumber
&&
e^{-i\pi(N-1+\tau)\Delta t(\eta_k/T-\xi_k/T)} e^{2i\pi (\nu_0^{(k)}+\frac{\eta_k}{T})\tau \Delta t}\;\frac{d\alpha}{(N-|\tau|)\Delta t} \;db
\\
&&\int_{T^{\gamma}l_7}^{T^{\gamma} l_8}\int_{T^{\gamma}l_5}^{T^{\gamma}l_6} A_{\tau'}(\nu_0^{(k)}+\xi_k/T(\tau))\nonumber
 A_{\tau''}^*(\nu_0^{(k)}+\eta_k/T(\tau))\\
 &&D_{N-|\tau|}\left(\frac{\alpha+\eta_k-\xi_k}{N-|\tau|}\right)\,\frac{d\xi_k}{N-|\tau|}\,\frac{d\eta_k}{N-|\tau|}+{\cal O}(1).
\end{eqnarray}
Then for a suitable $\gamma>0$ after implementing the integral in the variable $b$ and restricting the range of integration so that the limit of the integrals increasing in $N$ is non-negligible:
{\small \begin{eqnarray}
\nonumber
\var\{\widehat{A}_\tau(\nu) \}&=&\sum_k \int_{-\infty}^{\infty} \sum_{\tau'=-\infty}^{\infty}  \sum_{\tau''=-\infty}^{\infty}\left(\frac{1}{2\Delta t }-|\nu| \right)e^{-i\pi (\tau'-\tau'')\Delta t \left(\frac{1}{2\Delta t }-|\nu| \right)}{\mathrm{sinc}}
\left(\pi (\tau'-\tau'')\left(\frac{1}{2 }-|\nu|\Delta t  \right) \right)\\
\nonumber
&& (N-|\tau|)^2\frac{\sin(\pi\alpha)}{\pi\alpha}   e^{-2i\pi \tau'\Delta t(-\nu_0^{(k)}+\nu)} e^{-2i\pi \tau''\Delta t\nu_0^{(k)}}
e^{-i\pi(N-1+\tau)\Delta t(\eta_k/T-\xi_k/T)} e^{2i\pi \nu_0^{(k)}\tau \Delta t}\;\frac{d\alpha}{(N-|\tau|)\Delta t} 
\\
&&\int_{T^{\gamma}l_7}^{T^{\gamma} l_8}\int_{T^{\gamma}l_5}^{T^{\gamma}l_6} A_{\tau'}(\nu_0^{(k)}+\xi_k/T(\tau))
 A_{\tau''}^*(\nu_0^{(k)}+\eta_k/T(\tau))e^{2i\pi \Delta t(\tau'\frac{\alpha-\xi_k}{T}-\tau''\frac{\eta_k}{T}+\tau \frac{\eta_k}{T})}\nonumber\\
 &&\frac{\sin(\pi(\alpha+\eta_k-\xi_k))}{\pi(\alpha+\eta_k-\xi_k)}\,\frac{d\xi_k}{N-|\tau|}\,\frac{d\eta_k}{N-|\tau|}\nonumber
+{\cal O}(1)\\\nonumber
 &=&\sum_k \int_{-\infty}^{\infty} \sum_{\tau'=-\infty}^{\infty}  \sum_{\tau''=-\infty}^{\infty}\left(\frac{1}{2\Delta t }-|\nu| \right)e^{-i\pi (\tau'-\tau'')\Delta t \left(\frac{1}{2\Delta t }-|\nu| \right)}{\mathrm{sinc}}
\left(\pi (\tau'-\tau'')\left(\frac{1}{2 }-|\nu|\Delta t  \right) \right)\\
\nonumber
&& \frac{\sin(\pi\alpha)}{\pi\alpha}   e^{-2i\pi \tau'\Delta t\nu}
e^{-i\pi(\eta_k-\xi_k)} e^{2i\pi \nu_0^{(k)}(\tau-\tau''+\tau') \Delta t}\;\frac{d\alpha}{(N-|\tau|)\Delta t} \int_{T^{\gamma}l_7}^{T^{\gamma} l_8}\int_{T^{\gamma}l_5}^{T^{\gamma}l_6} A_{\tau'}(\nu_0^{(k)}+\xi_k/T(\tau))
\\
\nonumber
&&A_{\tau''}^*(\nu_0^{(k)}+\eta_k/T(\tau))e^{2i\pi \Delta t(\tau'\frac{\alpha-\xi_k}{T}-\tau''\frac{\eta_k}{T}+\tau \frac{\eta_k}{T})}\frac{\sin(\pi(\alpha+\eta_k-\xi_k))}
{\pi(\alpha+\eta_k-\xi_k)}\,d\xi_k\,d\eta_k+{\cal O}(1)\\
&=&\sum_k\frac{1}{\Delta t}\left|{\cal B}^{(k)}(\nu_0^{(k)},\tau_0^{(k)})\right|^2 (N-|\tau|)^{4\delta^{(k)} -1}{\cal J}_2^{(k)}(\delta^{(k)},\nu,\tau)+{\cal O}(1),
\end{eqnarray}}
this {\em defining} ${\cal J}_2^{(k)}(\delta^{(k)},\nu,\tau)$
where we have implemented an integral in $\alpha$ using
{\small\begin{eqnarray}\int_{-\infty}^{\infty} 
\nonumber
\frac{\sin(\pi\alpha)}{\pi\alpha}
\frac{\sin(\pi(\alpha+\eta_k-\xi_k))}{\pi(\alpha+\eta_k-\xi_k)}e^{2i\pi\alpha\frac{\tau'}{N-|\tau|}}d\alpha
&=&\int_{\max\left(-\frac{1}{2},\frac{\tau'}{N-|\tau|}\right)}^{\min\left(\frac{1}{2},\frac{\tau'}{N-|\tau|}+\frac{1}{2}\right)} e^{-2i\pi t(\eta_k-\xi_k)}\,dt
\\
&=&e^{-i\pi \frac{\tau'}{N-|\tau|}(\eta_k-\xi_k)}\frac{\sin\left(\pi\left((\eta_k-\xi_k)\left(1+\frac{\tau'}{N-|\tau|} \right)\right) \right)}{\pi (\eta_k-\xi_k)}.
\end{eqnarray}}
We then get with $\tilde{\tau}_{1k}=\tau'-\tau_0^{(k)}$ as well as $\tilde{\tau}_{2k}=\tau''-\tau_0^{(k)}$
\begin{eqnarray}
\nonumber
{\cal J}_2^{(k)}(\delta^{(k)},\nu,\tau)&=&\left(\frac{1}{2\Delta t }-|\nu| \right)
\int_{-\infty}^{\infty}\int_{-\infty}^{\infty} \int_{-\infty}^{\infty}  
\int_{-\infty}^{\infty}\frac{1}{\left[\xi_k^2+\tilde\tau^2_{1k} \right]^{\delta^{(k)}}}
 \frac{1}{\left[\eta_k^2+\tilde\tau^2_{2k} \right]^{\delta^{(k)}}}e^{-i\pi (\tilde{\tau}_{1k}-\tilde{\tau}_{2k})\Delta t \left(\frac{1}{2\Delta t }-|\nu| \right)}\\
\nonumber
&& {\mathrm{sinc}}
\left(\pi (\tilde{\tau}_{1k}-\tilde{\tau}_{2k})\left(\frac{1}{2 }-|\nu|\Delta t  \right) \right)   e^{-2i\pi (\tilde{\tau}_{1k}+\tau_k^{(0)})\Delta t\nu}
 e^{2i\pi \nu_0^{(k)}(\tau+\tilde{\tau}_{1k}-\tilde{\tau}_{2k}) \Delta t}e^{2i \pi \frac{\Delta t}{T}\left(-\tau'\xi_k-\tau''\eta_k+\tau \eta_k  \right)}\\
&& e^{-i\pi(1+\frac{\tau'}{N-|\tau|})(\eta_k-\xi_k)}\frac{\sin\left(\pi\left((\eta_k-\xi_k)\left(1+\frac{\tau'}{N-|\tau|} \right)\right) \right)}{\pi (\eta_k-\xi_k)}
\,d\xi_k\,d\eta_k d\tilde{\tau}_{1k} d\tilde{\tau}_{2k}+o(1).
\end{eqnarray}
We define the projection operator (assuming $\delta^{(k)}\in\left[\frac{1}{4},1\right]$):
{\small\begin{eqnarray}
\nonumber
{\cal P}g_1(\xi_k,\widetilde{\tau}_1)&=&\frac{1}{\Delta t}\left(\frac{1}{2 }-|\nu|\Delta t \right)\int_{-\infty}^{\infty}\int_{-\infty}^{\infty} e^{-i\pi(\eta_k-\xi_k)}\frac{\sin\left(\pi\left(\eta_k-\xi_k\right) \right)}{\pi (\eta_k-\xi_k)} \frac{1}{\left[\eta_k^2+\tilde\tau^2_{2k} \right]^{\delta^{(k)}}}e^{-i\pi (\tilde{\tau}_{1k}-\tilde{\tau}_{2k})\Delta t \left(\frac{1}{2\Delta t }-|\nu| -2\nu^{(0)}_k\right)}\\
&&{\mathrm{sinc}}
\left(\pi (\tilde{\tau}_{1k}-\tilde{\tau}_{2k})\left(\frac{1}{2 }-|\nu|\Delta t  \right) \right)\,d\eta_k \,d\tilde{\tau}_{2k},\\
{\cal F}{\cal P}g_1(\bm{\omega})&=& \frac{1}{\Delta t}I(\omega_1 \in [0,1])
2^{1-2\delta}(2\pi \omega)^{-2+2\delta^{(k)}}\frac{\Gamma(1-\delta^{(k)})}{\Gamma(\delta^{(k)})}
I(\omega_2-\nu^{(0)}_k\Delta t\in [\max(-\nu\Delta t ,0),\min(\frac{1}{2 }-\nu\Delta t ],\frac{1}{2})).
\nonumber\end{eqnarray}}
Finally we note 
{\small \begin{eqnarray}
{\cal J}_2^{(k)}(\delta^{(k)},\nu,\tau)&=&\frac{1}{\Delta t}e^{-2i\pi \Delta t(\tau_0^{(k)}\nu- \nu_0^{(k)}\tau)}\int_{-\infty}^{\infty} \int_{-\infty}^{\infty} \frac{1}{\left[\xi_k^2+\tilde\tau^2_{1k} \right]^{\delta^{(k)}}}e^{-2i\pi \tilde\tau_{1k}\nu\Delta t}{\cal P}g_1(\xi_k,\widetilde{\tau}_1)
\,d\xi_k d\tilde{\tau}_1+o(1).
\end{eqnarray}}
We have $\bm{\omega}(\nu)=\left[\omega_1+\nu\Delta t,\omega_2\right]$ (again assuming $\delta^{(k)}\in\left[\frac{1}{4},1\right]$)
\begin{eqnarray}
{\cal F}\left\{\frac{1}{\left[\xi_k^2+\tilde\tau^2_{1k} \right]^{\delta^{(k)}}}e^{-2i\pi \tilde\tau_{1k}\nu} \right\}(\bm{\omega})&=&
2^{1-2\delta}(2\pi \omega(\nu \Delta t))^{-2+2\delta^{(k)}}\frac{\Gamma(1-\delta^{(k)})}{\Gamma(\delta^{(k)})}.
\end{eqnarray}
Using Plancherel theorem we have 
{\small \begin{eqnarray}
{\cal J}_2^{(k)}(\delta^{(k)},\nu,\tau)&=&\frac{1}{\Delta t}e^{-2i\pi \Delta t (\tau_0^{(k)}\nu- \nu_0^{(k)}\tau)}\int_{-\infty}^{\infty}  \int_{-\infty}^{\infty}  2^{1-2\delta}(2\pi \omega(\nu\Delta t))^{-2+2\delta^{(k)}}\frac{\Gamma(1-\delta^{(k)})}{\Gamma(\delta^{(k)})}\\
&&I(\omega_1 \in [0,1])2^{1-2\delta}(2\pi \omega)^{-2+2\delta^{(k)}}\frac{\Gamma(1-\delta^{(k)})}{\Gamma(\delta^{(k)})}
I(\omega_2-\Delta t\nu^{(0)}_k\in [\max(-\nu\Delta t,0),\min(\frac{1}{2 }-|\nu|\Delta t ],\frac{1}{2}))d^2\bm{\omega}\nonumber\\
&&+o(1)
\nonumber .
\end{eqnarray}}
Again the limits of these truncations are set like in the expectation limits.
Truncating the integrals using the indicator functions we find that:
\begin{eqnarray}
{\cal J}_2^{(k)}(\delta^{(k)},\nu,\tau)&\sim& \frac{1}{\Delta t}2^{2-4\delta}
\left[\frac{\Gamma(1-\delta^{(k)})}{\Gamma(\delta^{(k)})}\right]^2\int_{0}^{1}  \int_{\Delta t\max(\nu_k^{(0)}-\nu,0)}^{\min(\frac{1}{2}+(\nu_k^{(0)}-|\nu|)\Delta t ,\frac{1}{2}+\Delta t\nu_k^{(0)})} 
\omega(\Delta t\nu)^{-2+2\delta^{(k)}} \omega^{-2+2\delta^{(k)}}
\frac{d^2\bm{\omega}}{(2\pi)^2}
\nonumber 
\end{eqnarray}
From this it is possible to deduce that ${\cal J}_2^{(k)}(\delta^{(k)},\nu,\tau)$ is finite for the previously mentioned range of $\frac{1}{4}\le \delta^{(k)}<\frac{3}{4}$. Furthermore the latter integral has the second range depending on $\frac{1}{2 }-|\nu| \Delta t$, and so
%
 it follows:
\begin{eqnarray}
\var\{\widehat{A}_\tau(\nu) \}  &=&{\cal O}\left(\frac{1}{\Delta t}\left(\frac{1}{2  \Delta t}-|\nu|  \right) \sum_k (N-|\tau|)^{4\delta^{(k)} -1}  \right).
\end{eqnarray}
The finiteness of the limiting integrals therefore follow from exactly the same calculations as in the case of the mean.
This should be compared in units with our expression for the expectation square. We see that the units agree, but that the expectation will be large compared to the variance, and should therefore be recoverable. We have if $(\nu,\tau) \in {\cal D}_k$ with $\delta=\max \delta_k$
\begin{equation}
\frac{\left|\E\{\widehat{A}_\tau(\nu) \}\right|^2}{\var\{\widehat{A}_\tau(\nu) \}}={\cal O}\left(\frac{\frac{1}{4\Delta t^2}(N-|\tau|)^{4\delta^{(k)}}
}{\left(\frac{1}{2 \Delta t}-|\nu| \right) (N-|\tau|)^{4\delta -1}}\right)={\cal O}\left((N-|\tau|)^{1+4(\delta^{(k)}-\delta)}\left(\frac{1}{2 }-|\nu|  \Delta t\right)^{-1}\right),
\end{equation}
which naturally is independent of the units of the problem, and becomes larger with increasing $N$ and decreasing $\Delta t$. If $\delta^{(k)}=\delta$ for all $k$ then this normalized expectation does not depend on individual $\delta^{(k)}$ but only needs $\delta>0$.

\subsection{Relation of the EMAF}
Finally to determine all properties of the ambiguity function we also need to determine its relation sequence.
We find by direct calculation that (extending in \cite[Eqn A-7]{Hindberg2009} to $\Delta t \neq 1$):
\begin{alignat}{1}
\nonumber
\rel\left\{\widehat{A}_{\tau}(\nu)\right\}+\E^2\left\{\widehat{A}_{\tau}(\nu)\right\}&=
\int_{0}^{\frac{1}{2\Delta t}}\int_{-f}^{\frac{1}{2\Delta t}-f} 
\int_{0}^{\frac{1}{2\Delta t}}\int_{-f'}^{\frac{1}{2\Delta t}-f'} 
e^{i\pi(N-1+\tau)\Delta t(\nu'-\nu)}D_{N-|\tau|}\left(\Delta t(\nu'-\nu)\right) \\
\nonumber
&  e^{2i\pi f \tau\Delta t}\E\left\{ d{ Z}(\nu'+f)d{ Z}^*(f) d{ Z}(\nu''+f')d{ Z}^*(f')\right\}\\
&
e^{i\pi(N-1+\tau)\Delta t(\nu''-\nu)}D_{N-|\tau|}\left(\Delta t(\nu''-\nu)\right) e^{2i\pi f' \tau \Delta t} \;d\nu'\,df\;d\nu''\,df'.
\label{eq:relafdef}
\end{alignat}
We have assume that $d{ Z}(f)$ is a Gaussian process and then use Isserliss theorem to obtain that:
\begin{alignat*}{1}
\rel\left\{\widehat{A}_{\tau}(\nu)\right\}&=
\int_{f=0}^{\frac{1}{2\Delta t}}\int_{\nu'=-f}^{\frac{1}{2\Delta t}-f} 
\int_{f'=0}^{\frac{1}{2\Delta t}}\int_{\nu''=-f'}^{\frac{1}{2\Delta t}-f'} e^{i\pi(N-1+\tau)\Delta t(\nu'-\nu)}D_{N-|\tau|}
\left(\Delta t(\nu'-\nu)\right) e^{2i\pi f \tau\Delta t}
\\
&  \left(\E\left\{ d{ Z}(\nu'+f)d{ Z}(\nu''+f')\right\}\E\left\{ d{ Z}^*(f)d{ Z}^*(f')\right\}\right.\\
&\left.+\E\left\{ d{ Z}(\nu'+f)d{ Z}^*(f')\right\}\E\left\{ d{ Z}^*(f)d{ Z}(\nu''+f')\right\}\right)\\
&
e^{i\pi(N-1+\tau)\Delta t(\nu''-\nu)}D_{N-|\tau|}\left(\Delta t(\nu''-\nu)\right) e^{2i\pi f' \tau\Delta t} \;d\nu'\,df\;d\nu'\,df'.
\end{alignat*}
Again, with the assumption that $dZ(f)$ is proper we are left with
\begin{alignat*}{1}
\rel\left\{\widehat{A}_{\tau}(\nu)\right\}&=
\int_{f=0}^{\frac{1}{2\Delta t}}\int_{\nu'=-f}^{\frac{1}{2\Delta t}-f} 
\int_{f'=0}^{\frac{1}{2\Delta t}}\int_{\nu''=-f'}^{\frac{1}{2\Delta t}-f'} e^{i\pi(N-1+\tau)\Delta t(\nu'-\nu)}D_{N-|\tau|}
\left(\Delta t(\nu'-\nu)\right) e^{2i\pi f \tau\Delta t}
\\
&\E\left\{ d{ Z}(\nu'+f)d{ Z}^*(f')\right\}\E\left\{ d{ Z}^*(f)d{ Z}(\nu''+f')\right\}\\
&
e^{i\pi(N-1+\tau)\Delta t(\nu''-\nu)}D_{N-|\tau|}\left(\Delta t(\nu''-\nu)\right) e^{2i\pi f' \tau\Delta t} \;d\nu'\,df\;d\nu'\,df'\\
&=\int_{0}^{\frac{1}{2\Delta t}}\int_{-f}^{\frac{1}{2\Delta t}-f} 
\int_{0}^{\frac{1}{2\Delta t}}\int_{-f'}^{\frac{1}{2\Delta t}-f'} e^{i\pi(N-1+\tau)\Delta t(\nu'+\nu''-2\nu)}D_{N-|\tau|}
\left(\Delta t (\nu'-\nu)\right) e^{2i\pi (f+f') \tau\Delta t}
\\
&S(\nu'+f-f',f')S(\nu''+f'-f,f)D_{N-|\tau|}\left(\Delta t(\nu''-\nu)\right)  \;d\nu'\,df\;d\nu'\,df'.
\end{alignat*}
We only expect to get contributions to the dual-frequency spectrum if
$\nu'\approx -f+f'+\nu_0^{(k)}$ and $\nu''\approx f-f'+\nu_0^{(l)}$ for some $k$ and $l$. Furthermore the Dirichlet kernels will only contribute unless
$\nu' \approx \nu$ and $\nu''\approx \nu$. If $\nu=0$ then this is not a problem, and the ambiguity function is real. 
If $\nu \neq \nu_0^{(k)}$ for some $k$ then the Dirichlet kernels will have arguments like $\pm(f-f')+\nu_0^{(l)}-\nu,$ and so if one contribution is large, the other is small. To bound contributions the integrals are done explicitly in the ambiguity domain, and it can be shown that for the empty points the contributions are $o(N-|\tau|)$. For a strictly underspread process, it is possible to show they are ${\cal O}\left(\log(N)\right)$, see \cite{Hindberg2009}.

\section{Correlation of EMAF for White Noise}\label{distofEMAF}
We find that the correlation for white noise is given with $\mu_{\tau}(\nu)=\E\left\{
\widehat{A}_{\tau}(\nu) \right\}$ by (see Eqn \eqref{ta-da})
{\small \begin{eqnarray}
\nonumber
&&\cov\left\{\widehat{A}_{\tau_1}(\nu_1), \widehat{A}_{\tau_2}(\nu_2)\right\}+\mu_{\tau_1}(\nu_1)\mu_{\tau_2}^\ast(\nu_2)
=\int_0^{\frac{1}{2\Delta t}} \int_{-f}^{\frac{1}{2\Delta t}-f} \int_0^{\frac{1}{2\Delta t}} \int_{-f_2}^{\frac{1}{2\Delta t}-f_2} e^{i\pi (\nu'-\nu_1)\Delta t(N-|\tau_1|-1)}e^{i2\pi f \tau_1\Delta t}
\\
\nonumber
&&D_{N-|\tau_1|}(\Delta t(\nu'-\nu_1))\E\left\{dZ(\nu'+f)dZ^\ast(f)dZ^\ast(\nu''+f_2)dZ(f_2)\right\}
e^{-i\pi (\nu''-\nu_2)\Delta t (N-|\tau_2|-1)}e^{-i2\pi f_2 \tau_2\Delta t}\\
&&D_{N-|\tau_2|}(\Delta t (\nu''-\nu_2)).
\end{eqnarray}}
Thus due to the propriety of $dZ(f)$ we have
\begin{eqnarray}
\nonumber
&&\cov\left\{\widehat{A}_{\tau_1}(\nu_1), \widehat{A}_{\tau_2}(\nu_2)\right\}=
\int_0^{\frac{1}{2\Delta t}} \int_{-f}^{\frac{1}{2\Delta t}-f} \int_0^{\frac{1}{2\Delta t}} \int_{-f_2}^{\frac{1}{2\Delta t}-f_2} \E\left\{dZ(\nu'+f)dZ^\ast(\nu''+f_2)\right\}
\\
 \nonumber
 && \E\left\{dZ(f_2)dZ^\ast(f)\right\}e^{i\pi (\nu'-\nu_1)\Delta t(N-|\tau_1|-1)}e^{i2\pi f \tau_1\Delta t }D_{N-|\tau_1|}(\Delta t (\nu'-\nu_1))
\\ && e^{-i\pi (\nu''-\nu_2)\Delta t (N-|\tau_2|-1)}e^{-i2\pi f_2 \tau_2\Delta t}D_{N-|\tau_2|}(\Delta t (\nu''-\nu_2)).\end{eqnarray}
Thus
{\small \begin{eqnarray*}
\nonumber
&&\cov\left\{\widehat{A}_{\tau_1}(\nu_1), \widehat{A}_{\tau_2}(\nu_2)\right\}
=\int_{-\frac{1}{2\Delta t}}^{\frac{1}{2\Delta t}} \int_{-\frac{1}{2\Delta t}}^{\frac{1}{2\Delta t}} \int_{\max(\nu',0)}^{\min\left(\frac{1}{2\Delta t}-\nu',\frac{1}{2\Delta t}\right)} \int_{\max(\nu'',0)}^{\min\left(\frac{1}{2\Delta t}-\nu'',\frac{1}{2\Delta t}\right)} \sigma^2\delta(\nu'+f-\nu''-f_2)\\
&&\sigma^2 \delta(f_2-f)e^{i\pi (\nu'-\nu_1)\Delta t(N-|\tau_1|-1)}e^{i2\pi f \tau_1\Delta t }D_{N-|\tau_1|}(\Delta t(\nu'-\nu_1))
 e^{-i\pi (\nu''-\nu_2)\Delta t (N-|\tau_2|-1)}e^{-i2\pi f_2 \tau_2\Delta t}\\
 && \nonumber D_{N-|\tau_2|}(\Delta  t (\nu''-\nu_2))\,df_1\,df_2d\nu'd\nu''.\end{eqnarray*}
 This simplifies to
{\small \begin{eqnarray*}
&&\cov\left\{\widehat{A}_{\tau_1}(\nu_1), \widehat{A}_{\tau_2}(\nu_2)\right\}
=\int_{-\frac{1}{2\Delta t}}^{\frac{1}{2\Delta t}} \int_{-\frac{1}{2\Delta t}}^{\frac{1}{2\Delta t}} \int_{\max(\nu',\nu'',0)}^{\min\left(\frac{1}{2\Delta t}-\nu',\frac{1}{2\Delta t}-\nu'',\frac{1}{2\Delta t}\right)} \sigma^4\delta(\nu'-\nu'') \\
 &&e^{i\pi (\nu'-\nu_1)\Delta t(N-|\tau_1|-1)}e^{i2\pi f \Delta t(\tau_1-\tau_2)}D_{N-|\tau_1|}(\Delta t(\nu'-\nu_1))
\\ && e^{-i\pi (\nu''-\nu_2)\Delta t(N-|\tau_2|-1)}D_{N-|\tau_2|}(\Delta t(\nu''-\nu_2))\,df\,d\nu'd\nu''\\
&=&\int_{-\frac{1}{2\Delta t}}^{\frac{1}{2\Delta t}} \int_{\max(\nu',0)}^{\min(\frac{1}{2\Delta t}-\nu',1/2)} \sigma^4
e^{i\pi (\nu'-\nu_1)\Delta t (N-|\tau_1|-1)}e^{i2\pi f (\tau_1-\tau_2)\Delta t}D_{N-|\tau_1|}(\Delta t(\nu'-\nu_1))
\\ && e^{-i\pi (\nu'-\nu_2)\Delta t (N-|\tau_2|-1)}D_{N-|\tau_2|}(\Delta t(\nu'-\nu_2))\,df\,d\nu'\\
&=&\sigma^4\int_{-\frac{1}{2\Delta t}}^{\frac{1}{2\Delta t}} \left[\frac{e^{i2\pi f \Delta t(\tau_1-\tau_2)}}{i2\pi  \Delta t(\tau_1-\tau_2)}\right]_{\max(\nu',0)}^{\min(\frac{1}{2\Delta t}-\nu',1/2)} 
e^{i\pi (\nu'-\nu_1)\Delta t (N-|\tau_1|-1)}D_{N-|\tau_1|}(\Delta t (\nu'-\nu_1))
\\ && e^{-i\pi (\nu'-\nu_2)\Delta t(N-|\tau_2|-1)}D_{N-|\tau_2|}(\Delta t(\nu'-\nu_2))\,df\,d\nu'
\end{eqnarray*}}
We now implement a change of variables of $\nu'=\xi/[\Delta t(N-\max(\tau_1,\tau_2))]$ and find with $\nu_1=j_1/(\Delta t(N-\max(\tau_1,\tau_2)))$ as well as $\nu_2=j_2/(\Delta t(N-\max(\tau_1,\tau_2)))$ :
{\small \begin{eqnarray*}
&&\cov\left\{\widehat{A}_{\tau_1}(\nu_1), \widehat{A}_{\tau_2}(\nu_2)\right\}
=\Delta t (N-\max(\tau_1,\tau_2))\sigma^4\left[\frac{e^{i2\pi f \Delta t(\tau_1-\tau_2)}}{i2\pi  (\tau_1-\tau_2)\Delta t}\right]_{\max(\nu_1,0)}^{\min(1/2\Delta t-\nu_1,1/2)} e^{i\pi (j_2-j_1)}\\
&&\int_{-\infty}^{\infty}\frac{\sin(\pi(\xi-j_1))\sin(\pi(\xi-j_2))}{\pi^2(\xi-j_1)(\xi-j_2)}\,d\xi+{\cal O}(1)\\
&=&(N-\max(\tau_1,\tau_2))\sigma^4\left[\frac{e^{i2\pi f \Delta t(\tau_1-\tau_2)}}{i2\pi  (\tau_1-\tau_2)}\right]_{\max(\nu_1,\nu_2,0)}^{\min(1/2\Delta t-\max(\nu_1,\nu_2),1/2)} e^{i\pi (j_2-j_1)}\delta_{j_1,j_2}+{\cal O}(1)\\
&=&(N-\max(\tau_1,\tau_2))\sigma^4\left[\frac{\exp(2\pi \Delta t(1/2\Delta t-\max(\nu_1,\nu_2)) i (\tau_1-\tau_2))-\exp(2\pi \max(\nu_1,\nu_2) i\Delta t (\tau_1-\tau_2))}{i2\pi  (\tau_1-\tau_2)}\right]\\
 \nonumber
 && e^{i\pi (j_2-j_1)}\delta_{j_1,j_2}
+{\cal O}(1) \\
&=&(N-\max(\tau_1,\tau_2))\sigma^4\left[\frac{(-1)^{j_1-j_2}\exp(-2\pi \Delta t\max(\nu_1,\nu_2)) i (\tau_1-\tau_2))-\exp(2\pi \Delta t \max(\nu_1,\nu_2) i (\tau_1-\tau_2))}{i2\pi  (\tau_1-\tau_2)}\right]\\
 && e^{i\pi (j_2-j_1)}\delta_{j_1,j_2}+{\cal O}(1).
\end{eqnarray*}}
This is zero depending on the frequencies of choice. If we fix $\nu_1$ then we have to pick times that agree with that choice of $\nu_1$ whilst if we fix $\tau_1$ we can make the process independent without worrying about the value of $\tau_1$ (apart from it being ${\cal O}(1)$), by just choosing the Fourier coefficients (this becomes more involved when $\tau$ is no longer order one). Interestingly, we can pick the $\tau$ values arbitrarily if we change the relative frequencies. 

\section{EBAYES Calculations}\label{EBAYEScalc}
These calculations resemble strongly those of \cite{Wang}.
We first marginalize the likelihood over the non-observed ${\cal B}_{\tau}(\nu)$. We find that starting from Eqn. (69) that
{\small \begin{eqnarray*}
f(\widehat{A}_\tau^{(N)}(\nu_k),\bpsi)&=&\int \int
\left\{\frac{(1-\rho)e^{-\frac{\left|\widehat{A}^{(N)}_{\tau}(\nu_k)\right|^2}{\overline{\cal V}}}}{\pi\overline{\cal V}} \delta\left({\cal B}_{\tau}(\nu_k)\right)+
\frac{\rho e^{-\frac{\left|\widehat{A}^{(N)}_{\tau}(\nu_k)-{\cal B}_{\tau}(\nu_k)\right|^2}{\overline{\cal V}}}}{\pi^2\overline{\cal V}\sigma_\cN^2} e^{-\frac{\left|{\cal B}_{\tau}(\nu_k)\right|^2}
{\sigma^2_\cN}} \right\}\,d{\cal B}_\tau (\nu_k)\\
&=&\frac{(1-\rho)}{\pi\overline{\cal V}} e^{-\frac{\left|\widehat{A}^{(N)}_{\tau}(\nu_k)\right|^2}{\overline{\cal V}}}
+\frac{\rho}{\pi(\overline{\cal V}+\sigma^2_{\cN})}e^{-\frac{\left|\widehat{A}^{(N)}_{\tau}(\nu_k)\right|^2}{\overline{\cal V}+\sigma^2_{\cN}}}.
\end{eqnarray*}}
This implies that our marginal likelihood for $\widehat{Q}_\tau(\nu_k)$ is in fact:
{\small \begin{eqnarray*}
f(\widehat{Q}_\tau(\nu_k),\bpsi)&=&2\frac{(1-\rho)}{\overline{\cal V}} \widehat{Q}_{\tau}(\nu_k)e^{-\frac{\left(\widehat{Q}_{\tau}(\nu_k)\right)^2}{\overline{\cal V}}}+2\frac{\rho}{\overline{\cal V}+\sigma^2_{\cN}}\widehat{Q}_{\tau}(\nu_k)
e^{-\frac{\left(\widehat{Q}_{\tau}(\nu_k)\right)^2}{\overline{\cal V}+\sigma^2_{\cN}}}
\end{eqnarray*}}
Thus:
\begin{eqnarray}
f\left(\widehat{Q}_{\tau}^2(\nu)
\right)&=&
\frac{(1-\rho)}{\overline{\cal V}} 
e^{-\frac{\widehat{Q}_{\tau}^2(\nu)}{\overline{\cal V}}}+\frac{\rho}{\overline{\cal V}+\sigma^2_{\cN}}
e^{-\frac{\widehat{Q}_{\tau}^2(\nu)}{\overline{\cal V}+\sigma^2_{\cN}}}
\end{eqnarray}

However if we just marginalize Eqn (69) over the phase then we have
{\small \begin{eqnarray}
\nonumber
f\left(\widehat{Q}_{\tau}^2(\nu), \cB_{\tau}(\nu)
\right)&=&\int_0^{2\pi}  f\left(\widehat{Q}_{\tau}^2(\nu)e^{-i\vartheta_{\tau}(\nu)},\cB_{\tau}(\nu)\right)
\widehat{Q}_{\tau}(\nu)\,d\vartheta_{\tau}(\nu)\\
\nonumber
&=&2\widehat{Q}_{\tau}(\nu)\frac{(1-\rho)}{\overline{\cal V}} 
e^{-\frac{\widehat{Q}^{2}_{\tau}(\nu_k)}{\overline{\cal V}}}\delta\left({\cal B}_{\tau}(\nu_k)\right)+
\frac{\rho}{\pi\overline{\cal V}\sigma_\cN^2}\widehat{Q}_{\tau}(\nu)\frac{1}{\pi} \int_0^{2\pi} e^{-\frac{\left|\widehat{A}^{(N)}_{\tau}(\nu_k)-{\cal B}_{\tau}(\nu_k)\right|^2}{\overline{\cal V}}}\\
&& e^{-\frac{\left|{\cal B}_{\tau}(\nu_k)\right|^2}
{\sigma^2_\cN}}\,d\vartheta_{\tau}(\nu).
\end{eqnarray}}
We define the posterior ratio to be
\begin{eqnarray}\rho^{(\mathrm{post})}_{\tau}(\nu_k)&=&
\frac{\frac{{\rho}}{(\overline{\cal V}+\sigma^2)} e^{-\frac{\widehat{Q}_{\tau}^{2}(\nu_k)}{\overline{\cal V}+\sigma^2}}}
{\frac{(1-\rho)}{\overline{\cal V}} e^{-\frac{\widehat{Q}_{\tau}^{2}(\nu_k)}{\overline{\cal V}}}+
\frac{\rho}{(\overline{\cal V}+\sigma^2)} e^{-\widehat{Q}_{\tau}^{2}(\nu_k)}}
\end{eqnarray}
Thus we may write
{\small \begin{eqnarray}
\nonumber
f\left(\cB_{\tau}(\nu)|\widehat{Q}_{\tau}(\nu_k)\right)
&=&\frac{\frac{(1-\rho)}{\overline{\cal V}} 
e^{-\frac{\left|\widehat{A}^{(N)}_{\tau}(\nu_k)\right|^2}{\overline{\cal V}}}\delta\left({\cal B}_{\tau}(\nu_k)\right)+
\frac{\rho}{\pi\overline{\cal V}\sigma_\cN^2}\frac{1}{2\pi} \int_0^{2\pi} e^{-\frac{\left|\widehat{A}^{(N)}_{\tau}(\nu_k)-{\cal B}_{\tau}(\nu_k)\right|^2}{\overline{\cal V}}} e^{-\frac{\left|{\cal B}_{\tau}(\nu_k)\right|^2}
{\sigma^2_\cN}}\,d\xi_{\tau}(\nu)}{\frac{(1-\rho)}{\overline{\cal V}} 
e^{-\frac{\left|\widehat{A}^{(N)}_{\tau}(\nu_k)\right|^2}{\overline{\cal V}}}+\frac{\rho}{\overline{\cal V}+\sigma^2_{\cN}}
e^{-\frac{\left|\widehat{A}^{(N)}_{\tau}(\nu_k)\right|^2}{\overline{\cal V}+\sigma^2_{\cN}}}}\\
&=&(1-\rho^{(\mathrm{post})}_{\tau}(\nu_k))\delta\left({\cal B}_{\tau}(\nu_k)\right)+\rho^{(\mathrm{post})}_{\tau}(\nu_k)
f_1(\cB_{\tau}(\nu)|\widehat{Q}_{\tau}(\nu)).
\end{eqnarray}}
With this definition we find that:
\begin{eqnarray}
f_1(\cB|\widehat{Q}_{\tau}(\nu))&=&\frac{(\overline{\cal V}+\sigma^2)}{\pi \overline{\cal V}\sigma^2} e^{\frac{\widehat{Q}_{\tau}^2(\nu)}{\overline{\cal V}+\sigma^2}}\frac{1}{2\pi}\int_0^{2\pi} e^{-\frac{\left|\widehat{A}-{\cal B}\right|^2}{\overline{\cal V}}} e^{-\frac{\left|{\cal B}\right|^2}
{\sigma^2_\cN}}\,d\vartheta.
\end{eqnarray}
Thus if we marginalize yet again over the angle then we find that:
\begin{eqnarray*}
f_1(\cQ;\hat{Q})&=&\frac{(\overline{\cal V}+\sigma^2)}{\overline{\cal V}\sigma^2}e^{\frac{\hat{Q}^2}{(\overline{\cal V}+\sigma^2)}}e^{-\frac{\cQ^2}{\sigma^2_N}}
\frac{\cQ}{2\pi^2}\int_0^{2\pi}\int_0^{2\pi} 
e^{-\frac{\hat{A}^2+\cQ^2-2\hat{Q}\cQ \cos(\xi+\vartheta)}{\overline{\cal V}}}\;d\xi\;d\vartheta\\
&=&\frac{(\overline{\cal V}+\sigma^2)}{\overline{\cal V}\sigma^2}e^{\frac{\hat{Q}^2}{(\overline{\cal V}+\sigma^2)}}e^{-\frac{\cQ^2}{\sigma^2}} 
\frac{\cQ}{2\pi^2}\int_0^{2\pi}\int_{0+\phi}^{2\pi+\phi} e^{-\frac{\hat{Q}^2+\cQ^2-2\hat{Q}\cQ\cos(\xi)}{\overline{\cal V}}}\;d\xi\;d\vartheta\\
&=&\frac{(\overline{\cal V}+\sigma^2)}{\overline{\cal V}\sigma^2}e^{\frac{\hat{Q}^2}{(\overline{\cal V}+\sigma^2)}}e^{-\frac{\cQ^2}{\sigma^2}} 
e^{-\frac{\hat{Q}^2+\cQ^2}{\overline{\cal V}}}
\frac{\cQ}{\pi}\int_0^{2\pi} 
e^{\frac{2\hat{Q}\cQ \cos(\xi)}{\overline{\cal V}}}\;d\xi.
\end{eqnarray*}
Thus we see that using \cite[3.339]{gradshteyn}:
\begin{eqnarray}
\nonumber
f_1(\cQ;\widehat{Q})&=&\frac{2}{\overline{\cal V}\lambda}
e^{\frac{\widehat{Q}^2}{(\overline{\cal V}+\sigma^2)}}e^{-\frac{\cQ^2}{\sigma^2}}
e^{-\frac{\widehat{Q}^2+\cQ^2}{\overline{\cal V}}}\cQ J_0\left(-i\frac{2\widehat{Q}
\cQ}{\overline{\cal V}}\right)\\
\nonumber
&=&\frac{2}{\overline{\cal V}\lambda}
e^{-\lambda\frac{\widehat{Q}^2}{ \overline{\cal V}}}
e^{-\frac{\cQ^2}{\lambda\overline{\cal V}}}\cQ
J_0\left(-i\frac{2\hat{Q}\cQ}{\overline{\cal V}}\right)\\
&\approx &N\left(\lambda \hat{Q},\frac{1}{2}\lambda\overline{\cal V}\right).
\end{eqnarray}

We then have
\begin{eqnarray}
F_1(\cQ|\hat{Q})&=& 2\int_0^{\cQ/\sqrt{\lambda\overline{\cal V}}}u
e^{-\left(u^2+\frac{\lambda \hat{Q}^2}{\overline{\cal V}}\right)}
J_0\left(-2i u\frac{\sqrt{\lambda}\hat{Q}}{\sqrt{\overline{\cal V}}} \right)\;d{u}.
\end{eqnarray}
We can check with Gradshteyn {\em et al.} \cite[6.614]{gradshteyn} that this CDF integrates to one, and recognize this is a Rice distribution with parameters $\sigma^2=\frac{1}{2}\lambda {\overline{\cal V}}$ and $\mu=\lambda \hat{Q}$. When the mean becomes large compared to the standard deviation, e.g. when
\[\frac{\sqrt{2\lambda} \hat{Q}}{\sqrt{\overline{\cal V}}}>>1 ,\]
then there can be no danger in using such an approximation.

\section{Variance of Estimator}
To invert the ambiguity function into a local moment sequence we calculate
{\small \begin{eqnarray}
\nonumber
\widehat{M}^{({\mathrm{eb}})}_\tau(t_n)&\approx&\frac{1}{2N\Delta t}
\sum_{k=-(N-1)}^{N-1}\widehat{A}^{({\mathrm{eb}})}_\tau(\nu_k)
e^{i2\pi \nu_k t_n},\quad {\nu}_k=\frac{k}{2N\Delta t}\\
\nonumber
\var\left\{\widehat{M}^{({\mathrm{eb}})}_\tau(t_n)\right\}&=&
\frac{1}{4N^2\Delta t^2}\sum_{k_1=-(N-1)}^{N-1}\sum_{k_2=-(N-1)}^{N-1}
e^{i2\pi (\nu_{k_1}-\nu_{k_2}) t_n}\cov\left\{ \widehat{A}^{({\mathrm{eb}})}_\tau(\nu_{k_1}),
\widehat{A}^{({\mathrm{eb}})}_\tau(\nu_{k_2})\right\}\\ \nonumber
&=&\frac{1}{4N^2\Delta t^2}\sum_{k_1=-(N-1)}^{N-1}\sum_{k_2=-(N-1)}^{N-1}
e^{i2\pi (\nu_{k_1}-\nu_{k_2}) t_n}\Theta_\tau(\nu_{k_1})\Theta_\tau^\ast(\nu_{k_2})\\
&& \nonumber \cov\left\{ \widehat{A}_\tau(\nu_{k_1}),
\widehat{A}_\tau(\nu_{k_2})\right\}\nonumber \\
\var\left\{\widehat{M}_\tau(t_n) \right\}&= &\frac{1}{4N^2\Delta t^2}\sum_{k_1=-(N-1)}^{N-1}
\sum_{k_2=-(N-1)}^{N-1}
e^{i2\pi (\nu_{k_1}-\nu_{k_2}) t_n}\cov\left\{ \widehat{A}_\tau(\nu_{k_1}),
\widehat{A}_\tau(\nu_{k_2})\right\}.
\label{bahgh}
\end{eqnarray}}
We see from appendix \ref{distofEMAF} that the correlation between frequencies for white noise with the same lag will become small if $\nu_{k_1}$ and $\nu_{k_2}$ are on the grid $\nu_k(\tau)=\frac{k}{N-\tau}$. The effect of being slightly off that grid (for small $\tau$) is negligible. Therefore if we consider this case 
\begin{eqnarray}
\nonumber
\var\left\{\widehat{M}^{({\mathrm{eb}})}_\tau(t_n)\right\}&\approx&
\frac{1}{N^2\Delta t^2}\sum_{k=-(N/2-1)}^{N/2-1}
\Theta_\tau^2(\nu_{k})\var\left\{ \widehat{A}_\tau(\nu_{2k_1})\right\}\\
\nonumber
&\sim &\frac{1}{T}\int_{-\frac{1}{2\Delta t}}^{\frac{1}{2\Delta t}} \Theta_\tau^2(\nu)\var\left\{ \widehat{A}_\tau(\nu)\right\}\;d\nu.
\end{eqnarray}
As $\Theta_\tau(\nu)=0$ with probability $\rho$, and $\rho \sim \frac{1}{N^{\gamma}}$ the variance will decrease with $\rho^2$ for white noise.

\begin{thebibliography}{10}
\providecommand{\url}[1]{#1}
\csname url@rmstyle\endcsname
\providecommand{\newblock}{\relax}
\providecommand{\bibinfo}[2]{#2}
\providecommand\BIBentrySTDinterwordspacing{\spaceskip=0pt\relax}
\providecommand\BIBentryALTinterwordstretchfactor{4}
\providecommand\BIBentryALTinterwordspacing{\spaceskip=\fontdimen2\font plus
\BIBentryALTinterwordstretchfactor\fontdimen3\font minus
  \fontdimen4\font\relax}
\providecommand\BIBforeignlanguage[2]{{%
\expandafter\ifx\csname l@#1\endcsname\relax
\typeout{** WARNING: IEEEtran.bst: No hyphenation pattern has been}%
\typeout{** loaded for the language `#1'. Using the pattern for}%
\typeout{** the default language instead.}%
\else
\language=\csname l@#1\endcsname
\fi
#2}}

\bibitem{Priestley65}
M.~B. Priestley, ``Evolutionary spectra and non-stationary processes,''
  \emph{Journal of the Royal Statistical Society, B}, vol.~27, pp. 204--237,
  1965.

\bibitem{Silverman1957}
R.~Silverman, ``Locally stationary random processes,'' \emph{IRE Trans. Info.
  Theo.}, vol.~3, pp. 182--187, 1957.

\bibitem{Dahlhaus1997}
R.~Dahlhaus, ``Fitting time series models to nonstationary processes,''
  \emph{The Annals of Statistics}, vol.~25, pp. 1--37, 1997.

\bibitem{Dahlhaus2000}
------, ``A likelihood approximation for locally stationary processes,''
  \emph{The Annals of Statistics}, vol.~28, pp. 1762--1794, 2000.

\bibitem{Ombao2001}
H.~Ombao, J.~Raz, R.~von Sachs, and B.~Mallow, ``Automatic statistical analysis
  of bivariate nonstationary time series,'' \emph{J. Am. Stat. Assoc.},
  vol.~96, pp. 543--560, 2001.

\bibitem{Mallat1998}
S.~G. Mallat, G.~Papanicolau, and Z.~Zhang, ``Adaptive covariance estimation
  for locally stationary processes,'' \emph{Annals of Statistics}, vol.~26, pp.
  1--47, 1998.

\bibitem{Nason}
G.~P. Nason, \emph{Wavelet Methods in Statistics with {R}}.\hskip 1em plus
  0.5em minus 0.4em\relax Berlin: Springer, 2008.

\bibitem{Flandrin99}
P.~Flandrin, \emph{Time-Frequency/Time-Scale Analysis}.\hskip 1em plus 0.5em
  minus 0.4em\relax New York: Academic Press, 1999.

\bibitem{cohen}
L.~Cohen, \emph{Time-frequency analysis: {T}heory and applications}.\hskip 1em
  plus 0.5em minus 0.4em\relax Upper Saddle River, NJ, USA: Prentice-Hall,
  Inc., 1995.

\bibitem{Loeve1963}
M.~Lo{\`e}ve, \emph{Probability theory}.\hskip 1em plus 0.5em minus 0.4em\relax
  New York, USA: Van Nostrand, 1963.

\bibitem{Karhunen1947}
K.~Karhunen, ``{\"U}ber lineare methoden in der wahrscheinlichkeitsrechnung,''
  \emph{Ann. Acad. Sci. Fenn Ser A, I. Math.}, vol.~37, pp. 3--79, 1947.

\bibitem{Matz1997}
G.~Matz and et~al., ``Generalized evolutionary spectral analysis and the {W}eyl
  spectrum of nonstationary random processes,'' \emph{IEEE Trans. Signal
  Proc.}, vol.~45, pp. 1520--1534, 1997.

\bibitem{Matz2006}
G.~Matz and F.~Hlawatsch, ``Nonstationary spectral analysis based on
  time-frequency operator symbols,'' \emph{IEEE Trans. on Information Theory},
  vol.~52, pp. 1067--1086, 2006.

\bibitem{Jachan}
M.~Jachan and et~al., ``Time-frequency arma models and parameter estimators for
  underspread nonstationary random processes,'' \emph{IEEE Trans. Signal
  Proc.}, vol.~55, pp. 4366--4381, 2007.

\bibitem{Adler}
R.~J. Adler and J.~E. Taylor, \emph{Random Fields and Geometry}.\hskip 1em plus
  0.5em minus 0.4em\relax Berlin: Springer, 2007.

\bibitem{PercivalWalden1993}
D.~B. Percival and A.~T. Walden, \emph{Spectral Analysis for Physical
  Applications}.\hskip 1em plus 0.5em minus 0.4em\relax Cambridge, UK:
  Cambridge University Press, 1993.

\bibitem{loynes}
R.~M. Loynes, ``On the concept of the spectrum for non-stationary processes,''
  \emph{J. Roy. Stat. Soc. B}, vol.~30, pp. 1--30, 1968.

\bibitem{martinflandrin1985}
W.~Martin and P.~Flandrin, ``{W}igner-{V}ille spectral-analysis of
  non-stationary processes,'' \emph{IEEE Trans. Acous., Speech \& Signal
  Proc.}, vol.~33, pp. 1461--1470, 1985.

\bibitem{Jeong92}
J.~Jeong and W.~J. Williams, ``Alias-free generalized discrete-time
  time-frequency distributions,'' \emph{IEEE Trans. Signal Proc.}, vol.~40, pp.
  2757--2765, 1992.

\bibitem{Schreier}
P.~J. Schreier and L.~L. Scharf, \emph{Statistical Signal Processing of
  Complex-Valued Data: The Theory of Improper and Noncircular Signals}.\hskip
  1em plus 0.5em minus 0.4em\relax Cambridge, UK: Cambridge University Press,
  2010.

\bibitem{Rihaczek}
A.~Rihaczek, ``Signal energy distribution in time and frequency,'' \emph{IEEE
  Trans. Inf. Theory}, vol.~14, pp. 369–--374, 1968.

\bibitem{Blahut1991ed}
R.~E. Blahut, W.~Miller, and C.~H.~W. (ed), \emph{Radar and Sonar, Part
  I}.\hskip 1em plus 0.5em minus 0.4em\relax New York, USA: Springer Verlag,
  1991.

\bibitem{Martin81}
W.~Martin, ``Line tracking in nonstationary processes,'' \emph{Signal
  Processing}, vol.~3, pp. 147--155, 1981.

\bibitem{Tong1973}
H.~Tong, ``Some comments on spectral representations of non-stationary
  stochastic processes,'' \emph{J. Appl. Prob.}, vol.~10, pp. 881--885, 1973.

\bibitem{Tong1974}
------, ``On time-dependent linear transformations of non-stationary stochastic
  processes,'' \emph{J. Appl. Prob.}, vol.~11, pp. 53--62, 1974.

\bibitem{Isserlis1918}
L.~Isserlis, ``On a formula for the product-moment coefficient of any order of
  a normal frequency distribution in any number of variables,''
  \emph{Biometrika}, vol.~12, pp. 134--139, 1918.

\bibitem{Hindberg2009}
H.~Hindberg and S.~C. Olhede, ``Estimation of ambiguity functions with limited
  spread,'' \emph{IEEE Trans. Signal Proc.}, vol.~58, pp. 2383--2388, 2010.

\bibitem{Johnson}
N.~Johnson and S.~Kotz, \emph{Continuous Univariate Distributions, Vol.
  2}.\hskip 1em plus 0.5em minus 0.4em\relax New York, USA: Wiley, 1970.

\bibitem{Johnstone2004}
I.~M. Johnstone and B.~W. Silverman, ``Needles and straw in haystacks:
  {E}mpirical {B}ayes estimates of possibly sparse sequences,'' \emph{The
  Annals of Statistics}, vol.~32, pp. 1594--1649, 2004.

\bibitem{Godsill}
P.~J. Wolfe, S.~J. Godsill, and W.~J. Ng, ``Bayesian variable selection and
  regularization for time-frequency surface estimation,'' \emph{J. Roy. Stat.
  Soc., b}, vol.~66, pp. 575--589, 2004.

\bibitem{Wang}
X.~Wang and A.~T.~A. Wood, ``{E}mpirical {B}ayes block shrinkage of wavelet
  coefficients via the noncentral {chi}2 distribution,'' \emph{Biometrika},
  vol.~93, pp. 705--722, 2006.

\bibitem{Haywood1997}
J.~Haywood and G.~T. Wilson, ``Fitting time series models by minimizing
  multistep-ahead errors: a frequency domain approach,'' \emph{J. Roy. Stat.
  Soc., b}, vol.~59, pp. 237--54, 1997.

\bibitem{Scharf2005}
L.~L. Scharf, P.~J. Schreier, and A.~Hanssen, ``The {H}ilbert space geometry of
  the {R}ihaczek distribution for stochastic analytic signals,'' \emph{IEEE
  Signal Proc. Lett.}, vol.~12, pp. 297--300, 2005.

\bibitem{Sayeed95}
A.~M. Sayeed and D.~L. Jones, ``Optimal kernels for nonstationary spectral
  estimation,'' \emph{IEEE Trans. Signal Proc.}, vol.~43, pp. 478--491, 1995.

\bibitem{Olhede2003c}
S.~C. Olhede and A.~T. Walden, ``Noise reduction in directional signals
  illustrated on quadrature doppler ultrasound,'' \emph{IEEE Trans. on
  Biomedical Engineering}, vol.~50, pp. 51--57, 2003.

\bibitem{Daubechies}
I.~Daubechies, ``Time-frequency localization operators - a geometric phase
  space approach,'' \emph{IEEE Trans. Info. Theory}, vol.~34, pp. 605--612,
  1988.

\bibitem{Richardson1}
P.~L. Richardson, A.~S. Bower, and W.~Zenk, ``A census of {M}eddies tracked by
  floats,'' \emph{Progress in Oceanography}, vol.~45, pp. 209--250, 2000.

\bibitem{Lilly10}
J.~M. Lilly and S.~C. Olhede, ``Bivariate instantaneous frequency and
  bandwidth,'' \emph{IEEE Trans. on Signal Processing}, vol.~58, pp. 591--603,
  2010.

\bibitem{Lilly2010b}
------, ``On the analytic wavelet transform,'' \emph{IEEE Trans. on Information
  Theory}, vol.~57, pp. 4135--4156, 2010.

\bibitem{Unser}
M.~Unser and A.~Aldroubi, ``A review of wavelets in biomedical applications,''
  \emph{Proc. IEEE}, vol.~84, pp. 626--638, 1996.

\bibitem{DeVille}
D.~V. de~Ville, T.~Blu, and M.~Unser, ``Surfing the brain -- an overview of
  wavelet-based techniques for f{MRI} data analysis,'' \emph{IEEE Engineering
  in medicine and biology magazine}, vol.~25, pp. 65--78, 2006.

\bibitem{Cranstoun}
S.~D. Cranstoun, H.~C. Ombao, R.~von Sachs, W.~S. Guo, and B.~Litt,
  ``Time-frequency spectral estimation of multichannel {EEG} using the
  auto-{SLEX} method,'' \emph{IEEE Trans. on Biomedical Engineering}, vol.~49,
  pp. 988--996, 2002.

\bibitem{Flandrin2010}
P.~Flandrin and P.~Borgnat, ``Time-frequency energy distributions meet
  compressed sensing,'' \emph{IEEE Trans. Signal Proc.}, vol.~58, pp.
  2974--2982, 2010.

\bibitem{gradshteyn}
I.~S. Gradshteyn, I.~M. Ryzhik, A.~Jeffrey, and D.~Zwillinger, \emph{The table
  of integrals, series and products, 6th edition}.\hskip 1em plus 0.5em minus
  0.4em\relax Academic Press, 2000.

\end{thebibliography}
\end{document}